\newtheorem{theorem}{Theorem}[section]
\newtheorem{lemma}[theorem]{Lemma}
\newtheorem{proposition}[theorem]{Proposition}
\newtheorem{definition}[theorem]{Definition}
\newtheorem{assumption}{Hypothesis}
\numberwithin{equation}{section}
\DeclareMathOperator{\slim}{s-lim}
\author[J. Faupin]{J{\'e}r{\'e}my Faupin}
\address[J. Faupin]{Institut Elie Cartan de Lorraine \\
Universit{\'e} de Lorraine,
57045 Metz Cedex 1, France}
\email{jeremy.faupin@univ-lorraine.fr}
\begin{document}
\bibliographystyle{abbrv} \title[Generic asymptotic completeness]{Generic nature of asymptotic completeness in dissipative scattering theory}

\begin{abstract}
We review recent results obtained in the scattering theory of dissipative quantum systems representing the long-time evolution of a system $S$ interacting with another system $S'$ and susceptible of being absorbed by $S'$. The effective dynamics of $S$ is generated by an operator of the form $H = H_0 + V - \mathrm{i} C^* C$ on the Hilbert space of the pure states of $S$, where $H_0$ is the self-adjoint generator of the free dynamics of $S$, $V$ is symmetric and $C$ is bounded. The main example is a neutron interacting with a nucleus in the nuclear optical model. We recall the basic objects of the scattering theory for the pair $(H,H_0)$, as well as the results, proven in \cite{FaFr18_01,FaNi18_01}, on the spectral singularities of $H$ and the asymptotic completeness of the wave operators. Next, for the nuclear optical model, we show that asymptotic completeness generically holds.
\end{abstract}

\maketitle

\section{Introduction}

When a physical quantum system interacts with another one, part of its energy may be irreversibly transferred to the other system. This phenomenon of irreversible loss of energy is usually called \emph{quantum dissipation}. In particular, fundamentally, quantum systems cannot be completely isolated from their environment and, therefore, any quantum system experiences quantum dissipation to some extent, due to interactions with the environment.

This paper is concerned with the mathematical study of effective or empirical models of quantum dissipation. We consider a quantum system $S$ interacting with another quantum system $S'$. Our main concern is the understanding of the phenomenon of “capture”: We aim at studying models allowing for the description of both elastic scattering and absorption of $S$ by $S'$. Such models apply to various physical situations, especially to neutrons interacting with nuclei in the \emph{nuclear optical model} (see Section \ref{sec:nuclear}).

In \cite{FaFr18_01,FaNi18_01}, the scattering theory for a class of abstract \emph{pseudo-Hamiltonians} on a Hilbert space $\mathcal{H}$ is studied. In the abstract setting considered in \cite{FaFr18_01,FaNi18_01}, the pseudo-Hamiltonian corresponding to the generator of the effective dynamics of the system $S$ is given by
\begin{equation*}
H = H_0 + V - \mathrm{i} C^*C,
\end{equation*}
where $H_0$ is a self-adjoint operator on $\mathcal{H}$ with purely absolutely continuous spectrum, $V$ is symmetric and relatively compact with respect to $H_0$, and $C$ is bounded and relatively compact with respect to $H_0$. The operator $H_0$ is  the generator of the unitary free dynamics of $S$ while $V-\mathrm{i}C^*C$ represents the effective interaction between $S$ and $S'$. The main purpose in \cite{FaFr18_01,FaNi18_01} is then to study the scattering theory for the pair $(H,H_0)$. Suitable hypotheses on $H_0$, $V$ and $C$ are formulated in such a way that they can be verified in the particular case where $H$ is given by a dissipative Schrödinger operator. See the next sections for more details.

Prior to \cite{FaFr18_01,FaNi18_01}, mathematical scattering theory for dissipative operators on Hilbert spaces has been considered by many authors (see, e.g., \cite{Ka66_01,Ma75_01,Da78_01,Da80_01,FaFaFrSc17_01,WaZu14_01} and references therein). In these references, in particular, the existence of the wave operators associated to $H$ and $H_0$ is established under various conditions. In \cite{FaFr18_01,FaNi18_01}, the \emph{asymptotic completeness} of the wave operators is studied. It is shown that, under suitable assumptions, asymptotic completeness is equivalent to the absence of \emph{spectral singularities} embedded into the essential spectrum of $H$.

Our purpose here is twofold. First, we review the results established in \cite{FaFr18_01,FaNi18_01}. Next, for the nuclear optical model, we prove that generically (in a Baire category sense), the pseudo-Hamiltonian $H$ has no spectral singularities embedded in its essential spectrum. This implies that the wave operators are generically asymptotically complete.

The paper is organized as follows. In Section \ref{section:2}, we introduce the main objects involved in dissipative scattering theory and we recall their basic properties. Section \ref{section:3} concerns the notions of spectral singularities and asymptotic completeness, as well as the results proven in \cite{FaFr18_01,FaNi18_01}. Finally, in Section \ref{section:4}, we state and prove our new result on the generic nature of asymptotic completeness.

%%%%%%%%%%%%%%%%%%%%%%

\section{Mathematical setting}\label{section:2}
As mentioned in the introduction, we consider a quantum system $S$ interacting with another quantum system $S'$ and susceptible of being absorbed by $S'$. The pure states of $S$ correspond to the normalized vectors in a complex Hilbert space $\mathcal{H}$. The scalar product in $\mathcal{H}$ is denoted by $\langle \cdot , \cdot \rangle$. The effective dynamics of $S$ is supposed to be generated by a pseudo-Hamiltonian acting on $\mathcal{H}$, of the form
\begin{equation*}
H = H_0 + V - \mathrm{i} C^*C = H_V - \mathrm{i} C^*C,
\end{equation*}
where $H_0$ is a self-adjoint operator on $\mathcal{H}$ corresponding to the generator of the free dynamics of $S$ and $V - \mathrm{i} C^*C$ is an effective interaction term due to the presence of $S'$.

In this section, we state the abstract assumptions on the operators $H_0$, $V$ and $C$ which were introduced in \cite{FaFr18_01,FaNi18_01} in order to establish results on the spectral and scattering theories for the pair $(H,H_0)$. In the next section, we will recall that those abstract assumptions are fulfilled in our main example, namely the nuclear optical model. In this model, $H$ is a dissipative Schrödinger operator, with $H_0 = - \Delta$ on $L^2( \mathbb{R}^3 )$ and $V$, $C$ multiplication operators by bounded, real-valued potentials decaying sufficiently fast at $\infty$ (see Section \ref{sec:nuclear} for more details).

To shorten notations below, the resolvents of the operators $H_0$, $H_V$ and $H$ are denoted by
\begin{equation*}
R_0(z) = ( H_0 - z )^{-1} , \quad R_V(z) = ( H_V - z )^{-1}, \quad R(z) = ( H - z )^{-1} ,
\end{equation*}
for any $z$ in the resolvent set of the corresponding operator.

\subsection{Basic assumptions}

The set of bounded operators on $\mathcal{H}$ is denoted by $\mathcal{L}( \mathcal{H} )$. We recall that an operator $B$ is called relatively compact with respect to a self-adjoint operator $A$ if $\mathcal{D}( A ) \subset \mathcal{D}( B )$ and $B ( A + \mathrm{i} )^{-1}$ is compact. The following basic assumptions are made:
\begin{assumption}[Basic assumptions]\label{hyp:H0}$ $
\begin{enumerate}[label=\rm{(\roman*)},leftmargin=0.8cm,itemsep=1pt]
\item $H_0 \ge 0$ (or, more generally, $H_0$ is self-adjoint and semi-bounded from below),
\item $V$ is symmetric and relatively compact with respect to $H_0$,
\item $C \in \mathcal{L}( \mathcal{H} )$ and $C$ is relatively compact with respect to $H_0$.
\end{enumerate}
\end{assumption}
We recall that an operator $A$ on $\mathcal{H}$ is called dissipative if, for all $u \in \mathcal{D}( A )$, $\mathrm{Im}( \langle u , A u \rangle ) \le 0$. Moreover, $A$ is called maximal dissipative if $A$ is dissipative and has no proper dissipative extension. Hypothesis \ref{hyp:H0} has the following simple consequences.
\begin{proposition}
Suppose that Hypothesis \ref{hyp:H0} holds. Then
\begin{enumerate}[label=\rm{(\arabic*)},leftmargin=0.8cm,itemsep=1pt]
\item $H_V = H_0 + V$ is a self-adjoint operator on $\mathcal{H}$ with domain
$
\mathcal{D}( H_V ) = \mathcal{D}( H_0 ) .
$
\item $H = H_V - \mathrm{i} C^*C$ is a maximal dissipative operator on $\mathcal{H}$ with domain
$
\mathcal{D}( H ) = \mathcal{D}( H_0 ) .
$
\item The operator $-\mathrm{i} H$ generates a strongly continuous group $\{ e^{-\mathrm{i}tH} \}_{ t \in \mathbb{R} }$ such that
\begin{equation*}
\big \|e^{-\mathrm{i}tH}\big\| \le 1 \, \text{ if } \,  t \ge 0 , \qquad \big\|e^{-\mathrm{i}tH}\big\| \le e^{\|C^*C\||t|} \, \text{ if } \, t \le 0 .
\end{equation*}
In particular, $-\mathrm{i} H$ generates the strongly continuous semigroup of contractions $\{ e^{-\mathrm{i}tH} \}_{ t \ge 0 }$.
\item The adjoint of $H$ is
\begin{equation*}
H^* = H_0 + V + \mathrm{i} C^* C ,
\end{equation*}
with domain $\mathcal{D}( H^* ) = \mathcal{D}( H_0 )$. Moreover, $\mathrm{i} H^*$ generates of a strongly continuous group $\{ e^{\mathrm{i}tH^*} \}_{t \in \mathbb{R}}$ such that $\{ e^{\mathrm{i}tH^*} \}_{t \ge 0}$ is a semigroup of contractions.
\end{enumerate}
\end{proposition}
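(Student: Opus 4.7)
The plan is to verify the four claims in order, leveraging standard self-adjointness theory together with bounded perturbation results for semigroups.

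For (1), the crucial observation is that relative compactness forces $H_0$-bound zero: since $V(H_0+\mathrm{i})^{-1}$ is compact, one can approximate it by operators of arbitrarily small norm on the high-energy spectral subspaces of $H_0$, which yields, for every $\varepsilon>0$, constants $b_\varepsilon$ such that $\|Vu\|\le \varepsilon\|H_0 u\|+b_\varepsilon\|u\|$ for $u\in\mathcal{D}(H_0)$. Since $V$ is symmetric, the Kato--Rellich theorem then gives self-adjointness of $H_V=H_0+V$ on $\mathcal{D}(H_0)$.

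For (2) and (3), I would first establish dissipativity of $H$ by a direct computation: for $u\in\mathcal{D}(H_0)$,
\begin{equation*}
\mathrm{Im}\langle u,Hu\rangle = \mathrm{Im}\langle u,H_Vu\rangle - \langle u,C^*Cu\rangle = -\|Cu\|^2\le 0.
\end{equation*}
To identify the associated semigroup and promote dissipativity to maximal dissipativity, I would invoke the bounded perturbation theorem for $C_0$-semigroups: by Stone's theorem $-\mathrm{i}H_V$ generates a unitary group on $\mathcal{H}$, and $-C^*C$ is bounded, hence $-\mathrm{i}H=-\mathrm{i}H_V-C^*C$ generates a strongly continuous group $\{e^{-\mathrm{i}tH}\}_{t\in\mathbb{R}}$ on $\mathcal{D}(H_0)$. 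The operator-norm bounds then follow from differentiating $\|e^{-\mathrm{i}tH}u\|^2$ on vectors in $\mathcal{D}(H_0)$: for $t\ge 0$,
\begin{equation*}
\frac{d}{dt}\|e^{-\mathrm{i}tH}u\|^2 = -2\|Ce^{-\mathrm{i}tH}u\|^2\le 0,
\end{equation*}
giving the contraction bound, while for $s=-t\ge 0$,
\begin{equation*}
\frac{d}{ds}\|e^{\mathrm{i}sH}u\|^2 = 2\|Ce^{\mathrm{i}sH}u\|^2\le 2\|C^*C\|\,\|e^{\mathrm{i}sH}u\|^2,
\end{equation*}
and Gronwall's inequality produces the exponential bound $\|e^{-\mathrm{i}tH}\|\le e^{\|C^*C\||t|}$. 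Maximal dissipativity then follows from the Lumer--Phillips theorem, since $H$ is the generator of a $C_0$-contraction semigroup.

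For (4), the formula $H^*=H_V+\mathrm{i}C^*C$ with $\mathcal{D}(H^*)=\mathcal{D}(H_0)$ is obtained from $H_V^*=H_V$ combined with $(\mathrm{i}C^*C)^*=-\mathrm{i}C^*C$ and the fact that adding a bounded operator does not alter the domain of the adjoint. The group/semigroup statement for $\mathrm{i}H^*$ is symmetric to the one for $-\mathrm{i}H$: the computation $\mathrm{Im}\langle u,H^*u\rangle=\|Cu\|^2\ge 0$ shows $\mathrm{i}H^*$ is dissipative, and the same bounded-perturbation argument applied to $\mathrm{i}H^*=\mathrm{i}H_V-C^*C$ (with $\mathrm{i}H_V$ self-adjoint multiplied by $\mathrm{i}$, i.e.\ generating a unitary group) produces the group and the contraction bound for $t\ge 0$.

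All steps are standard; the only minor technical points are the justification of relative bound zero from relative compactness in (1) and the careful bookkeeping of signs needed in (3) and (4) when transferring dissipativity between $H$, $H^*$, and their time-reversed semigroups. I do not expect a genuine obstacle.
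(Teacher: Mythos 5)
Your proposal is correct, and all four items are handled by the same standard toolbox the paper uses (Kato--Rellich, bounded perturbation of generators, Phillips/Lumer--Phillips); the only genuine difference is the logical order in items (2)--(3). The paper establishes maximal dissipativity \emph{first}, by verifying the Phillips range condition directly: it shows $H-\mathrm{i}\lambda:\mathcal{D}(H_0)\to\mathcal{H}$ is invertible for $\lambda>\|C^*C\|$ via a Neumann series built on the self-adjoint resolvent bound $\|(H_V-\mathrm{i}\lambda)^{-1}\|\le\lambda^{-1}$, and only then deduces that the semigroup is contractive as a consequence of maximal dissipativity. You run the implication in the opposite direction: you construct the group by bounded perturbation, prove the norm bounds by differentiating $\|e^{-\mathrm{i}tH}u\|^2$ along orbits in $\mathcal{D}(H_0)$ (plus Gronwall for $t\le 0$), and then read off maximal dissipativity from the converse half of Lumer--Phillips. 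Both routes are sound; the paper's argument isolates the range condition as an elementary resolvent estimate requiring no semigroup theory, while yours has the advantage of producing the two operator-norm bounds in (3) by a single unified computation (the paper gets the $t\ge 0$ contraction and the $e^{\|C^*C\||t|}$ bound from two separate citations). Your treatment of (1) and (4) matches the paper's, with the welcome extra detail that relative compactness yields relative bound zero via truncation to high-energy spectral subspaces of $H_0$.
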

\begin{proof}
For the convenience of the reader, we sketch some of the arguments which were eluded in \cite{FaFr18_01,FaNi18_01}.

(1) is a simple consequence of the Kato-Rellich Theorem together with the fact that $V$ is symmetric and relatively compact with respect to $H_0$, and hence infinitesimally small with respect to $H_0$ (see, e.g., \cite[Corollary 2, p. 113]{ReSi80_01}).

To prove (2), one observes that $H$ is dissipative since, for all $u \in \mathcal{D}( H ) = \mathcal{D}( H_0 )$,
\begin{equation*}
\mathrm{Im}( \langle u , H u \rangle ) = - \| C u \|^2 \le 0.
\end{equation*}
To verify that $H$ is maximal dissipative, by a theorem of Phillips \cite{Ph59_01}, it then suffices to show that $\mathrm{Ran}( H - \mathrm{i} \lambda ) = \mathcal{H}$ for some $\lambda > 0$. This easily follows from the fact that $H - \mathrm{i} \lambda : \mathcal{D}( H_0 ) \to \mathcal{H}$ is invertible for $\lambda > \| C^* C \|$ (here one uses that $H_V$ is self-adjoint, and hence that $\| ( H_V - \mathrm{i} \lambda )^{-1} \| \le \lambda^{-1}$).

(3) Since $H_V$ is self-adjoint, $-\mathrm{i}H_V$ generates a strongly continuous unitary group $\{ e^{-\mathrm{i}tH_V} \}_{ t \in \mathbb{R} }$. Hence, since $C^*C$ is bounded, a perturbation argument (see, e.g., \cite[Theorem 11.4.1]{Da07_01}) shows that $-\mathrm{i}H$ generates a strongly continuous group $\{ e^{-\mathrm{i}tH} \}_{ t \in \mathbb{R} }$ such that $\|e^{-\mathrm{i}tH} \| \le e^{\|C^*C\||t|}$ for all $t \in \mathbb{R}$. The fact that $e^{-\mathrm{i}tH}$ is a contraction for $t \ge 0$ is a consequence of the fact that $H$ is maximal dissipative (see e.g. \cite[Theorem 10.4.2]{Da07_01}).

(4) Standard arguments show that the adjoint of $H$ is given by $H^* = H_0 + V + \mathrm{i} C^*C$ with domain $\mathcal{D}( H^* ) = \mathcal{D}( H_0 )$. One then verifies, in the same way as for $-\mathrm{i} H$, that $\mathrm{i} H^*$ generates of a strongly continuous group $\{ e^{\mathrm{i}tH^*} \}_{t \in \mathbb{R}}$ such that $\{ e^{\mathrm{i}tH^*} \}_{t \ge 0}$ is a semigroup of contractions
\end{proof}

The contraction semigroup $\{ e^{-\mathrm{i}tH} \}_{ t \ge 0 }$ has the interpretation of a dynamics in the following sense. If $u_0 \in \mathcal{H}$, $\| u_0 \|=1$, represents the initial state of the quantum system $S$ at time $t=0$, then the state of $S$ at a positive time $t>0$ is given by $\| u_t \|^{-1} u_t$, with $u_t := e^{-\mathrm{i}tH} u_0$. Here it should be noted that $\| u_t \| \le 1$ for all $t \ge 0$ since $e^{-\mathrm{i}tH}$ is a contraction, and that $u_t \neq 0$ since $ e^{-\mathrm{i}tH} $ is invertible.

\subsection{Spectrum and spectral subspaces of $\mathcal{H}$}\label{subsec:spectrum}
Since $H$ is maximal dissipative -- or equivalently $-\mathrm{i}H$ generates a strongly continuous semigroup of contractions -- an application of the Hille-Yosida Theorem shows that the spectrum of $H$ satisfies
\begin{equation*}
\sigma ( H ) \subset \{ z \in \mathbb{C} , \, \mathrm{Im}( z ) \le 0 \}.
\end{equation*}
In this section, we review the definitions of some spectral subspaces of $H$.

\subsubsection{The space of bound states}
If $\mathcal{D}$ is a subset of $\mathcal{H}$, we denote by $\overline{\mathcal{D}}$ its closure.
\begin{definition}[Space of bound states $\mathcal{H}_{\mathrm{b}}(H)$]
Suppose that Hypothesis \ref{hyp:H0} holds. The space of bound states of $H$ is defined as the closure of the vector space spanned by all eigenvectors of $H$ corresponding to real eigenvalues, \textit{i.e.}
\begin{equation*}
\mathcal{H}_{\mathrm{b}}( H ) := \overline{\mathrm{Span} \{ u \in \mathcal{D}( H ) , \, \exists \lambda \in \mathbb{R} , \, Hu = \lambda u \} }.
\end{equation*}
Similarly, 
\begin{equation*}
\mathcal{H}_{\mathrm{b}}( H^* ) := \overline{\mathrm{Span} \{ u \in \mathcal{D}( H^* ) , \, \exists \lambda \in \mathbb{R} , \, H^*u = \lambda u \} }.
\end{equation*}
\end{definition}
In the particular case were $H$ is self-adjoint, \textit{i.e.} $C=0$, we see that the space of bound states identifies with the pure point spectral subspace of $H$ usually denoted by $\mathcal{H}_{ \mathrm{pp} }( H )$. In general, $\mathcal{H}_{\mathrm{b}}( H )$ and the pure point spectral subspace of $H_V$ are related as follows.
\begin{proposition}\label{prop:Hb}
Suppose that Hypothesis \ref{hyp:H0} holds. Then
\begin{equation*}
\mathcal{H}_{\mathrm{b}}( H ) = \mathcal{H}_{\mathrm{b}}( H^* ) \subset \mathcal{H}_{\mathrm{pp}}( H_V ) \cap \mathrm{Ker}( C ).
\end{equation*}
\end{proposition}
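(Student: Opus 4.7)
The plan is to exploit the dissipative/accretive structure of $H$ and $H^*$ by taking the imaginary part of $\langle u, Hu\rangle$ on a real eigenvector. Concretely, suppose $u\in\mathcal{D}(H)=\mathcal{D}(H_0)$ satisfies $Hu=\lambda u$ with $\lambda\in\mathbb{R}$. Since $C$ is bounded (so $Cu$ makes sense) and $\mathrm{Im}(\langle u, Hu\rangle)=-\|Cu\|^2$, while $\mathrm{Im}(\lambda\|u\|^2)=0$, we deduce $Cu=0$. Going back to the definition of $H$, this gives $H_V u = Hu + \mathrm{i}C^*Cu=\lambda u$, so $u$ is an eigenvector of $H_V$ with real eigenvalue $\lambda$, and $u\in\mathrm{Ker}(C)$. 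A symmetric computation for $H^*=H_V+\mathrm{i}C^*C$ yields $\mathrm{Im}(\langle u, H^*u\rangle)=+\|Cu\|^2$, so if $H^*u=\lambda u$ with $\lambda\in\mathbb{R}$ then again $Cu=0$ and $H_V u=\lambda u$.

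From this, I would describe the set of real-eigenvalue eigenvectors of $H$ (resp.\ $H^*$) as
\[
\{u\in\mathcal{D}(H_0) : Cu=0 \text{ and } \exists\lambda\in\mathbb{R},\; H_V u=\lambda u\},
\]
which is manifestly the same set in both cases. Hence the vector spaces they span coincide, and taking closures yields $\mathcal{H}_{\mathrm{b}}(H)=\mathcal{H}_{\mathrm{b}}(H^*)$. Moreover, each such $u$ lies in $\mathcal{H}_{\mathrm{pp}}(H_V)\cap\mathrm{Ker}(C)$, and since this latter subspace is closed (as an intersection of two closed subspaces), the inclusion passes to the closure of the span, giving $\mathcal{H}_{\mathrm{b}}(H)\subset\mathcal{H}_{\mathrm{pp}}(H_V)\cap\mathrm{Ker}(C)$.

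There is no real obstacle here: all the required domain considerations are handled by Hypothesis \ref{hyp:H0}, since $C\in\mathcal{L}(\mathcal{H})$ is defined everywhere, $\mathcal{D}(H)=\mathcal{D}(H^*)=\mathcal{D}(H_V)=\mathcal{D}(H_0)$ by the preceding proposition, and the closedness of $\mathrm{Ker}(C)$ follows from boundedness of $C$. The only point that deserves a brief mention is that the span of real-eigenvalue eigenvectors, not just each individual eigenvector, lies in $\mathcal{H}_{\mathrm{pp}}(H_V)\cap\mathrm{Ker}(C)$; this is immediate because both $\mathcal{H}_{\mathrm{pp}}(H_V)$ and $\mathrm{Ker}(C)$ are linear subspaces, and their intersection is closed.
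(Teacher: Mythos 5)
Your proof is correct: the key computation $\mathrm{Im}\langle u,Hu\rangle=-\|Cu\|^{2}$ on a real-eigenvalue eigenvector forces $Cu=0$ and hence $H_Vu=\lambda u$, and the converse identification of the eigenvector sets of $H$ and $H^{*}$ with $\{u:\ Cu=0,\ H_Vu=\lambda u\}$ together with closedness of $\mathcal{H}_{\mathrm{pp}}(H_V)\cap\mathrm{Ker}(C)$ gives both the equality and the inclusion. The paper itself only cites \cite[Lemma 3.1]{FaFr18_01} for this statement, and your argument is the standard one underlying that lemma, so there is nothing further to compare.
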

\begin{proof}
See \cite[Lemma 3.1]{FaFr18_01}.
\end{proof}

\subsubsection{Discrete and essential spectra}
The discrete and essential spectra of $H$ may be defined as follows. We recall that an operator $A$ on $\mathcal{H}$ with domain $\mathcal{D}(A)$ is called Fredholm if $\mathrm{Ran}( A - \lambda \mathrm{Id} )$ is closed, $\mathrm{dim} \, \mathrm{Ker}( A - \lambda \mathrm{Id} ) < \infty$ and $\mathrm{codim} \, \mathrm{Ran}( A - \lambda \mathrm{Id} ) < \infty$.
\begin{definition}[Discrete spectrum]
Suppose that Hypothesis \ref{hyp:H0} holds. The discrete spectrum of $H$, denoted by $\sigma_{\mathrm{disc}}(H)$, is the set of isolated eigenvalues of $H$ with finite algebraic multiplicity. In other words, $\lambda \in \sigma_{\mathrm{disc}}(H)$ if $\lambda$ is an isolated point in $\sigma( H )$, there exists $u \in \mathcal{D}( H ) \setminus \{ 0 \}$ such that $H u = \lambda u$ and $\mathrm{dim} \, \mathrm{Ker}( H - \lambda \mathrm{Id} ) < \infty$.
\end{definition}
\begin{definition}[Essential spectrum]
Suppose that Hypothesis \ref{hyp:H0} holds. The essential spectrum of $H$, denoted by $\sigma_{\mathrm{ess}}(H)$, is the set of $\lambda \in \mathbb{C}$ such that $H - \lambda \mathrm{Id}$ is not Fredholm.
\end{definition}
We mention that other possible definitions of the essential spectrum for non self-adjoint operators may be found in the literature (see, e.g., \cite[Section IX]{EdEv87_01}) but these different definitions coincide in our context \cite[Theorem IX.1.6]{EdEv87_01}. The discrete and essential spectra of $H$ are related as follows.
\begin{proposition}
Suppose that Hypothesis \ref{hyp:H0} holds. Then
\begin{equation*}
\sigma_{\mathrm{ess}}(H_0) = \sigma_{\mathrm{ess}}(H_V) = \sigma_{\mathrm{ess}}( H ) = \sigma_{\mathrm{ess}}( H^* ) = \sigma(H) \setminus \sigma_{\mathrm{disc}}(H) = \sigma(H^*) \setminus \sigma_{\mathrm{disc}}(H^*).
\end{equation*}
\end{proposition}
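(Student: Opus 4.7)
The plan is to combine the Weyl-type stability of the essential spectrum under relatively compact perturbations with an analytic Fredholm argument powered by the maximal dissipativity of $H$.

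\medskip

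First, for the chain $\sigma_{\mathrm{ess}}(H_0) = \sigma_{\mathrm{ess}}(H_V) = \sigma_{\mathrm{ess}}(H) = \sigma_{\mathrm{ess}}(H^*)$, I would invoke the classical stability theorem (see, e.g., Edmunds--Evans, Theorem IX.2.1) asserting that the Fredholm essential spectrum is invariant under $H_0$-compact perturbations. Hypothesis \ref{hyp:H0}(ii) gives the first equality directly. For the remaining ones I would note that $C^* C (H_0 + \mathrm{i})^{-1} = C^* \cdot C (H_0 + \mathrm{i})^{-1}$ is a bounded operator times a compact operator (by Hypothesis \ref{hyp:H0}(iii)), hence compact; so $\mathrm{i} C^* C$ and $-\mathrm{i} C^* C$ are both relatively compact with respect to $H_V$ (using equivalence of the graph norms of $H_0$ and $H_V$), and Weyl's theorem applied to $H = H_V - \mathrm{i} C^* C$ and $H^* = H_V + \mathrm{i} C^* C$ gives the remaining equalities.

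\medskip

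Second, to derive $\sigma_{\mathrm{ess}}(H) = \sigma(H) \setminus \sigma_{\mathrm{disc}}(H)$, I would exploit dissipativity as follows. Since $H_0$ is self-adjoint, $\sigma_{\mathrm{ess}}(H_0) \subset \mathbb{R}$, so by the above $\sigma_{\mathrm{ess}}(H) \subset \mathbb{R}$, and in particular every connected component of the open set $\mathbb{C} \setminus \sigma_{\mathrm{ess}}(H)$ meets the open upper half-plane. By Hille--Yosida, the upper half-plane lies in the resolvent set of $H$, where $H - \lambda$ is boundedly invertible and therefore Fredholm of index $0$. Since the Fredholm index is locally constant on the set where it is defined, $H - \lambda$ is Fredholm of index $0$ for every $\lambda \in \mathbb{C} \setminus \sigma_{\mathrm{ess}}(H)$. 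Applying the analytic Fredholm theorem to the holomorphic operator-valued map $\lambda \mapsto H - \lambda$ (viewed as a map into Fredholm operators of index $0$ on $\mathcal{D}(H_0)$) on each connected component, I conclude that $\sigma(H) \cap (\mathbb{C} \setminus \sigma_{\mathrm{ess}}(H))$ is a discrete set, every point of which is an eigenvalue of finite algebraic multiplicity, i.e.\ coincides with $\sigma_{\mathrm{disc}}(H)$. The same reasoning applies verbatim to $H^*$, noting that $\sigma(H^*)$ sits in the closed upper half-plane so that the roles of the two half-planes are simply swapped.

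\medskip

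The main obstacle is the Fredholm-index-zero step in the second paragraph: without dissipativity one only knows that $H - \lambda$ is Fredholm on $\mathbb{C} \setminus \sigma_{\mathrm{ess}}(H)$, not that the index vanishes, and the analytic Fredholm theorem would then fail to produce the desired discreteness. The dissipativity hypothesis, through Hille--Yosida, supplies the required reference point of index $0$ in each component of the Fredholm set, and this is precisely where the structure of $H$ (as opposed to a general closed operator with $H_0$-compact perturbation) enters the argument.
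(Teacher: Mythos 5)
Your argument is correct, and it does more than the paper, which handles this proposition purely by citation: the first string of equalities is obtained exactly as in the paper (Weyl stability of the Fredholm essential spectrum under $H_0$-compact perturbations, applied to $V$ and then to $\mp\mathrm{i}C^*C$ viewed as $H_V$-compact), while for the identity $\sigma_{\mathrm{ess}}(H)=\sigma(H)\setminus\sigma_{\mathrm{disc}}(H)$ the paper simply invokes \cite[Theorem IX.1.6]{EdEv87_01}, whereas you reconstruct the underlying argument. Your reconstruction is the right one, and you correctly identify where dissipativity enters: the upper half-plane supplies a point of bounded invertibility, hence of Fredholm index $0$, which propagates by local constancy of the index and feeds the holomorphic Fredholm alternative. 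Two small points deserve a sentence each if this were written out in full. First, the inference ``$\sigma_{\mathrm{ess}}(H)\subset\mathbb{R}$, hence every component of the complement meets the upper half-plane'' is not valid for an arbitrary closed subset of $\mathbb{R}$ (it fails for $F=\mathbb{R}$ itself, whose complement has the lower half-plane as a component); what saves you is Hypothesis \ref{hyp:H0}(i), which forces $\sigma_{\mathrm{ess}}(H)=\sigma_{\mathrm{ess}}(H_0)$ to be semibounded below, so that $\mathbb{C}\setminus\sigma_{\mathrm{ess}}(H)$ is in fact connected and contains the resolvent set. Second, you prove the inclusion $\sigma(H)\setminus\sigma_{\mathrm{ess}}(H)\subseteq\sigma_{\mathrm{disc}}(H)$ but not the converse $\sigma_{\mathrm{disc}}(H)\cap\sigma_{\mathrm{ess}}(H)=\emptyset$, which is also needed for the stated set equality; it follows from the standard fact that at an isolated eigenvalue with finite-rank Riesz projection the operator $H-\lambda$ splits as a finite-dimensional nilpotent plus an invertible part, hence is Fredholm. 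Neither point is a genuine obstruction, and the approach as a whole is sound and self-contained where the paper is not.
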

\begin{proof}
The first two equalities are consequences of the facts that $V$ and $C^*C$ are relatively compact perturbations of $H_0$ (see e.g. \cite[Theorem 11.2.6]{Da07_01}). The last equality is proven e.g. in \cite[Theorem IX.1.6]{EdEv87_01}.
\end{proof}

Summing up, the spectrum of $H$ is of the form pictured in Figure \ref{fig1}.
\begin{figure}[H] 
\begin{center}
\begin{tikzpicture}[scale=0.6, every node/.style={scale=0.9}]

   \draw[->](-4,2) -- (12,2);      
      \draw[->](4,-2) -- (4,4);      
  \draw[-, very thick] (4,2)--(11.95,2);

    \draw[-] (-1.9,1.9)--(-2.1,2.1);
    \draw[-] (-1.9,2.1)--(-2.1,1.9);

    \draw[-] (0.9,1.9)--(1.1,2.1);
    \draw[-] (0.9,2.1)--(1.1,1.9);

    \draw[-] (3.1,1.9)--(3.3,2.1);
    \draw[-] (3.1,2.1)--(3.3,1.9);

    \draw[-] (6.45,0.2)--(6.65,0.4);
    \draw[-] (6.45,0.4)--(6.65,0.2);      

    \draw[-] (6.65,0.7)--(6.85,0.9);
    \draw[-] (6.65,0.9)--(6.85,0.7);   
   
    \draw[-] (6.8,1.1)--(7,1.3);
    \draw[-] (6.8,1.3)--(7,1.1);   
       
    \draw[-] (6.9,1.4)--(7.1,1.6);
    \draw[-] (6.9,1.6)--(7.1,1.4);   
   
    \draw[-] (6.95,1.6)--(7.15,1.8);
    \draw[-] (6.95,1.8)--(7.15,1.6);   
   
    \draw[-] (6.97,1.8)--(7.17,2);
    \draw[-] (6.97,2)--(7.17,1.8);

    \draw[-] (10.6,1.6)--(10.8,1.8);
    \draw[-] (10.6,1.8)--(10.8,1.6);

    \draw[-] (9.5,-0.9)--(9.7,-0.7);
    \draw[-] (9.5,-0.7)--(9.7,-0.9);

    \draw[-] (8.5,0)--(8.7,0.2);
    \draw[-] (8.5,0.2)--(8.7,0);

    \draw[-] (4.1,0.9)--(4.3,0.7);
    \draw[-] (4.1,0.7)--(4.3,0.9);

    \draw[-] (-0.5,1.2)--(-0.7,1.4);
    \draw[-] (-0.5,1.4)--(-0.7,1.2);

    \draw[-] (1.5,-0.2)--(1.7,-0.4);
    \draw[-] (1.5,-0.4)--(1.7,-0.2);

        \end{tikzpicture}
\caption{ \footnotesize  \textbf{Spectrum of $H$.} The spectrum of $H$ is contained in the lower half-plane. The essential spectrum of $H$ coincides with that of $H_0$ and is contained in $[ 0 , \infty )$. The discrete spectrum of $H$ consists of isolated eigenvalues of finite algebraic multiplicities which may accumulate at any point of the essential spectrum. }\label{fig1}
\end{center}
\end{figure}
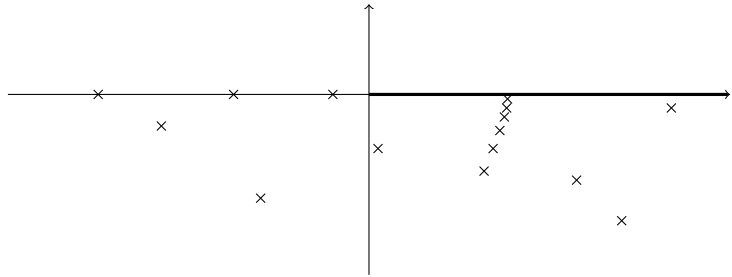

For $\lambda \in \sigma_{\mathrm{disc}}(H)$, the Riesz projection corresponding to $\lambda$, denoted by $\pi_\lambda$, is defined by
\begin{equation*}
\pi_\lambda := \frac{1}{2\mathrm{i}\pi} \int_\gamma ( z \mathrm{Id} - H )^{-1} \mathrm{d} z,
\end{equation*}
where $\gamma$ is a circle centered at $\lambda$, oriented counterclockwise and such that $\lambda$ is the only point of $\sigma(H)$ contained in the interior of $\gamma$. We recall that a vector $u \in \mathcal{H} $ is called a generalized eigenvector corresponding to $\lambda$ if there exists a positive integer $k$ such that $u \in \mathcal{D}( H^k )$ and $(H-\lambda)^k u = 0$. As is well-known, for $\lambda \in \sigma_{\mathrm{disc}}( H )$, the range of the Riesz projection $\pi_\lambda$ coincides with the vector space spanned by all generalized eigenvectors corresponding to $\lambda$.
\begin{proposition}\label{prop:4}
Suppose that Hypothesis \ref{hyp:H0} holds and let $\lambda \in \sigma_{\mathrm{disc}}(H)$. Then $\pi_\lambda$ is a projection such that $\mathrm{dim} \, \mathrm{Ran}( \pi_\lambda ) < \infty$ and
\begin{equation*}
\mathrm{Ran}( \pi_\lambda ) = \big \{ u \in \mathcal{D}( H^k ) , \, ( H - \lambda )^k u = 0 , \, \text{ for some } k \in \mathbb{N}, \, 1 \le k \le \mathrm{dim} \, \mathrm{Ran}( \pi_\lambda ) \big \}.
\end{equation*}
\end{proposition}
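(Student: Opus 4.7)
The plan is to adapt the classical Riesz projection theory for closed operators to the present setting. First, to show that $\pi_\lambda$ is a bounded projection, I would compute $\pi_\lambda^2$ by parametrizing with two concentric circles $\gamma, \gamma'$ enclosing $\lambda$ (with $\gamma'$ strictly inside $\gamma$), apply the first resolvent identity to split the product of two resolvents, and evaluate each term by Cauchy's theorem; only one of the two resulting contributions survives and gives $\pi_\lambda$ back. The same contour manipulation shows that $\pi_\lambda$ commutes with every resolvent of $H$, hence leaves $\mathcal{D}(H)$ invariant and commutes with $H$.

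Writing $\mathcal{H} = \mathrm{Ran}(\pi_\lambda) \oplus \mathrm{Ker}(\pi_\lambda)$, both summands are invariant under $H$, and a standard deformation-of-contour argument identifies $\sigma(H|_{\mathrm{Ran}(\pi_\lambda)}) = \{\lambda\}$ and $\sigma(H|_{\mathrm{Ker}(\pi_\lambda)}) = \sigma(H) \setminus \{\lambda\}$; in particular $H - \lambda \mathrm{Id}$ is bijective on $\mathrm{Ker}(\pi_\lambda)$.

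The key analytic step is to establish $\dim \mathrm{Ran}(\pi_\lambda) < \infty$. Since $\lambda \in \sigma_{\mathrm{disc}}(H)$, the preceding proposition gives $\lambda \notin \sigma_{\mathrm{ess}}(H)$, so $H - \lambda \mathrm{Id}$ is Fredholm. Combined with $\lambda$ being isolated in $\sigma(H)$, this forces $\lambda$ to be a pole of finite order $p$ of the resolvent, with finite-dimensional Laurent residues (this is the classical ascent/descent theory for Fredholm operators at an isolated spectral point). Writing
\begin{equation*}
(z-H)^{-1} = \sum_{n \ge 0}(z - \lambda)^n A_n + \sum_{n=1}^{p}(z-\lambda)^{-n} B_n,
\end{equation*}
one reads off $\pi_\lambda = B_1$ and $B_{n+1} = (H-\lambda)^n \pi_\lambda$, so that $(H-\lambda)^p \pi_\lambda = 0$. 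Combined with the injectivity of $H - \lambda$ on $\mathrm{Ker}(\pi_\lambda)$, this identifies $\mathrm{Ran}(\pi_\lambda)$ with the generalized eigenspace $\bigcup_k \mathrm{Ker}((H-\lambda)^k)$, which is therefore finite-dimensional of some dimension $d := \dim \mathrm{Ran}(\pi_\lambda)$.

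For the bound $k \le d$: on the finite-dimensional invariant subspace $\mathrm{Ran}(\pi_\lambda)$, the operator $(H-\lambda)$ has spectrum $\{0\}$ and is hence nilpotent, so Cayley--Hamilton gives $(H-\lambda)^d u = 0$ for every $u \in \mathrm{Ran}(\pi_\lambda)$. The main obstacle is the finite-pole-order statement, where Fredholmness is decisive via the stabilization of the ascending chain $\mathrm{Ker}((H-\lambda)^k)$; in practice one would quote this classical fact (e.g.\ Kato, \emph{Perturbation theory for linear operators}, Theorem III.6.17, or the Gohberg--Sigal framework) rather than reprove it from scratch.
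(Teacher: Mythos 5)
Your argument is correct, and it is the standard Riesz-projection proof (contour manipulation for $\pi_\lambda^2=\pi_\lambda$, spectral splitting, finite-order pole via Fredholmness at an isolated spectral point, and nilpotency on the finite-dimensional invariant subspace) that underlies the result the paper simply cites as \cite[Theorem 1.5.4]{Da07_01}; so there is nothing to compare beyond noting that the paper gives no proof of its own. Your one deferral to the classical literature (the finite-pole/finite-rank-residue statement at an isolated Fredholm point, via Kato or Gohberg--Sigal) is legitimate and matches the level of detail the paper itself adopts.
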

\begin{proof}
See, e.g., \cite[Theorem 1.5.4]{Da07_01}.
\end{proof}
In the particular case where $\lambda$ is a \emph{real} isolated eigenvalue of $H$, one can prove that the only possible generalized eigenvectors corresponding to $\lambda$ are eigenvectors in the usual sense.
\begin{proposition}\label{prop:5}
Suppose that Hypothesis \ref{hyp:H0} holds and let $\lambda \in \sigma_{\mathrm{disc}}(H) \cap \mathbb{R}$. Then
\begin{equation*}
\mathrm{Ran}( \pi_\lambda ) = \{ u \in \mathcal{D}( H ) , \, ( H - \lambda ) u = 0 \}.
\end{equation*}
\end{proposition}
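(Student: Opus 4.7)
The inclusion $\supset$ is immediate from the definition of $\pi_\lambda$ and Proposition~\ref{prop:4}, so the real content is showing that every generalized eigenvector associated to a \emph{real} isolated eigenvalue $\lambda$ is an honest eigenvector, i.e.\ no Jordan block of size $\geq 2$ can appear at $\lambda$. My plan is to prove by induction on $k$ that if $u\in\mathcal{D}(H^k)$ satisfies $(H-\lambda)^ku=0$, then already $(H-\lambda)u=0$; combined with Proposition~\ref{prop:4}, this yields the claimed equality.

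The crucial algebraic observation is that $H$ and $H^*$ share the same real eigenvectors, with the same eigenvalues. Indeed, if $w\in\mathcal{D}(H)$ satisfies $Hw=\lambda w$ with $\lambda\in\mathbb{R}$, then Proposition~\ref{prop:Hb} forces $w\in\mathrm{Ker}(C)$, so $C^*Cw=0$ and therefore
\begin{equation*}
H^*w \;=\; H_Vw + \mathrm{i}C^*Cw \;=\; H_Vw \;=\; Hw \;=\; \lambda w.
\end{equation*}
This is the key bridge that removes the non-selfadjoint character of $H$ at real eigenvalues.

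With this in hand, the induction runs smoothly. The base case $k=1$ is trivial. For the inductive step, set $w:=(H-\lambda)^{k-1}u$, which is an eigenvector of $H$ for the real eigenvalue $\lambda$ (once we note $w\in\mathcal{D}(H)$, which holds since $u\in\mathcal{D}(H^k)$). By the observation above, $(H^*-\lambda)w=0$, and so shifting one factor of $(H-\lambda)$ to the left side of the pairing gives
\begin{equation*}
\|w\|^2 \;=\; \langle w,(H-\lambda)^{k-1}u\rangle \;=\; \langle (H^*-\lambda)w,(H-\lambda)^{k-2}u\rangle \;=\; 0.
\end{equation*}
Hence $w=0$, i.e.\ $(H-\lambda)^{k-1}u=0$, and the induction hypothesis finishes the argument.

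I do not expect any genuine obstacle: the only delicate points are purely bookkeeping, namely checking that all the domain conditions needed to move $H^*$ across the inner product are satisfied (which follows from $u\in\mathcal{D}(H^k)$ and $w\in\mathcal{D}(H^*)$), and invoking Proposition~\ref{prop:Hb} to obtain $Cw=0$. The heart of the proof is really the one-line computation $H^*w=\lambda w$ for real eigenvectors.
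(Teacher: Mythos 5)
Your proof is correct and is essentially the argument behind the result the paper cites for this statement (\cite[Lemma 3.3]{FaFr18_01}): an eigenvector $w$ of $H$ with real eigenvalue satisfies $Cw=0$ and is therefore also an eigenvector of $H^*$, after which the top of any putative Jordan chain is killed by the pairing $\|w\|^2=\langle (H^*-\lambda)w,(H-\lambda)^{k-2}u\rangle=0$. The domain bookkeeping you flag is indeed harmless here, since $\mathcal{D}(H)=\mathcal{D}(H^*)=\mathcal{D}(H_0)$.
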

\begin{proof}
See \cite[Lemma 3.3]{FaFr18_01}.
\end{proof}
Of course, one can define Riesz projections in the same way for $H^*$ and verify that statements analogous to Propositions \ref{prop:4}--\ref{prop:5} hold for $H^*$.

\subsubsection{The dissipative space}
\begin{definition}[Space $\mathcal{H}_{\mathrm{d}}(H)$]
Suppose that Hypothesis \ref{hyp:H0} holds. The dissipative space, or space of decaying states of $H$, is defined by
\begin{equation*}
\mathcal{H}_{ \mathrm{d} }( H ) := \big \{ u \in \mathcal{H} , \, \lim_{ t \to \infty } \big \| e^{ - \mathrm{i} t H } u \big \| = 0 \big \}.
\end{equation*}
Likewise, 
\begin{equation*}
\mathcal{H}_{ \mathrm{d} }( H^* ) := \big \{ u \in \mathcal{H} , \, \lim_{ t \to \infty } \big \| e^{ \mathrm{i} t H^* } u \big \| = 0 \big \}.
\end{equation*}
\end{definition}
Since $\{ e^{ - \mathrm{i} t H } \}_{t \ge 0 }$ and $\{ e^{ \mathrm{i} t H^* } \}_{ t \ge 0 }$ are contraction semigroups, it is easy to verify that $\mathcal{H}_{\mathrm{d}}( H )$ and $\mathcal{H}_{\mathrm{d}}( H^* )$ are closed. Moreover, it should be observed that the semigroup property implies that, for all $u \in \mathcal{H}$, the map $[ 0 , \infty ) \ni t \mapsto \| e^{-\mathrm{i}tH} u \|$ is decreasing and hence the limit $\lim_{ t \to \infty } \| e^{ - \mathrm{i} t H } u \|$ exists for all $u \in \mathcal{H}$. One can actually define the probabilities of elastic scattering and absorption as follows. Let $u_0 \in \mathcal{H}_{\mathrm{b}}(H)^\perp$, $\| u_0 \| = 1$, be an initial state orthogonal to all bound states of $H$. The probability of elastic scattering of the system $S$, initially in the state $u_0$, is defined by
\begin{equation*}
p_{\mathrm{scatt}}(u_0) := \lim_{ t \to \infty } \big \| e^{-\mathrm{i}tH} u_0 \big \|^2.
\end{equation*}
Likewise, the probability of absorption of the system $S$, initially in the state $u_0$, is
\begin{equation*}
p_{\mathrm{abs}}(u_0) := 1 - \lim_{ t \to \infty } \big \| e^{-\mathrm{i}tH} u_0 \big \|^2.
\end{equation*}

We introduce the following definition.
\begin{definition}[Space $\mathcal{H}_{\mathrm{p}}(H)$]
Suppose that Hypothesis \ref{hyp:H0} holds. The subspace $\mathcal{H}_{\mathrm{p}}(H)$ is the closure of the vector space spanned by all generalized eigenvectors of $H$ corresponding to an eigenvalue with a strictly negative imaginary part,
\begin{equation*}
\mathcal{H}_{ \mathrm{p} }( H ) := \overline{ \big \{ u \in \mathrm{Ran}( \pi_\lambda ) , \, \lambda \in \sigma_{\mathrm{disc}}( H ) , \, \mathrm{Im}( \lambda ) < 0 \big \} }.
\end{equation*}
Likewise,
\begin{equation*}
\mathcal{H}_{ \mathrm{p} }( H^* ) := \overline{ \big \{ u \in \mathrm{Ran}( \pi_\lambda ) , \, \lambda \in \sigma_{\mathrm{disc}}( H^* ) , \, \mathrm{Im}( \lambda ) > 0 \big \} }.
\end{equation*}
\end{definition}
The following easy proposition shows that the dissipative space contains $\mathcal{H}_{\mathrm{p}}(H)$. We give the proof for the convenience of the reader.
\begin{proposition}\label{prop:Hp}
Suppose that Hypothesis \ref{hyp:H0} holds. Then
\begin{equation*}
\mathcal{H}_{ \mathrm{p} }( H ) \subseteq \mathcal{H}_{ \mathrm{d} }( H ) \subseteq \mathcal{H}_{ \mathrm{b} }( H )^\perp, \quad \mathcal{H}_{ \mathrm{p} }( H^* ) \subseteq \mathcal{H}_{ \mathrm{d} }( H^* ) \subseteq \mathcal{H}_{ \mathrm{b} }( H )^\perp.
\end{equation*}
\end{proposition}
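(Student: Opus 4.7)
The plan is to handle each of the two inclusions separately, and then note that the proof for $H^*$ is obtained by the same argument with $t$ replaced by $-t$ (using that $\{e^{\mathrm{i}tH^*}\}_{t\ge0}$ is likewise a contraction semigroup).

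For the first inclusion $\mathcal{H}_{\mathrm{p}}(H)\subseteq \mathcal{H}_{\mathrm{d}}(H)$, I would exploit the Jordan structure on $\mathrm{Ran}(\pi_\lambda)$ coming from Proposition \ref{prop:4}. Fix an eigenvalue $\lambda\in\sigma_{\mathrm{disc}}(H)$ with $\mathrm{Im}(\lambda)<0$ and pick $u\in\mathrm{Ran}(\pi_\lambda)$. Then $\mathrm{Ran}(\pi_\lambda)$ is finite-dimensional and invariant under $H$, and there exists $k\in\mathbb{N}$ with $(H-\lambda)^k u=0$. Expanding the semigroup on this finite-dimensional invariant subspace gives
\begin{equation*}
e^{-\mathrm{i}tH}u = e^{-\mathrm{i}t\lambda}\sum_{j=0}^{k-1}\frac{(-\mathrm{i}t)^j}{j!}(H-\lambda)^j u,
\end{equation*}
so $\|e^{-\mathrm{i}tH}u\|\le e^{t\,\mathrm{Im}(\lambda)}\,P(t)$ for some polynomial $P$, and since $\mathrm{Im}(\lambda)<0$ the right-hand side tends to $0$ as $t\to\infty$. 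Hence every generalized eigenvector in the defining family of $\mathcal{H}_{\mathrm{p}}(H)$ lies in $\mathcal{H}_{\mathrm{d}}(H)$; since $\mathcal{H}_{\mathrm{d}}(H)$ is closed (as noted just after its definition, using that $\{e^{-\mathrm{i}tH}\}_{t\ge0}$ is a contraction semigroup), the claim follows by taking closed linear span.

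For the second inclusion $\mathcal{H}_{\mathrm{d}}(H)\subseteq \mathcal{H}_{\mathrm{b}}(H)^\perp$, the key input is the identity $\mathcal{H}_{\mathrm{b}}(H)=\mathcal{H}_{\mathrm{b}}(H^*)$ from Proposition \ref{prop:Hb}. It suffices to show $\langle v,u\rangle=0$ for every $u\in\mathcal{H}_{\mathrm{d}}(H)$ and every eigenvector $v$ of $H$ with real eigenvalue $\mu$, because such $v$'s span a dense subspace of $\mathcal{H}_{\mathrm{b}}(H)$. By Proposition \ref{prop:Hb}, such a $v$ is simultaneously an eigenvector of $H^*$ with the same real eigenvalue $\mu$, i.e.\ $H^*v=\mu v$. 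Consequently $e^{\mathrm{i}tH^*}v=e^{\mathrm{i}t\mu}v$, and moving the semigroup to the other factor gives
\begin{equation*}
\langle v,e^{-\mathrm{i}tH}u\rangle = \langle e^{\mathrm{i}tH^*}v,u\rangle = e^{-\mathrm{i}t\mu}\langle v,u\rangle,
\end{equation*}
so $|\langle v,u\rangle|=|\langle v,e^{-\mathrm{i}tH}u\rangle|\le \|v\|\,\|e^{-\mathrm{i}tH}u\|\to 0$ as $t\to\infty$ by definition of $\mathcal{H}_{\mathrm{d}}(H)$. Hence $\langle v,u\rangle=0$, and since $\mathcal{H}_{\mathrm{b}}(H)$ is the closed span of these $v$'s we obtain $u\in\mathcal{H}_{\mathrm{b}}(H)^\perp$.

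The $H^*$ statements follow mutatis mutandis: for $\mathcal{H}_{\mathrm{p}}(H^*)\subseteq \mathcal{H}_{\mathrm{d}}(H^*)$ the roles of $\lambda$ and $-\lambda$ are swapped, so generalized eigenvectors of $H^*$ with $\mathrm{Im}(\lambda)>0$ decay under $e^{\mathrm{i}tH^*}$; for $\mathcal{H}_{\mathrm{d}}(H^*)\subseteq \mathcal{H}_{\mathrm{b}}(H)^\perp$ one uses the same $\mathcal{H}_{\mathrm{b}}(H)=\mathcal{H}_{\mathrm{b}}(H^*)$ identity to pair a real eigenvector of $H^*$ with an element of $\mathcal{H}_{\mathrm{d}}(H^*)$. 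The only subtle point in the whole argument is the appeal to Proposition \ref{prop:Hb} in the second inclusion: without the equality $\mathcal{H}_{\mathrm{b}}(H)=\mathcal{H}_{\mathrm{b}}(H^*)$, one could not turn the unitary-looking relation $e^{\mathrm{i}tH^*}v=e^{\mathrm{i}t\mu}v$ into a statement about $e^{-\mathrm{i}tH}$, and the orthogonality argument would break down.
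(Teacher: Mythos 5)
Your proof is correct and follows essentially the same route as the paper: the Jordan-block expansion of $e^{-\mathrm{i}tH}$ on the finite-dimensional range of $\pi_\lambda$ for the first inclusion, and pairing a decaying state against real eigenvectors via Proposition \ref{prop:Hb} for the second. The only cosmetic difference is that you test against eigenvectors of $H$ and upgrade them to eigenvectors of $H^*$ using the $\mathrm{Ker}(C)$ inclusion in Proposition \ref{prop:Hb}, whereas the paper tests directly against eigenvectors of $H^*$ and invokes the equality $\mathcal{H}_{\mathrm{b}}(H)=\mathcal{H}_{\mathrm{b}}(H^*)$ at the end; both are valid.
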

\begin{proof}
First, we prove that $\mathcal{H}_{ \mathrm{p} }( H ) \subseteq \mathcal{H}_{ \mathrm{d} }( H )$. Let $\lambda \in \sigma_{\mathrm{disc}}( H )$, $\mathrm{Im}( \lambda ) < 0$ and let $u \in \mathrm{Ran}( \pi_\lambda )$. Let $k= \mathrm{dim} \, \mathrm{Ran}( \pi_\lambda ) < \infty$. We compute
\begin{align*}
\big \| e^{ - \mathrm{i} t H } u \big \| = e^{ t \mathrm{Im}( \lambda ) } \big \| e^{ - \mathrm{i} t ( H - \lambda ) } u \big \| = e^{ t \mathrm{Im}( \lambda ) } \Big \| \sum_{ j = 0 }^{ k - 1 } \frac{ ( - \mathrm{i} t )^j }{ j! } ( H - \lambda )^j u \Big \| \to 0 , \quad t \to \infty ,
\end{align*}
since $\mathrm{Im}( \lambda ) < 0$. Hence $u \in\mathcal{H}_{ \mathrm{d} }( H )$.

Next, we prove that $\mathcal{H}_{ \mathrm{d} }( H ) \subseteq \mathcal{H}_{ \mathrm{b} }( H )^\perp$. Let $u \in \mathcal{H}_{ \mathrm{d} }( H )$ and let $v$ be an eigenvector of $H^*$ corresponding to a real eigenvalue. We have that
\begin{align*}
\big | \langle v , u \rangle \big | = \big |  \langle  e^{ \mathrm{i} t H^* } v , e^{ - \mathrm{i} t H } u \rangle \big | = \big | \langle v , e^{ - \mathrm{i} t H } u \rangle \big | \le \| v \| \big \| e^{ - \mathrm{i} t H } u \big \| \to 0 , \quad t \to \infty.
\end{align*}
Hence $u$ is orthogonal to all eigenvectors of $H^*$ corresponding to real eigenvalues, and therefore $u \in \mathcal{H}_{ \mathrm{b} }( H^* )^\perp$. Since $\mathcal{H}_{ \mathrm{b} }( H ) = \mathcal{H}_{ \mathrm{b} }( H^* )$ by Proposition \ref{prop:Hb}, this concludes the proof.

The proof of $\mathcal{H}_{ \mathrm{p} }( H^* ) \subseteq \mathcal{H}_{ \mathrm{d} }( H^* )$ and $\mathcal{H}_{ \mathrm{d} }( H^* ) \subseteq \mathcal{H}_{ \mathrm{b} }( H )^\perp$ are analogous.
\end{proof}

\subsubsection{The absolutely continuous spectral subspace}

Now, we turn to a possible definition of an \emph{absolutely continuous spectral subspace} for the non-self-adjoint operator $H$, following Davies \cite{Da80_01}.
\begin{definition}[Absolutely continuous spectral subspace]\label{def:Hac}
Suppose that Hypothesis \ref{hyp:H0} holds. The absolutely continuous spectral subspace of $H$ is defined by
\begin{equation*}
\mathcal{H}_{ \mathrm{ac} }( H ) := \overline{ M(H) },
\end{equation*}
where
\begin{equation*}
M( H ) := \Big \{ u \in \mathcal{H} , \, \exists \mathrm{c}_u > 0 , \, \forall v \in \mathcal{H} , \, \int_0^\infty \big | \langle e^{ - \mathrm{i} t H } u , v \rangle \big |^2 \mathrm{d} t \le \mathrm{c}_u \| v \|^2 \Big \}.
\end{equation*}
The absolutely continuous spectral subspace of $H^*$ is defined similarly, replacing $e^{ - \mathrm{i} t H }$ by $e^{ \mathrm{i} t H^* }$ in the definition above.
\end{definition}
In the particular case where $H$ is self-adjoint, the definition of $\mathcal{H}_{ \mathrm{ac} }( H )$ coincides with the usual one based on the nature of the spectral measures of $H$. Moreover, if $H$ is self-adjoint, $M(H)$ is closed and hence $\mathcal{H}_{ \mathrm{ac} }( H ) = M( H )$. Another possible definition of an absolutely continuous spectral subspace of $H$ follows from the theory of unitary dilations of non-self-adjoint operators, see e.g., \cite{NaFoBeKe10_01}. The relevance of Definition \ref{def:Hac} may be supported by the following result.
\begin{proposition}
Suppose that Hypothesis \ref{hyp:H0} holds. Then
\begin{equation*}
\mathcal{H}_{\mathrm{ac}}( H ) = \mathcal{H}_{ \mathrm{b} }( H )^\perp.
\end{equation*}
In particular,
\begin{equation*}
\mathcal{H}_{ \mathrm{d} }( H ) \subseteq \mathcal{H}_{\mathrm{ac}}( H ) = \mathcal{H}_{\mathrm{ac}}( H^* ).
\end{equation*}
\end{proposition}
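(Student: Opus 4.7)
The plan is to prove the main equality $\mathcal{H}_{\mathrm{ac}}(H) = \mathcal{H}_{\mathrm{b}}(H)^\perp$ by a pair of inclusions, then deduce the two statements in the ``in particular'' clause as immediate corollaries.

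For the inclusion $\mathcal{H}_{\mathrm{ac}}(H) \subseteq \mathcal{H}_{\mathrm{b}}(H)^\perp$, I would mimic the computation already used in the proof of Proposition \ref{prop:Hp}. Pick $u \in M(H)$ and an eigenvector $v$ of $H^*$ with real eigenvalue $\lambda$: then $\langle e^{-\mathrm{i}tH}u, v\rangle = \langle u, e^{\mathrm{i}tH^*}v\rangle = e^{-\mathrm{i}\lambda t}\langle u, v\rangle$, so $|\langle e^{-\mathrm{i}tH}u, v\rangle|^2$ is the constant $|\langle u, v\rangle|^2$. Since this constant must be $L^2$-integrable on $[0,\infty)$ by the defining property of $M(H)$, it vanishes, giving $u \perp v$. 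Hence $M(H) \subseteq \mathcal{H}_{\mathrm{b}}(H^*)^\perp = \mathcal{H}_{\mathrm{b}}(H)^\perp$ by Proposition \ref{prop:Hb}, and closedness of the right-hand side upgrades this after passing to closures.

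For the opposite inclusion $\mathcal{H}_{\mathrm{b}}(H)^\perp \subseteq \mathcal{H}_{\mathrm{ac}}(H)$, I expect this to be the main obstacle. The natural entry point is the Laplace representation $R(\lambda + \mathrm{i}\epsilon)u = \mathrm{i}\int_0^\infty e^{\mathrm{i}(\lambda + \mathrm{i}\epsilon)t}e^{-\mathrm{i}tH}u\, \mathrm{d}t$, valid for $\epsilon > 0$ since $\sigma(H)$ sits in the closed lower half-plane. Combining this with Plancherel's theorem in $\lambda$ and monotone convergence yields the identity
\begin{equation*}
2\pi \int_0^\infty \big|\langle e^{-\mathrm{i}tH}u, v\rangle\big|^2 \mathrm{d}t = \lim_{\epsilon \to 0^+} \int_\mathbb{R} \big|\langle R(\lambda + \mathrm{i}\epsilon)u, v\rangle\big|^2 \mathrm{d}\lambda,
\end{equation*}
which shows that $u \in M(H)$ is equivalent to a uniform-in-$\epsilon$ resolvent bound of the form $\sup_{\epsilon > 0}\int_\mathbb{R} |\langle R(\lambda+\mathrm{i}\epsilon)u, v\rangle|^2 \mathrm{d}\lambda \le c_u\|v\|^2$. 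It therefore suffices to exhibit a dense subspace of $\mathcal{H}_{\mathrm{b}}(H)^\perp$ of vectors satisfying such a bound. This is achieved by invoking the resolvent factorization $R(z) = R_V(z)(I + \mathrm{i}C^*CR_V(z))^{-1}$ (in the half-planes where the inverse exists) together with the limiting absorption principle for the self-adjoint operator $H_V$, in the framework developed in \cite{FaFr18_01}. The delicate part is controlling the exceptional set consisting of the real eigenvalues of $H_V$ and the spectral singularities of $H$ (at which $I + \mathrm{i}C^*CR_V(\cdot + \mathrm{i}0)$ fails to be invertible); density survives because this exceptional set has empty interior in $\sigma_{\mathrm{ess}}(H)$, and the exceptional directions lie outside $\mathcal{H}_{\mathrm{b}}(H)^\perp$ up to closure.

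Finally, $\mathcal{H}_{\mathrm{d}}(H) \subseteq \mathcal{H}_{\mathrm{ac}}(H)$ is immediate from combining $\mathcal{H}_{\mathrm{d}}(H) \subseteq \mathcal{H}_{\mathrm{b}}(H)^\perp$ (Proposition \ref{prop:Hp}) with the main equality; and $\mathcal{H}_{\mathrm{ac}}(H) = \mathcal{H}_{\mathrm{ac}}(H^*)$ is obtained by applying the main equality to both $H$ and $H^*$ and invoking $\mathcal{H}_{\mathrm{b}}(H) = \mathcal{H}_{\mathrm{b}}(H^*)$ from Proposition \ref{prop:Hb}.
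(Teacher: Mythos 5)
Your first inclusion and your treatment of the ``in particular'' clause are correct and consistent with the paper's logic: for an eigenvector $v$ of $H^*$ with real eigenvalue, $t\mapsto|\langle e^{-\mathrm{i}tH}u,v\rangle|$ is constant, so square-integrability forces $\langle u,v\rangle=0$, and the two corollaries follow from Propositions \ref{prop:Hb} and \ref{prop:Hp} exactly as the paper deduces them.

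The genuine gap is the reverse inclusion $\mathcal{H}_{\mathrm{b}}(H)^\perp\subseteq\mathcal{H}_{\mathrm{ac}}(H)$, which you rightly flag as the main obstacle but do not actually close. There are two problems. First, the proposition is stated under Hypothesis \ref{hyp:H0} alone, whereas your route invokes the resolvent factorization, a limiting absorption principle for $H_V$, and the notion of spectral singularities --- that is, the machinery of Hypotheses \ref{hyp:H1}--\ref{hyp:H5}; nothing in Hypothesis \ref{hyp:H0} supplies a limiting absorption principle for $H_V$. Second, even granting that machinery, the decisive step --- exhibiting a dense subspace of $\mathcal{H}_{\mathrm{b}}(H)^\perp$ of vectors satisfying the uniform bound $\sup_{\varepsilon>0}\int_{\mathbb{R}}|\langle R(\lambda+\mathrm{i}\varepsilon)u,v\rangle|^2\,\mathrm{d}\lambda\le \mathrm{c}_u\|v\|^2$ --- is only asserted. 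The exceptional set (embedded eigenvalues of $H_V$, spectral singularities of $H$) is a set of \emph{energies}, not of directions in $\mathcal{H}$; passing from ``this set has measure zero'' to ``a dense set of globally smooth vectors'' requires an energy-localization argument (cutting off near the bad energies and controlling the discarded spectral pieces), and the sentence ``the exceptional directions lie outside $\mathcal{H}_{\mathrm{b}}(H)^\perp$ up to closure'' does not supply it. The paper takes an entirely different route: it cites Davies \cite{Da80_01}, where $\overline{M(H)}=\mathcal{H}_{\mathrm{b}}(H)^\perp$ is established by soft, abstract contraction-semigroup (unitary dilation) arguments, with no resolvent estimates at all. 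Your Laplace--Plancherel identity is correct and a useful reformulation, but it only reduces the problem; if you want a self-contained proof under the stated hypotheses, the dilation-theoretic argument of Davies is the one to reconstruct.
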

\begin{proof}
The fact that $\mathcal{H}_{\mathrm{ac}}( H ) = \mathcal{H}_{ \mathrm{b} }( H )^\perp$ is proven in \cite{Da80_01}. The second equation is a direct consequence of Propositions \ref{prop:Hb} and \ref{prop:Hp}.
\end{proof}
We mention that another natural -- and relevant -- definition for the absolutely continuous spectral subspace of $H$ would be the orthogonal complement of all generalized eigenstates of $H^*$, namely
\begin{equation*}
\tilde{\mathcal{H}}_{\mathrm{ac}}( H ) := \big ( \mathcal{H}_{ \mathrm{b} }( H ) \oplus \mathcal{H}_{ \mathrm{p} }( H^* ) \big )^\perp.
\end{equation*}
According to the previous proposition and Definition \ref{def:Hac}, we then have that
\begin{equation*}
\tilde{\mathcal{H}}_{\mathrm{ac}}( H ) := \overline{ \tilde M( H ) } , \quad \tilde M( H ) := \Big \{ u \in \mathcal{H}_{ \mathrm{p} }( H^* )^\perp , \, \exists \mathrm{c}_u > 0 , \, \forall v \in \mathcal{H} , \, \int_0^\infty \big | \langle e^{ - \mathrm{i} t H } u , v \rangle \big |^2 \mathrm{d} t \le \mathrm{c}_u \| v \|^2 \Big \}.
\end{equation*}

\subsection{The wave and scattering operators}

In this section we define the central objects in the scattering theory for the pair $(H , H_0)$, namely the wave operators, the scattering operator and the scattering matrices. We begin by introducing hypotheses insuring that these objects are well-defined.

\subsubsection{Hypotheses}

Recall that, given a self-adjoint operator $A$ on $\mathcal{H}$, $\mathcal{H}_{\mathrm{pp}}(A)$, $\mathcal{H}_{\mathrm{ac}}(A)$ and $\mathcal{H}_{\mathrm{sc}}(A)$ denote the pure point, absolutely continuous and singular continuous spectral subspaces of $A$, respectively. Likewise, $\sigma_{\mathrm{pp}}(A)$, $\sigma_{\mathrm{ac}}(A)$ and $\sigma_{\mathrm{sc}}(A)$ denote the pure point, absolutely continuous and singular continuous spectra of $A$.

The next hypothesis concerns the spectra of the self-adjoint operators $H_0$ and $H_V$ (recall that $H_0$ and $H_V$ have the same essential spectrum, assuming Hypothesis \ref{hyp:H0}).
\begin{assumption}[Spectra of $H_0$ and $H_V$]\label{hyp:H1}$ $
\begin{enumerate}[label=\rm{(\roman*)},leftmargin=0.8cm,itemsep=1pt]
\item The spectrum of $H_0$ is purely absolutely continuous, \textit{i.e.}, $\sigma_{\mathrm{ac}}(H_0)=\sigma(H_0)$, $\sigma_{\mathrm{pp}}(H_0)=\emptyset$, $\sigma_{\mathrm{sc}}(H_0)=\emptyset$.
\item $H_V$ has no singular spectrum, no embedded eigenvalues, and only finitely many eigenvalues counting multiplicity, \textit{i.e.}, $\sigma_{\mathrm{sc}}(H_V)=\emptyset$, $\sigma_{\mathrm{pp}}(H_V) \subset \mathbb{R} \setminus \sigma( H_0 )$ and $\mathrm{dim} \, \mathcal{H}_{ \mathrm{pp} }( H_V ) < \infty$.
\end{enumerate}
\end{assumption}
We denote by $\Pi_{\mathrm{ac}}(H_V)$ the orthogonal projection onto $\mathcal{H}_{\mathrm{ac}}(H_V)$. The symbol $\slim$ stands for strong limit. Our second hypothesis concerns the unitary wave operators associated to the self-adjoint pair $(H_V,H_0)$ (in the statement of Hypothesis \ref{hyp:H2} below, it is tacitly assumed that Hypothesis \ref{hyp:H1} holds).
\begin{assumption}[Wave operators for $(H_0,H_V)$]\label{hyp:H2}
The wave operators
\begin{equation*}
W_\pm(H_V,H_0) := \underset{t\to\pm\infty}{\slim} \, e^{\mathrm{i}tH_V}e^{-\mathrm{i}tH_0} , \quad W_\pm(H_0,H_V) := \underset{t\to\pm\infty}{\slim} \, e^{\mathrm{i}tH_0}e^{-\mathrm{i}tH_V} \Pi_{\mathrm{ac}}(H_V) ,
\end{equation*}
exist and are asymptotically complete, \textit{i.e.},
\begin{equation*}
\mathrm{Ran} \, W_\pm(H_V,H_0) = \mathcal{H}_{\mathrm{ac}}(H_V) = \mathcal{H}_{\mathrm{pp}}(H_V)^\perp , \quad \mathrm{Ran} \, W_\pm(H_0,H_V) = \mathcal{H} .
\end{equation*}
\end{assumption}
In our next assumption, we require that the operator $C$ be relatively smooth with respect to $H_V$ in the sense of Kato \cite{Ka66_01}.
\begin{assumption}[Relative smoothness of $C$ with respect to $H_V$]\label{hyp:H3}
There exists $\mathrm{c}_V>0$ such that, for all $u \in \mathcal{H}_{\mathrm{ac}}(H)$,
\begin{equation*}
\int_{\mathbb{R}} \big \|C e^{ - \mathrm{i} t H_V } u \big \|^2 \mathrm{d}t \le \mathrm{c}_V \| u \|^2.
\end{equation*}
\end{assumption}
In the remainder of this section, we recall properties of the wave and scattering operators for the pair $(H,H_0)$, assuming that Hypotheses \ref{hyp:H0}--\ref{hyp:H3} hold.

\subsubsection{The wave operators $W_-( H , H_0 )$ and $W_+( H^* , H_0 )$}

Assuming that $H_0$ has purely absolutely continuous spectrum, the wave operators $W_-(H,H_0)$ and $W_+(H^*,H_0)$ in dissipative scattering theory are defined in the same way as in unitary scattering theory, namely
\begin{equation*}
W_-(H,H_0) := \underset{t \to \infty}{\slim} \, e^{-\mathrm{i}tH} e^{\mathrm{i}tH_0} , \quad W_+(H^*,H_0) := \underset{t \to \infty}{\slim} \, e^{\mathrm{i}tH^*} e^{-\mathrm{i}tH_0} ,
\end{equation*}
where, recall, $\slim$ stands for strong limit.

The existence and basic properties of $W_-( H , H_0 )$ and $W_+( H^* , H_0 )$ are stated in the following proposition.
\begin{proposition}\label{prop:wave1}
Suppose that Hypotheses \ref{hyp:H0}--\ref{hyp:H3} hold. Then $W_-( H , H_0 )$ and $W_+( H^* , H_0 )$ exist and are injective contractions. Moreover,
\begin{equation*}
e^{ - \mathrm{i} t H } W_-( H , H_0 ) = W_- ( H , H_0 ) e^{ - \mathrm{i} t H_0 } , \quad e^{ - \mathrm{i} t H^* } W_+( H^* , H_0 ) = W_+ ( H^* , H_0 ) e^{ - \mathrm{i} t H_0 } ,
\end{equation*}
for all $t \in \mathbb{R}$, and
\begin{align*}
& \overline{ \mathrm{Ran} \, W_-( H , H_0 ) } = \big ( \mathcal{H}_{\mathrm{b}}(H) \oplus \mathcal{H}_{\mathrm{d}}(H^*) \big )^\perp \subseteq \mathcal{H}_{\mathrm{ac}}(H) , \\
& \overline{ \mathrm{Ran} \, W_+( H^* , H_0 ) } = \big ( \mathcal{H}_{\mathrm{b}}(H) \oplus \mathcal{H}_{\mathrm{d}}(H) \big )^\perp \subseteq \mathcal{H}_{\mathrm{ac}}(H) .
\end{align*}
\end{proposition}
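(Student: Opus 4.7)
The plan is to establish each property of $W_-(H, H_0)$ in turn; the analogous statements for $W_+(H^*, H_0)$ follow by swapping the roles of $H$ and $H^*$ and reversing time. I begin with existence. The natural factorization is
\begin{equation*}
e^{-\mathrm{i}tH} e^{\mathrm{i}tH_0} = \bigl( e^{-\mathrm{i}tH} e^{\mathrm{i}tH_V} \bigr)\bigl( e^{-\mathrm{i}tH_V} e^{\mathrm{i}tH_0} \bigr).
\end{equation*}
By Hypothesis \ref{hyp:H2} (with $t \mapsto -t$), the second factor converges strongly to the unitary isomorphism $W_-(H_V, H_0) : \mathcal{H} \to \mathcal{H}_{\mathrm{ac}}(H_V)$. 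It therefore suffices to construct $W_-(H, H_V) := \slim_{t \to \infty} e^{-\mathrm{i}tH} e^{\mathrm{i}tH_V}$ on $\mathcal{H}_{\mathrm{ac}}(H_V)$. For $v \in \mathcal{H}_{\mathrm{ac}}(H_V)$ and $w \in \mathcal{H}$, Cook's identity combined with Cauchy--Schwarz yields
\begin{equation*}
\bigl| \langle w, (e^{-\mathrm{i}tH} e^{\mathrm{i}tH_V} - e^{-\mathrm{i}sH} e^{\mathrm{i}sH_V}) v \rangle \bigr|^2 \le \Bigl( \int_s^t \| C e^{\mathrm{i}rH^*} w \|^2\, \mathrm{d}r \Bigr) \Bigl( \int_s^t \| C e^{\mathrm{i}rH_V} v \|^2\, \mathrm{d}r \Bigr).
\end{equation*}
The first factor is uniformly bounded by $\tfrac{1}{2}\|w\|^2$ thanks to the identity $\tfrac{\mathrm{d}}{\mathrm{d}r}\|e^{\mathrm{i}rH^*} w\|^2 = -2\|C e^{\mathrm{i}rH^*} w\|^2$ (a consequence of $H^* = H_V + \mathrm{i} C^*C$); the second tends to $0$ as $s, t \to \infty$ by Hypothesis \ref{hyp:H3}. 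Cauchy convergence follows, giving the contraction $W_-(H, H_V)$, and composing, $W_-(H, H_0) = W_-(H, H_V) W_-(H_V, H_0)$.

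The contraction bound is immediate from $\|e^{-\mathrm{i}tH} e^{\mathrm{i}tH_0} u\| \le \|u\|$ for all $t \ge 0$, and the intertwining from strong continuity of $e^{-\mathrm{i}sH}$ and the group law:
\begin{equation*}
e^{-\mathrm{i}sH} W_-(H, H_0) u = \lim_{t \to \infty} e^{-\mathrm{i}(s+t) H} e^{\mathrm{i}(s+t) H_0} e^{-\mathrm{i}sH_0} u = W_-(H, H_0) e^{-\mathrm{i}sH_0} u.
\end{equation*}

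For the inclusion $\overline{\mathrm{Ran}\, W_-(H, H_0)} \subseteq (\mathcal{H}_{\mathrm{b}}(H) \oplus \mathcal{H}_{\mathrm{d}}(H^*))^\perp$, let $u = W_-(H, H_0) w$. By Proposition \ref{prop:Hb}, $\mathcal{H}_{\mathrm{b}}(H) = \mathcal{H}_{\mathrm{b}}(H^*)$, so it suffices to check orthogonality of $u$ to each eigenvector $v$ of $H^*$ with real eigenvalue $\lambda$. Then
\begin{equation*}
\langle v, u \rangle = \lim_{t\to\infty} \langle e^{\mathrm{i}tH^*} v, e^{\mathrm{i}tH_0} w \rangle = \lim_{t\to\infty} e^{-\mathrm{i}t\lambda} \langle v, e^{\mathrm{i}tH_0} w \rangle = 0
\end{equation*}
by the Riemann--Lebesgue lemma applied to the purely absolutely continuous spectral measure of $H_0$ (Hypothesis \ref{hyp:H1}(i)). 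For $v \in \mathcal{H}_{\mathrm{d}}(H^*)$,
\begin{equation*}
|\langle v, u\rangle| = \lim_{t\to\infty} |\langle e^{\mathrm{i}tH^*} v, e^{\mathrm{i}tH_0} w\rangle| \le \lim_{t\to\infty} \|e^{\mathrm{i}tH^*} v\| \| w\| = 0.
\end{equation*}

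Injectivity and the reverse range inclusion are the main obstacle. Cook's method alone falls short here: the monotonicity identity $\|e^{-\mathrm{i}tH} w\|^2 = \|w\|^2 - 2\int_0^t \|C e^{-\mathrm{i}rH} w\|^2\, \mathrm{d}r$, applied with the $t$--dependent vector $w = e^{\mathrm{i}tH_0} u$, yields only bounds compatible with $u \ne 0$, and the reverse range inclusion is a genuine asymptotic completeness statement. Following \cite{FaFr18_01}, I would construct the candidate inverse $\slim_{t \to \infty} e^{-\mathrm{i}tH_0} e^{\mathrm{i}tH^*}$ on the appropriate invariant subspace by a parallel factorization through $H_V$ together with the $H^*$--smoothness of $C$, verify that it intertwines $H^*$ with $H_0$, and establish the operator identity
\begin{equation*}
\bigl( \slim_{t\to\infty} e^{-\mathrm{i}tH_0} e^{\mathrm{i}tH^*} \bigr) W_-(H, H_0) = \mathrm{Id}
\end{equation*}
on a dense subset. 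This simultaneously yields the injectivity of $W_-(H, H_0)$ and the missing inclusion $\overline{\mathrm{Ran}\, W_-(H, H_0)} \supseteq (\mathcal{H}_{\mathrm{b}}(H) \oplus \mathcal{H}_{\mathrm{d}}(H^*))^\perp$.
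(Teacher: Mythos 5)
Your treatment of existence, the contraction bound, the intertwining relations, and the inclusion $\overline{\mathrm{Ran}\,W_-(H,H_0)}\subseteq(\mathcal{H}_{\mathrm{b}}(H)\oplus\mathcal{H}_{\mathrm{d}}(H^*))^\perp$ is correct, and it follows the standard route (the paper itself gives no argument here, deferring to \cite[Propositions 3.4 and 3.5]{FaFr18_01}, which use the same factorization through $H_V$ and the Cook--Kato smoothness estimate). The genuine gap lies in the two assertions you postpone, and your sketched strategy for them rests on a false identity. The operator $\slim_{t\to\infty}e^{-\mathrm{i}tH_0}e^{\mathrm{i}tH^*}$ (on the relevant subspace) is precisely $W_-(H_0,H^*)=W_-(H,H_0)^*$ by Proposition \ref{prop:wave2}, so the identity you propose to establish, namely $\bigl(\slim_{t\to\infty}e^{-\mathrm{i}tH_0}e^{\mathrm{i}tH^*}\bigr)W_-(H,H_0)=\mathrm{Id}$, amounts to $W_-(H,H_0)^*W_-(H,H_0)=\mathrm{Id}$, i.e.\ to $W_-(H,H_0)$ being an isometry. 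It is not: $\|W_-(H,H_0)u\|^2=\lim_{t\to\infty}\|e^{-\mathrm{i}tH}e^{\mathrm{i}tH_0}u\|^2$ is in general strictly smaller than $\|u\|^2$ whenever absorption occurs, which is the whole point of the model. Concretely, $e^{\mathrm{i}tH^*}e^{-\mathrm{i}tH}=(e^{-\mathrm{i}tH})^*e^{-\mathrm{i}tH}\neq\mathrm{Id}$, so the telescoping you have in mind never happens. A left inverse of a strict contraction with norm-decreasing action cannot exist as a contraction, so this route cannot be repaired.

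What is actually needed is different for the two missing claims. For the range identity, the correct mechanism is duality rather than inversion: $\overline{\mathrm{Ran}\,W_-(H,H_0)}=\bigl(\mathrm{Ker}\,W_-(H,H_0)^*\bigr)^\perp=\bigl(\mathrm{Ker}\,W_-(H_0,H^*)\bigr)^\perp$, and the substantive step is the inclusion $\mathrm{Ker}\,W_-(H_0,H^*)\subseteq\overline{\mathcal{H}_{\mathrm{b}}(H)\oplus\mathcal{H}_{\mathrm{d}}(H^*)}$, i.e.\ showing that if $v\perp\mathcal{H}_{\mathrm{b}}(H)$ and $\|e^{\mathrm{i}tH^*}v\|$ does not tend to $0$, then $e^{-\mathrm{i}tH_0}e^{\mathrm{i}tH^*}v$ does not tend to $0$ either; this is a quantitative use of the smoothness hypotheses, not an appeal to asymptotic completeness. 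Indeed, your remark that the reverse inclusion ``is a genuine asymptotic completeness statement'' is a mischaracterization: asymptotic completeness concerns the actual (unclosed) range and the subspace $\mathcal{H}_{\mathrm{p}}(H^*)$, can fail in the presence of spectral singularities, and requires Hypothesis \ref{hyp:H6}, whereas the closure-of-range identity above already holds under Hypotheses \ref{hyp:H0}--\ref{hyp:H3}. Injectivity likewise requires its own argument, e.g.\ a lower bound on $\lim_{t\to\infty}\|e^{-\mathrm{i}tH}e^{\mathrm{i}tH_0}u\|$ for $u\neq0$ extracted from the dissipation identity combined with the smoothness estimates; it does not follow from the existence of a left inverse, since none exists.
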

\begin{proof}
See \cite[Propositions 3.4 and 3.5]{FaFr18_01}.
\end{proof}

\subsubsection{The wave operators $W_+( H_0 , H )$ and $W_-( H_0 , H^* )$}

Recall that $\mathcal{H}_{\mathrm{ac}}(H)$ and $\mathcal{H}_{\mathrm{ac}}(H^*)$ are defined in Definition \ref{def:Hac}. We denote by $\Pi_{\mathrm{ac}}(H)$, respectively $\Pi_{\mathrm{ac}}(H^*)$, the orthogonal projection onto the absolutely continuous spectral subspace of $H$, respectively $H^*$. The wave operators $W_+(H_0,H)$ and $W_-(H_0,H^*)$ are defined by
\begin{equation*}
W_+(H_0,H) := \underset{t \to \infty}{\slim} \, e^{\mathrm{i}tH_0} e^{-\mathrm{i}tH} \Pi_{ \mathrm{ac} }( H ), \quad W_-(H_0,H^*) := \underset{t \to \infty}{\slim} \, e^{-\mathrm{i}tH_0} e^{\mathrm{i}tH^*}  \Pi_{ \mathrm{ac} }( H^* ) .
\end{equation*}
Using unitarity of $e^{\mathrm{i}tH_0}$, we see that the existence of $W_+(H_0,H)$ is equivalent to the following property (sometimes called \emph{weak asymptotic completeness}): for all $u_0 \in \mathcal{H}_{\mathrm{b}}(H)^\perp = \mathcal{H}_{\mathrm{ac}}(H)$, there exists $u_+ \in \mathcal{H}$ such that $\| e^{-\mathrm{i} t H } u_0 - e^{-\mathrm{i} t H_0 } u_+ \| \to 0$ as $t \to \infty$, and in this case we have that $u_+ = W_+(H_0,H) u_0$.

The existence and basic properties of $W_+( H_0 , H )$ and $W_-( H_0 , H^* )$ are stated in the following proposition.
\begin{proposition}\label{prop:wave2}
Suppose that Hypotheses \ref{hyp:H0}--\ref{hyp:H3} hold. Then $W_+( H_0 , H )$ and $W_-( H_0 , H^* )$ exist and are contractions. Moreover,
\begin{equation*}
W_+( H_0 , H )^* = W_+ ( H^* , H_0 ) , \quad  W_-( H_0 , H^* )^* = W_- ( H , H_0 ) .
\end{equation*}
In particular,
\begin{equation*}
\mathrm{Ker} \, W_+( H_0 , H ) = \big ( \mathcal{H}_{\mathrm{b}}(H) \oplus \mathcal{H}_{\mathrm{d}}(H) \big )^\perp , \quad \mathrm{Ker} \, W_-( H_0 , H^* ) = \big ( \mathcal{H}_{\mathrm{b}}(H) \oplus \mathcal{H}_{\mathrm{d}}(H^*) \big )^\perp ,
\end{equation*}
and $W_+( H_0 , H )$ and $W_-( H_0 , H^* )$ have dense ranges.
\end{proposition}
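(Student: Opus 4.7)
The plan is to identify the putative limits via duality with the wave operators of Proposition \ref{prop:wave1}, and then to upgrade the resulting weak convergence to strong convergence using the dissipative identity for $\|e^{-itH}u\|^2$. The argument for $W_+(H_0,H)$ is carried out in detail; the one for $W_-(H_0,H^*)$ is completely analogous, obtained by exchanging the roles of $H$ and $H^*$.

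\emph{Step 1 (weak convergence and adjoint formula).} Unitarity of $e^{-\mathrm{i}tH_0}$ and the relation $(e^{-\mathrm{i}tH})^* = e^{\mathrm{i}tH^*}$ give, for all $u,v \in \mathcal{H}$ and all $t \ge 0$,
\begin{equation*}
\langle e^{\mathrm{i}tH_0} e^{-\mathrm{i}tH} \Pi_{\mathrm{ac}}(H) u , v \rangle = \langle \Pi_{\mathrm{ac}}(H) u , e^{\mathrm{i}tH^*} e^{-\mathrm{i}tH_0} v \rangle .
\end{equation*}
By Proposition \ref{prop:wave1}, $e^{\mathrm{i}tH^*}e^{-\mathrm{i}tH_0}v \to W_+(H^*,H_0)v$ as $t \to \infty$, and $\mathrm{Ran}\, W_+(H^*,H_0) \subseteq \mathcal{H}_{\mathrm{ac}}(H)$, so $\Pi_{\mathrm{ac}}(H)W_+(H^*,H_0) = W_+(H^*,H_0)$. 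The right-hand side therefore converges to $\langle u, W_+(H^*,H_0) v\rangle$, which yields the weak limit $e^{\mathrm{i}tH_0}e^{-\mathrm{i}tH}\Pi_{\mathrm{ac}}(H) u \rightharpoonup W_+(H^*,H_0)^* u$.

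\emph{Step 2 (strong convergence).} Since $\|e^{\mathrm{i}tH_0}e^{-\mathrm{i}tH}\Pi_{\mathrm{ac}}(H)\| \le 1$, weak convergence together with convergence of norms suffices for strong convergence in a Hilbert space. Unitarity of $e^{\mathrm{i}tH_0}$ and the identity $\tfrac{d}{dt}\|e^{-\mathrm{i}tH} u\|^2 = -2\|Ce^{-\mathrm{i}tH} u\|^2$ give the monotone relation
\begin{equation*}
\|e^{\mathrm{i}tH_0} e^{-\mathrm{i}tH} u\|^2 = \|u\|^2 - 2 \int_0^t \|C e^{-\mathrm{i}sH} u\|^2 \, \mathrm{d}s ,
\end{equation*}
so the left-hand side admits a limit as $t \to \infty$. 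The main obstacle is to identify this limit with $\|W_+(H^*,H_0)^* u\|^2$ for $u \in \mathcal{H}_{\mathrm{ac}}(H)$. This is achieved by expanding via Duhamel's formula $e^{\mathrm{i}tH_V} e^{-\mathrm{i}tH} u = u - \int_0^t e^{\mathrm{i}sH_V} C^*C e^{-\mathrm{i}sH} u \, \mathrm{d}s$: a Cauchy--Schwarz estimate using Hypothesis \ref{hyp:H3} (Kato smoothness of $C$ with respect to $H_V$ on $\mathcal{H}_{\mathrm{ac}}(H)$) together with the automatic dissipative smoothness $\int_0^\infty \|Ce^{-\mathrm{i}sH}u\|^2 \mathrm{d}s \le \tfrac{1}{2}\|u\|^2$ shows that the Duhamel integral converges as $t \to \infty$, and the asymptotic completeness of $W_\pm(H_V,H_0)$ from Hypothesis \ref{hyp:H2} then allows one to match the limit with $\|W_+(H^*,H_0)^* u\|^2$.

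\emph{Step 3 (adjoint identity, kernels, dense ranges).} Once strong convergence is established, passing to the limit in Step 1 gives $W_+(H_0,H)^* = W_+(H^*,H_0)$, whence $\|W_+(H_0,H)\| \le 1$. The kernel of $W_+(H_0,H) = W_+(H^*,H_0)^* \Pi_{\mathrm{ac}}(H)$ is then computed from $\ker W_+(H^*,H_0)^* = \overline{\mathrm{Ran}\, W_+(H^*,H_0)}^{\perp}$ combined with the range characterization $\overline{\mathrm{Ran}\, W_+(H^*,H_0)} = (\mathcal{H}_{\mathrm{b}}(H)\oplus\mathcal{H}_{\mathrm{d}}(H))^\perp$ of Proposition \ref{prop:wave1}, yielding the stated formula. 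Density of the range of $W_+(H_0,H)$ is equivalent to injectivity of its adjoint $W_+(H^*,H_0)$, which is also part of Proposition \ref{prop:wave1}. The corresponding statements for $W_-(H_0,H^*)$ are obtained by repeating the argument with $H$ replaced by $H^*$.
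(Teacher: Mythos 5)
The paper itself gives no proof of this proposition beyond the citation to \cite[Proposition 3.6]{FaFr18_01}, so the comparison is necessarily with the argument of that reference; your overall strategy --- identifying the weak limit by duality with Proposition \ref{prop:wave1}, then upgrading to strong convergence via the monotone identity $\|e^{-\mathrm{i}tH}u\|^2=\|u\|^2-2\int_0^t\|Ce^{-\mathrm{i}sH}u\|^2\mathrm{d}s$ and a Duhamel expansion through $H_V$ --- is the standard route and the right one. But Step 2 has a genuine gap at the decisive point. Carrying out your Duhamel/Cauchy--Schwarz estimate gives strong convergence of $e^{\mathrm{i}tH_V}e^{-\mathrm{i}tH}u$ to some limit $\Omega u$ (already here you need the bound of Hypothesis \ref{hyp:H3} for the \emph{dual} test vectors ranging over all of $\mathcal{H}$, which requires the extra observation that $Ce^{-\mathrm{i}sH_V}v=0$ for $v\in\mathcal{H}_{\mathrm{b}}(H)$). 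One then writes $e^{\mathrm{i}tH_0}e^{-\mathrm{i}tH}u=e^{\mathrm{i}tH_0}e^{-\mathrm{i}tH_V}\,\Omega u+o(1)$, and the outer factor converges strongly only on $\mathcal{H}_{\mathrm{ac}}(H_V)$; on $\mathcal{H}_{\mathrm{pp}}(H_V)$ it is isometric and oscillates. Strong convergence therefore holds if and only if $\Pi_{\mathrm{pp}}(H_V)\Omega u=0$, and it is precisely a nonzero such component that would make $\lim_t\|e^{-\mathrm{i}tH}u\|$ strictly larger than $\|W_+(H^*,H_0)^*u\|$, defeating the ``weak convergence plus convergence of norms'' scheme. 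Your appeal to the asymptotic completeness of $W_\pm(H_V,H_0)$ does not close this: completeness concerns only the absolutely continuous part of $H_V$ and says nothing about whether $\Omega u$ avoids the eigenvectors of $H_V$. You must prove $\langle\phi,\Omega u\rangle=\lim_t e^{\mathrm{i}t\lambda}\langle\phi,e^{-\mathrm{i}tH}u\rangle=0$ for every eigenvector $\phi$ of $H_V$ and every $u\in\mathcal{H}_{\mathrm{ac}}(H)=\mathcal{H}_{\mathrm{b}}(H)^\perp$. This is immediate when $C\phi=0$ (then $\phi\in\mathcal{H}_{\mathrm{b}}(H)\perp u$); if Hypothesis \ref{hyp:H3} is read literally on $\mathcal{H}_{\mathrm{ac}}(H)$ it in fact forces $C\phi=0$ for every eigenvector $\phi$ of $H_V$ (otherwise $\int\|Ce^{-\mathrm{i}tH_V}\phi\|^2\mathrm{d}t=\infty$ while the hypothesis bounds it), so the gap can be closed by this short observation --- but it must be made, since it is exactly the step separating strong from weak convergence here, and under the weaker (and more standard) reading of the smoothness hypothesis on $\mathcal{H}_{\mathrm{ac}}(H_V)$ a genuinely separate argument using Hypothesis \ref{hyp:H1}(ii) is required.

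Two smaller points. Your kernel computation gives $\mathrm{Ker}\,W_+(H_0,H)=\big(\mathrm{Ran}\,W_+(H^*,H_0)\big)^\perp=\overline{\mathcal{H}_{\mathrm{b}}(H)\oplus\mathcal{H}_{\mathrm{d}}(H)}$, i.e.\ the closure of the sum, not the orthogonal complement $\big(\mathcal{H}_{\mathrm{b}}(H)\oplus\mathcal{H}_{\mathrm{d}}(H)\big)^\perp$ displayed in the statement; your version is the consistent one (one must have $\mathcal{H}_{\mathrm{d}}(H)\subseteq\mathrm{Ker}\,W_+(H_0,H)$, since $e^{-\mathrm{i}tH}u\to0$ forces $e^{\mathrm{i}tH_0}e^{-\mathrm{i}tH}u\to0$), so you should not claim the computation ``yields the stated formula'' verbatim. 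The remaining assertions of Step 3 --- the adjoint identity, the contraction bound, and dense range from injectivity of $W_+(H^*,H_0)$ --- are all correct once strong convergence is in hand.
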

\begin{proof}
See \cite[Proposition 3.6]{FaFr18_01}.
\end{proof}
We mention that similar results can be obtained using the Kato-Birman theory of trace-class perturbations instead of relatively smooth perturbations, see \cite{Da78_01}.

\subsubsection{The scattering operators}

In dissipative scattering theory, the scattering operators are defined by
\begin{equation*}
S( H , H_0 ) := [W_+( H^* , H_0 )]^* W_-( H , H_0 ) , \quad S( H^* , H_0 ) := [W_-( H^* , H_0 )]^* W_+( H^* , H_0 ).
\end{equation*}
These definitions generalize the usual definition of unitary scattering operators in the sense that, if $H$ is self-adjoint, then $H^*=H$ and the previous equalities reduce to the usual definitions.

Combining Propositions \ref{prop:wave1} and \ref{prop:wave2}, we arrive at the following result.
\begin{proposition}\label{prop:scatt1}
Suppose that Hypotheses \ref{hyp:H0}--\ref{hyp:H3} hold. Then $S(H,H_0)$ and $S(H^*,H_0)$ exist and are contractions. Moreover,
\begin{equation*}
e^{ - \mathrm{i} t H_0 } S( H , H_0 ) = S( H , H_0 ) e^{ - \mathrm{i} t H_0 }, \quad e^{ - \mathrm{i} t H_0 } S( H^* , H_0 ) = S( H^* , H_0 ) e^{ - \mathrm{i} t H_0 } ,
\end{equation*}
for all $t \in \mathbb{R}$ and we have that
\begin{equation*}
S(H,H_0)^* = S(H^*,H_0).
\end{equation*}
\end{proposition}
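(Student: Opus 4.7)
The plan is to extract all three statements essentially as formal consequences of Propositions \ref{prop:wave1} and \ref{prop:wave2}, with no new analysis required. First, using the identity $[W_+(H^*,H_0)]^* = W_+(H_0,H)$ from Proposition \ref{prop:wave2}, I would rewrite
\begin{equation*}
S(H,H_0) = W_+(H_0,H)\, W_-(H,H_0),
\end{equation*}
and similarly express $S(H^*,H_0)$ as a product of two of the wave operators considered in Propositions \ref{prop:wave1}--\ref{prop:wave2}. Since each factor is a bounded contraction, the product is well-defined on all of $\mathcal{H}$ and satisfies $\|S(H,H_0)\| \le 1$, and likewise for $S(H^*,H_0)$.

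For the intertwining property, I would start from the two intertwining identities recorded in Proposition \ref{prop:wave1},
\begin{equation*}
e^{-\mathrm{i} t H} W_-(H,H_0) = W_-(H,H_0)\, e^{-\mathrm{i} t H_0}, \qquad e^{-\mathrm{i} t H^*} W_+(H^*,H_0) = W_+(H^*,H_0)\, e^{-\mathrm{i} t H_0},
\end{equation*}
and take the adjoint of the second one (using $(e^{-\mathrm{i} t H^*})^* = e^{\mathrm{i} t H}$ and then replacing $t$ by $-t$) to get $e^{-\mathrm{i} t H_0}[W_+(H^*,H_0)]^* = [W_+(H^*,H_0)]^* e^{-\mathrm{i} t H}$. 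Composing this with the first intertwining identity yields
\begin{equation*}
e^{-\mathrm{i} t H_0} S(H,H_0) = [W_+(H^*,H_0)]^* e^{-\mathrm{i} t H} W_-(H,H_0) = S(H,H_0)\, e^{-\mathrm{i} t H_0},
\end{equation*}
and an entirely analogous manipulation handles $S(H^*,H_0)$.

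Finally, the adjoint relation $S(H,H_0)^* = S(H^*,H_0)$ is a direct consequence of the definitions once one knows how the involved wave operators are related under taking adjoints. I would take the adjoint of the product defining $S(H,H_0)$, obtaining
\begin{equation*}
S(H,H_0)^* = [W_-(H,H_0)]^*\, W_+(H^*,H_0),
\end{equation*}
and then identify the right-hand side with $S(H^*,H_0)$ by applying Proposition \ref{prop:wave2}, which relates $[W_-(H,H_0)]^*$ to the corresponding wave operator appearing in the definition of $S(H^*,H_0)$. None of the steps should present a real obstacle: the whole proof is bookkeeping with adjoints and semigroup intertwinings. The only care required is to keep straight the sign conventions for $+$/$-$ and the pairs $(H,H_0)$ versus $(H^*,H_0)$, since here unitarity is no longer available and one cannot freely replace $e^{\mathrm{i} t H}$ by $(e^{-\mathrm{i} t H})^{-1}$ when passing to the adjoint.
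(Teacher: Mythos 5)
Your argument is correct and is exactly the route the paper takes: Proposition \ref{prop:scatt1} is stated there without a separate proof, as an immediate consequence of ``combining Propositions \ref{prop:wave1} and \ref{prop:wave2}'', and your bookkeeping with adjoints and the intertwining relations from those two propositions fills in precisely that combination. The only point worth flagging is that the displayed definition of $S(H^*,H_0)$ must be read with $[W_-(H,H_0)]^*$ in place of $[W_-(H^*,H_0)]^*$ (the latter operator is never defined and its literal analogue need not exist, since $e^{-\mathrm{i}tH^*}$ is not a contraction for $t\ge 0$); this is the reading your final step implicitly uses, under which $S(H,H_0)^*=S(H^*,H_0)$ follows by taking adjoints exactly as you do.
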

An important question, both mathematically and physically, concerns the invertibility of the scattering operators. Regarding this question, we can state the following proposition (see the next section for more precise results).
\begin{proposition}\label{prop:scatt2}
Suppose that Hypotheses \ref{hyp:H0}--\ref{hyp:H3} hold. Then the following conditions are equivalent:
\begin{enumerate}[label=\rm{(\arabic*)},leftmargin=0.8cm,itemsep=1pt]
\item $S(H,H_0)$ and $S(H^*,H_0)$ are invertible in $\mathcal{L}( \mathcal{H} )$.
\item The range of the wave operators  $W_-( H , H_0 )$ and $W_+( H^* , H_0 )$ are given by
\begin{align*}
&  \mathrm{Ran} \, W_-( H , H_0 )  = \big ( \mathcal{H}_{\mathrm{b}}(H) \oplus \mathcal{H}_{\mathrm{d}}(H^*) \big )^\perp  , \quad  \mathrm{Ran} \, W_+( H^* , H_0 ) = \big ( \mathcal{H}_{\mathrm{b}}(H) \oplus \mathcal{H}_{\mathrm{d}}(H) \big )^\perp  .
\end{align*}
\end{enumerate}
\end{proposition}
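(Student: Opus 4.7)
The argument rests on the factorizations
\begin{equation*}
S(H, H_0) = W_+(H_0, H)\, W_-(H, H_0), \qquad S(H^*, H_0) = W_-(H_0, H^*)\, W_+(H^*, H_0),
\end{equation*}
which follow from the adjoint identities $W_+(H_0, H) = W_+(H^*, H_0)^*$ and $W_-(H_0, H^*) = W_-(H, H_0)^*$ of Proposition \ref{prop:wave2}. Since $T \in \mathcal{L}(\mathcal{H})$ is invertible if and only if $T^*$ is, and since $S(H^*, H_0) = S(H, H_0)^*$ by Proposition \ref{prop:scatt1}, condition (1) reduces to $S(H, H_0)$ being a bounded bijection of $\mathcal{H}$. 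The common technical input in both directions is the closed range theorem: a bounded operator has closed range if and only if its adjoint does.

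For $(1) \Rightarrow (2)$ I would argue as follows. The surjectivity of $S(H, H_0)$ combined with the first factorization forces $W_+(H_0, H)$ to be surjective, hence to have closed range. By the closed range theorem its adjoint $W_+(H^*, H_0)$ also has closed range; combined with the range-closure identity $\overline{\mathrm{Ran}\, W_+(H^*, H_0)} = (\mathcal{H}_{\mathrm{b}}(H) \oplus \mathcal{H}_{\mathrm{d}}(H))^\perp$ of Proposition \ref{prop:wave1}, this yields the second equality in (2). The same reasoning applied to $S(H^*, H_0) = S(H, H_0)^*$ through its factorization via $W_-(H_0, H^*) = W_-(H, H_0)^*$ gives the first equality.

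For $(2) \Rightarrow (1)$, set $\mathcal{R}_- := (\mathcal{H}_{\mathrm{b}}(H) \oplus \mathcal{H}_{\mathrm{d}}(H^*))^\perp$ and $\mathcal{R}_+ := (\mathcal{H}_{\mathrm{b}}(H) \oplus \mathcal{H}_{\mathrm{d}}(H))^\perp$. Under (2) the injective contractions $W_-(H, H_0) \colon \mathcal{H} \to \mathcal{R}_-$ and $W_+(H^*, H_0) \colon \mathcal{H} \to \mathcal{R}_+$ are surjective onto the indicated closed subspaces, so the open mapping theorem supplies bounded inverses. Dually, $W_+(H_0, H)$ and $W_-(H_0, H^*)$ are surjective onto $\mathcal{H}$, with kernels $\mathcal{R}_+^\perp$ and $\mathcal{R}_-^\perp$, and restrict to bounded bijections $\mathcal{R}_+ \to \mathcal{H}$ and $\mathcal{R}_- \to \mathcal{H}$. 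Reading $S(H, H_0) = W_+(H_0, H) \circ W_-(H, H_0)$ as $\mathcal{H} \xrightarrow{\sim} \mathcal{R}_- \to \mathcal{H}$, its bijectivity amounts to checking that the oblique projection from $\mathcal{R}_-$ onto $\mathcal{R}_+$ along the kernel $\mathcal{R}_+^\perp$ is a bounded bijection, i.e.\ that $\mathcal{H} = \mathcal{R}_- \oplus \mathcal{R}_+^\perp$ as a topological direct sum.

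I expect this transversality statement to be the main obstacle. My plan to close it is to combine the intertwining relations from Proposition \ref{prop:wave1} with the spectral structure of $H$ and $H^*$ at the discrete eigenvalues with non-zero imaginary part (via the Riesz projections from Section \ref{subsec:spectrum}), which make the pairing between $\mathcal{H}_{\mathrm{d}}(H)$ and $\mathcal{H}_{\mathrm{d}}(H^*)$ non-degenerate and thereby ensure the required transversality of $\mathcal{R}_-$ with $\mathcal{H}_{\mathrm{b}}(H) \oplus \mathcal{H}_{\mathrm{d}}(H)$. The symmetric argument applied to $S(H^*, H_0)$ then concludes the proof of (1).
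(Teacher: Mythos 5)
Your direction $(1)\Rightarrow(2)$ is correct: surjectivity of $S(H,H_0)=W_+(H_0,H)\,W_-(H,H_0)$ forces $W_+(H_0,H)$ to be surjective, the closed range theorem transfers closedness of the range to its adjoint $W_+(H^*,H_0)$, and Proposition \ref{prop:wave1} identifies the now-closed range; the adjoint factorization handles $W_-(H,H_0)$ in the same way.

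The direction $(2)\Rightarrow(1)$ is where the proposal breaks down. Your reduction to the transversality $\mathcal{H}=\mathcal{R}_-\oplus\mathcal{R}_+^\perp$ is accurate but circular: once $W_-(H,H_0)\colon\mathcal{H}\to\mathcal{R}_-$ and the restriction of $W_+(H_0,H)$ to $\mathcal{R}_+$ are isomorphisms, bijectivity of $S(H,H_0)$ is \emph{equivalent} to that topological splitting, so proving the splitting is exactly as hard as the statement itself. Moreover, the plan you sketch for closing it cannot work under Hypotheses \ref{hyp:H0}--\ref{hyp:H3} alone: you appeal to Riesz projections at the non-real discrete eigenvalues to get a non-degenerate pairing between $\mathcal{H}_{\mathrm{d}}(H)$ and $\mathcal{H}_{\mathrm{d}}(H^*)$, but the identification of the dissipative spaces with spans of generalized eigenvectors, $\mathcal{H}_{\mathrm{d}}=\mathcal{H}_{\mathrm{p}}$, is precisely the content of the asymptotic completeness theorem and requires the additional Hypothesis \ref{hyp:H6}; under the standing assumptions $\mathcal{H}_{\mathrm{d}}(H)$ admits no such description, and even granting it, an algebraically non-degenerate pairing would not yield the uniform angle estimate needed for a topological direct sum in infinite dimensions. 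The missing idea is elementary and bypasses transversality entirely: since $e^{\mathrm{i}tH_0}$ is unitary, the very definition of $W_+(H_0,H)$ as a strong limit gives $\|W_+(H_0,H)w\|=\lim_{t\to\infty}\|e^{-\mathrm{i}tH}w\|$ for every $w\in\mathcal{H}_{\mathrm{ac}}(H)$; applying this to $w=W_-(H,H_0)v$ and using the intertwining relation of Proposition \ref{prop:wave1},
\begin{equation*}
\|S(H,H_0)v\|=\lim_{t\to\infty}\big\|e^{-\mathrm{i}tH}W_-(H,H_0)v\big\|=\lim_{t\to\infty}\big\|W_-(H,H_0)e^{-\mathrm{i}tH_0}v\big\|\ge c\,\|v\|,
\end{equation*}
where $c>0$ is the lower bound that the injective operator $W_-(H,H_0)$ acquires once its range is closed, as condition (2) asserts. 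The same computation applied to $S(H^*,H_0)=W_-(H_0,H^*)\,W_+(H^*,H_0)$ shows that $S(H,H_0)^*$ is bounded below as well, and an operator that is bounded below together with its adjoint is invertible in $\mathcal{L}(\mathcal{H})$.
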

\begin{proof}
See \cite[Proposition 3.8]{FaFr18_01}.
\end{proof}

\subsubsection{The scattering matrices}

We recall that the multiplicity of the spectrum of a self-adjoint operator is defined \emph{via} the spectral theorem (see, e.g., \cite[Section VII]{ReSi80_01}). To study the scattering matrices, it is convenient to add the following condition to Hypothesis \ref{hyp:H1}(i).
\begin{assumption}[Multiplicity of $\sigma( H_0 )$]\label{hyp:H1bis}
The spectrum of $H_0$ has a constant multiplicity (which may be infinite).
\end{assumption}
To simplify notations below, we set
\begin{equation*}
\Lambda := \sigma(H_0).
\end{equation*}
Assuming Hypotheses \ref{hyp:H1}(i) and \ref{hyp:H1bis}, the spectral theorem ensures that there exists a unitary mapping from $\mathcal{H}$ to a direct integral of Hilbert spaces,
\begin{equation*}
\mathcal{F}_0 : \mathcal{H} \rightarrow  \int^\oplus_{ \Lambda } \mathcal{H}( \lambda ) \mathrm{d} \lambda ,
\end{equation*}
such that $\mathcal{F}_0  H_0 \mathcal{F}_0 ^*$ acts as multiplication by $\lambda$ on each Hilbert space $\mathcal{H}( \lambda )$. Moreover, since $\sigma( H_0 )$ has a constant multiplicity, say $k \in \mathbb{N} \cup \{ + \infty \}$, all spaces $\mathcal{H}( \lambda )$ can be identified with a fixed Hilbert space $\mathcal{M}$. Hence $\mathcal{F}_0$ becomes an operator
\begin{equation*}
\mathcal{F}_0 : \mathcal{H} \rightarrow \int^\oplus_{ \Lambda } \mathcal{M} \, \mathrm{d} \lambda = L^2 (\Lambda ;\mathcal{M}),
\end{equation*}
where ${\rm{dim}} \, \mathcal{M} =k$ and $L^2 (\Lambda ;\mathcal{M})$ is  the space of square integrable functions from $\Lambda$ to $\mathcal{M}$, (see e.g. \cite[Chapter 0, Section 1.3]{Ya10_01}). Note that in the case where $\mathcal{H} = L^2( \mathbb{R}^3 )$ and $H_0 = - \Delta$, the Hilbert space $\mathcal{M}$ is given by $\mathcal{M} = L^2 (S^2)$, where $S^2$ stands for the unit-sphere in $\mathbb{R}^3$.

Using that the scattering operator $S(H,H_0)$ commutes with $H_0$, by Proposition \ref{prop:scatt1}, one can verify that $S(H,H_0)$ admits a fiber decomposition of the form
\begin{equation*}
\mathcal{F}_0 S(H,H_0) \mathcal{F}_0^* = \int^\oplus_\Lambda S(\lambda) \mathrm{d}\lambda .
\end{equation*}
The bounded operators
\begin{equation*}
S(\lambda) \in \mathcal{L}( \mathcal{M} ),
\end{equation*}
defined for a.e. $\lambda \in \Lambda$, are called the scattering matrices (for the pair $(H,H_0)$).

One can define in the same way the scattering matrices $S^*(\lambda)$ for the pair $(H^*,H_0)$ by the relation
\begin{equation*}
\mathcal{F}_0 S(H^*,H_0) \mathcal{F}_0^* = \int^\oplus_\Lambda S^*(\lambda) \mathrm{d}\lambda .
\end{equation*}
Under the conditions of Proposition \ref{prop:scatt2}, we then have that
\begin{equation*}
[ S ( \lambda ) ]^* = S^*( \lambda ) ,
\end{equation*}
for a.e. $\lambda \in \Lambda$.

We set
\begin{equation*}
\mathcal{F}_\pm := \mathcal{F}_0 W^*_\pm( H_V , H_0 )  : \mathcal{H} \rightarrow L^2 (\Lambda ;\mathcal{M}) .
\end{equation*}
Given $s \ge 0$, an interval $X$ and a Hilbert space $\mathcal{H}$, we denote by $\mathrm{C}^{s}(X; \mathcal{H})$ the set of H{\"o}lder continuous $\mathcal{H}$-valued functions on $X$ of order $s$. In order to insure that the map $\lambda \mapsto S( \lambda )$ is continuous, it is convenient to require that the operators $V$ and $C$ are strongly smooth with respect to $H_0$ and $H_V$, respectively, in the following sense.
\begin{assumption}[Strong smoothness of $V$ with respect to $H_0$]\label{hyp:H4} $ $
\begin{enumerate}[label=\rm{(\roman*)},leftmargin=0.8cm,itemsep=1pt]
\item There exist an auxiliary Hilbert space $\mathcal{G}$ and operators $G : \mathcal{H} \to \mathcal{G}$ and $K : \mathcal{G} \to \mathcal{G}$ such that
$
V = G^* K G ,
$
with $G ( H_0^{1/2}+1 )^{-1} \in \mathcal{L}( \mathcal{H} ; \mathcal{G} )$ and $K \in \mathcal{L}( \mathcal{G} )$.
\item For all $z \in \mathbb{C}$, $\mathrm{Im}(z) \neq 0$,
$
G R_0(z) G^* \text{ is compact}.
$
\item The operator $G$ is strongly $H_0$-smooth with exponent $s_0 \in (\frac12,1)$ on any compact set $X \Subset \Lambda$, \textit{i.e.}
\begin{equation*}
\mathcal{F}_0 [ G \mathds{1}_X(H_0) ]^* : \mathcal{G} \rightarrow \mathrm{C}^{s_0}(X; \mathcal{M}) \text{ is continuous}.
\end{equation*}
\end{enumerate}
\end{assumption}
\begin{assumption}[Strong smoothness of $C$ with respect to $H_V$]\label{hyp:H5} $ $
\begin{enumerate}[label=\rm{(\roman*)},leftmargin=0.8cm,itemsep=1pt]
\item For all $z\in\mathbb{C}$,  $\mathrm{Im}(z)\neq0$, $C R_V(z) C^*$ is compact.
\item The operator $C$ is strongly $H_V$-smooth with exponent $s \in ( 0 , 1)$ on any compact set $X \Subset \Lambda$, \textit{i.e.}
\begin{equation*}
\mathcal{F}_{\pm}  [ C \mathds{1}_X(H_V) ]^* : \mathcal{H} \to \mathrm{C}^s( \Lambda ; \mathcal{H}) \text{ is continuous} .
\end{equation*}
\item The map
\begin{equation*}
\mathring{\Lambda} \in \lambda \mapsto C \big ( R_V( \lambda + i 0 ) - R_V( \lambda - i 0 ) \big ) C^* \in \mathcal{L}( \mathcal{H} ) , 
\end{equation*}
is bounded.
\end{enumerate}
\end{assumption}
We refer to \cite{Ya92_01,Ya10_01} for details on the theory of strongly smooth operators. 

In the statement below, $S^\sharp$ stands for $S$ or $S^*$. Based on a generalization of Kuroda's representation formula, the following result was established in \cite{FaNi18_01}.
\begin{proposition}
Suppose that Hypotheses \ref{hyp:H0}--\ref{hyp:H5} hold. Then, for all $\lambda \in \mathring{\Lambda}$, $S^\sharp(\lambda)$ is a contraction and $S^\sharp( \lambda ) - \mathrm{Id}$ is compact. If, in addition, $\mathrm{dim} \, \mathcal{M} = + \infty$, then for all $\lambda \in \mathring{\Lambda}$, $\| S^\sharp( \lambda ) \| = 1$ and, in particular, $\| S( H , H_0 ) \| = 1 = \| S( H^* , H_0 ) \|$.
\end{proposition}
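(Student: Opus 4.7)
The plan is to combine the stationary (Kuroda-type) representation formula for $S^\sharp(\lambda)$ with the smoothness hypotheses \ref{hyp:H4}--\ref{hyp:H5} to identify $S^\sharp(\lambda)-\mathrm{Id}$ as a composition of bounded trace operators with a compact operator built out of boundary values of the resolvents. First I would write $H = H_V - \mathrm{i} C^*C$ and apply a two-step chain rule for wave operators, so that $S(H,H_0) = S(H_V,H_0)\, S(H,H_V)$, and reduce the study of $S^\sharp(\lambda)$ to two contributions: one depending on $V$ via the pair $(H_V,H_0)$, the other depending on $C^*C$ via the pair $(H,H_V)$.

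For the first contribution I would use Hypothesis \ref{hyp:H4}: writing $V=G^*KG$, standard strong $H_0$-smoothness yields the usual Kuroda formula
\[
S(H_V,H_0)(\lambda) - \mathrm{Id} = -2\pi\mathrm{i}\, \bigl(\mathcal{F}_0[G\mathds{1}_X(H_0)]^*\bigr)^*\!\bigl( K - K G R_0(\lambda+\mathrm{i}0) G^* K \bigr)\bigl(\mathcal{F}_0[G\mathds{1}_X(H_0)]^*\bigr),
\]
in which the trace operators are continuous from $\mathcal{G}$ into $\mathrm{C}^{s_0}(X;\mathcal{M})$, and the middle bracket is compact by Hypothesis \ref{hyp:H4}(ii). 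For the second contribution, the analogous formula for the pair $(H,H_V)$, now on the basis of Hypothesis \ref{hyp:H5}, expresses $S(H,H_V)(\lambda)-\mathrm{Id}$ as a product of the trace operators $\mathcal{F}_\pm[C\mathds{1}_X(H_V)]^*$ (continuous into H\"older spaces by \ref{hyp:H5}(ii)) sandwiching $C R_V(\lambda\pm\mathrm{i}0) C^*$, which is compact by \ref{hyp:H5}(i) together with the limiting absorption principle bound \ref{hyp:H5}(iii). Pointwise composition of these two representations gives $S^\sharp(\lambda) = \mathrm{Id}+K^\sharp(\lambda)$ with $K^\sharp(\lambda)$ compact on $\mathcal{M}$, as announced. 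The contraction property $\|S^\sharp(\lambda)\|\le 1$ is then inherited from the global contraction $\|S(H,H_0)\|\le 1$ (Proposition \ref{prop:scatt1}): because the direct integral norm equals the essential supremum of the fiber norms and $\lambda\mapsto S^\sharp(\lambda)$ is continuous on $\mathring{\Lambda}$, the bound $\|S^\sharp(\lambda)\|\le 1$ holds at every (not merely a.e.) $\lambda\in\mathring{\Lambda}$.

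Finally, for the norm-one statement when $\dim\mathcal{M}=+\infty$, I would use a Weyl-sequence argument: since $\mathcal{M}$ is infinite-dimensional, pick any orthonormal sequence $(e_n)\subset\mathcal{M}$. Then $e_n\rightharpoonup 0$ weakly, and by compactness $K^\sharp(\lambda)e_n\to 0$ strongly, so
\[
\|S^\sharp(\lambda) e_n\| = \|e_n + K^\sharp(\lambda)e_n\| \longrightarrow 1, \qquad n\to\infty,
\]
hence $\|S^\sharp(\lambda)\|\ge 1$, and together with the contraction bound we get $\|S^\sharp(\lambda)\|=1$. The global statement $\|S(H,H_0)\|=1=\|S(H^*,H_0)\|$ then follows from the direct integral decomposition. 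The main obstacle is the rigorous derivation of the generalized Kuroda formula for the dissipative pair $(H,H_V)$, where one must handle boundary values of the non-self-adjoint resolvent $R(z)$ and justify the smoothness-to-compactness passage for the perturbation $-\mathrm{i}C^*C$; once this representation is at hand (as carried out in \cite{FaNi18_01}), the remaining steps are structural.
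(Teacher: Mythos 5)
Your overall strategy -- a stationary Kuroda-type representation of $S^\sharp(\lambda)$, factored through the intermediate self-adjoint operator $H_V$, with compactness extracted from the smoothness hypotheses and the norm-one statement obtained from a weakly null orthonormal sequence killed by the compact part -- is exactly the route the paper points to: its ``proof'' is a citation to \cite{FaNi18_01}, introduced by the sentence that the result is based on a generalization of Kuroda's representation formula, and the very definitions of $\mathcal{F}_\pm$ and Hypotheses \ref{hyp:H4}--\ref{hyp:H5} are tailored to the two-step decomposition through $H_V$ that you propose. The contraction-by-continuity argument and the Weyl-sequence argument are both correct as stated.

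Two steps, however, are inaccurate as written and would fail if taken literally. First, your Kuroda formula for the pair $(H_V,H_0)$ displays the middle factor $K - K G R_0(\lambda+\mathrm{i}0)G^*K$; this is only the second Born approximation. The exact formula requires the \emph{perturbed} resolvent, $K - K G R_V(\lambda+\mathrm{i}0)G^*K$, or equivalently the resummed factor $K\bigl(\mathrm{Id}+ G R_0(\lambda+\mathrm{i}0)G^*K\bigr)^{-1}$, whose existence is tied to the absence of embedded eigenvalues in Hypothesis \ref{hyp:H1}(ii). Relatedly, the claim that ``the middle bracket is compact by Hypothesis \ref{hyp:H4}(ii)'' cannot be right, since that bracket contains the bare operator $K$, which is only assumed bounded; compactness of $S^\sharp(\lambda)-\mathrm{Id}$ instead comes from compactness of the evaluated trace operators $Z(\lambda)=\mathcal{F}_0(\lambda)G^*$ themselves, which follows from Hypothesis \ref{hyp:H4}(ii) because $Z(\lambda)Z(\lambda)^*$ is (up to a constant) the imaginary part of the norm limit $GR_0(\lambda+\mathrm{i}0)G^*$ of compact operators. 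Second, the identity $S(H,H_0)=S(H_V,H_0)\,S(H,H_V)$ is not correct as an operator identity: the multiplication theorem gives $S(H,H_0)=W_+(H_V,H_0)^*\,S(H,H_V)\,W_-(H_V,H_0)$, so the second factor must be read in the spectral representation of $H_V$ furnished by $\mathcal{F}_\pm=\mathcal{F}_0 W_\pm^*(H_V,H_0)$ -- this is precisely why Hypothesis \ref{hyp:H5}(ii) is formulated with $\mathcal{F}_\pm$ rather than $\mathcal{F}_0$. These are repairable defects of the sketch rather than conceptual ones, but a complete proof must get both right.
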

\begin{proof}
See \cite[Theorem 2.6 and Remark 2.7]{FaNi18_01}.
\end{proof}

%%%%%%%%%%%%%%%%%%%%% Hypothese 4 %%%%%%%%%%%%%%%%%%%%%%%

\section{Spectral singularities and asymptotic completeness}\label{section:3}

Our next concern is to study more precisely the invertibility of the scattering matrices and operator. Invertibility of $S(\lambda)$ is a strongly relevant physical property since it shows that to any incoming state at energy $\lambda$ corresponds a unique outgoing state and vice versa. In Section \ref{subsec:spectral}, we explain that non-invertibility of $S(\lambda)$ is equivalent to the presence of a \emph{spectral singularity} at energy $\lambda$. Section \ref{subsec:asymptotic} is devoted to the property of \emph{asymptotic completeness} of the wave operators.

\subsection{Spectral singularities}\label{subsec:spectral}

Recall that, under our assumptions and notations, the essential spectrum of $H$ is given by $\sigma_{\mathrm{ess}}(H) = \sigma( H_0 ) = \Lambda$. We recall the notion of a spectral singularity introduced in \cite{FaFr18_01,FaNi18_01}, distinguishing points in the interior of $\Lambda$ and points in the boundary $\Lambda \setminus \mathring{\Lambda}$.
\begin{definition}[Regular spectral point and spectral singularity]\label{def:spec-sing}$ $
\begin{enumerate}[label=\rm{(\roman*)},leftmargin=0.8cm,itemsep=1pt]
\item Let $\lambda \in \mathring{\Lambda}$. We say that $\lambda$ is a regular spectral point of $H$ if there exists a compact interval $K_\lambda \subset \mathbb{R}$ whose interior contains $\lambda$, such that $K_\lambda$ does not contain any accumulation point of eigenvalues of $H$, and such that the limits
\begin{equation*}
C R ( \mu - \mathrm{i} 0 ) C^* := \lim_{ \varepsilon \downarrow 0 } C R ( \mu - \mathrm{i} \varepsilon ) C^*
\end{equation*}
exist uniformly in $\mu \in K_\lambda$ in the norm topology of $\mathcal{L}( \mathcal{H} )$. If $\lambda$ is not a regular spectral point of $H$, we say that $\lambda$ is a spectral singularity of $H$.
\item Let $\lambda \in \Lambda \setminus \mathring{\Lambda}$. We say that $\lambda$ is a regular spectral point of $H$ if there exists a compact interval $K_\lambda \subset \mathbb{R}$ whose interior contains $\lambda$, such that all $\mu \in K_\lambda \cap \mathring{\Lambda}$ are regular in the sense of \emph{(i)} and such that the map
\begin{equation*}
K_\lambda \cap \mathring{\Lambda} \ni \mu \mapsto C R ( \mu - \mathrm{i} 0 ) C^* \in \mathcal{L}(\mathcal{H})
\end{equation*}
is bounded.
\item If $\Lambda$ is right-unbounded, we say that $+\infty$ is regular if there exists $m>0$ such that all $\mu \in [m,\infty) \cap \mathring{\Lambda}$ are regular in the sense of \emph{(i)} and such that the map
\begin{equation*}
[m,\infty) \cap \mathring{\Lambda} \ni \mu \mapsto C R ( \mu - \mathrm{i} 0 ) C^* \in \mathcal{L}(\mathcal{H})
\end{equation*}
is bounded.
\end{enumerate}
\end{definition}
Note that our definition of a regular spectral point is local. One can rephrase this definition saying that $\lambda$ is a regular spectral point of $H$ if the limiting absorption principle for $H$ holds in a neighborhood of $\lambda$, for the weighted resolvent $CR(z)C^*$, for values of the spectral parameter $z$ in the lower half-plane. It should be noted that we do not need to require the limiting absorption principle to hold for values of the spectral parameter in the upper half-plane: This is due to the fact that $H$ is supposed to be dissipative. We also mention that there is a natural definition of a spectral singularity for the adjoint operator $H^*$, such that $\lambda$ is a spectral singularity of $H$ if and only if $\lambda$ is a spectral singularity of $H^*$.

In the case where $H = - \Delta + V - \mathrm{i} C^*C$ on $L^2( \mathbb{R}^3 )$, with $V$ and $C$ bounded and compactly supported potentials, a spectral singularity of $H$ corresponds to a resonance embedded in the essential spectrum $[0,\infty)$ (see, e.g., \cite{DyZw19_01} for the theory of resonances for Schrödinger operators, and \cite{FaFr18_01} for a comparison between the notions of resonances and spectral singularities).

The next theorem provides several characterizations of a spectral singularity $\lambda \in \mathring{\Lambda}$. It is based, in particular, on a generalization of Kuroda's representation formula to the context of dissipative scattering theory.
\begin{theorem}\label{thm:equiv}
Suppose that Hypotheses \ref{hyp:H0}--\ref{hyp:H5} hold. Let $\lambda \in \mathring{\Lambda}$. Then the following conditions are equivalent:
\begin{enumerate}[label=\rm{(\arabic*)},leftmargin=0.8cm,itemsep=1pt]
\item $\lambda$ is a regular spectral point of $H$.
\item $\lambda$ is not an accumulation point of eigenvalues of $H$ located in $\lambda - \mathrm{i} ( 0 , \infty )$ and the limit 
\begin{equation*}
C R ( \mu - \mathrm{i} 0 ) C^* = \lim_{ \varepsilon \downarrow 0 } C R ( \mu - \mathrm{i} \varepsilon ) C^*
\end{equation*}
exists in the norm topology of $\mathcal{L}( \mathcal{H} )$.
\item The operator $\mathrm{Id} - \mathrm{i} C R_V( \lambda - \mathrm{i} 0 ) C^*$ is invertible in $\mathcal{L}( \mathcal{H} )$.
\item The scattering matrix $S(\lambda)$ is invertible in $\mathcal{L}( \mathcal{M} )$.
\end{enumerate}
\end{theorem}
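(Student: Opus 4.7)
The plan is to establish a cycle of implications centered on the Birman–Schwinger-type identity
\begin{equation*}
C R(z) C^* = \bigl[\mathrm{Id} - \mathrm{i} C R_V(z) C^* \bigr]^{-1} C R_V(z) C^*,
\end{equation*}
which follows from the second resolvent identity $R(z) = R_V(z) + \mathrm{i} R_V(z) C^* C R(z)$ applied from both sides by $C$ and $C^*$. Using this identity together with Kuroda's stationary representation of the scattering matrix from \cite{FaNi18_01}, I would prove the chain $(1)\Rightarrow(2)\Rightarrow(3)\Leftrightarrow(4)\Rightarrow(1)$.

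The implication $(1)\Rightarrow(2)$ is immediate from Definition \ref{def:spec-sing}(i), since uniform existence in a compact neighborhood entails pointwise existence. For $(2)\Rightarrow(3)$, the key observation is that $CR_V(\lambda-\mathrm{i}0)C^*$ is compact: Hypothesis \ref{hyp:H5}(i) gives compactness off the real axis, and strong $H_V$-smoothness (Hypothesis \ref{hyp:H5}(ii)–(iii)) together with standard limiting absorption arguments transport this to the boundary value. If $\mathrm{Id}-\mathrm{i} CR_V(\lambda-\mathrm{i}0)C^*$ were not invertible, the Fredholm alternative would produce a nonzero $v$ with $v=\mathrm{i} CR_V(\lambda-\mathrm{i}0)C^*v$; one would then show that the corresponding vector $u:=R_V(\lambda-\mathrm{i}0)C^*v$ behaves as a resonant state for $H$ at $\lambda$, incompatible with the norm existence of $CR(\lambda-\mathrm{i}0)C^*$ and the absence of accumulation of eigenvalues from below (the Birman–Schwinger identity would force $CR(z)C^*$ to blow up as $z\to\lambda-\mathrm{i}0$).

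For $(3)\Rightarrow(1)$, invertibility is open: by strong smoothness, $\mu\mapsto CR_V(\mu-\mathrm{i}0)C^*$ is norm-continuous on $\mathring{\Lambda}$, so $\mathrm{Id}-\mathrm{i} CR_V(\mu-\mathrm{i}0)C^*$ remains invertible on a compact interval $K_\lambda\subset\mathring{\Lambda}$ containing $\lambda$, and the Birman–Schwinger identity yields uniform norm-existence of $CR(\mu-\mathrm{i}0)C^*$ on $K_\lambda$. Absence of accumulation of eigenvalues on $K_\lambda$ follows by extracting from a putative sequence of eigenvectors $u_n$ of $H$ with eigenvalues $\lambda_n\to\lambda$, $\mathrm{Im}(\lambda_n)<0$, the vectors $v_n:=Cu_n$ (which must be nonzero by Proposition \ref{prop:Hb}-style arguments) and showing they yield an element in the kernel of $\mathrm{Id}-\mathrm{i} CR_V(\lambda-\mathrm{i}0)C^*$, contradicting $(3)$.

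The equivalence $(3)\Leftrightarrow(4)$ is the crux and the main obstacle. Here I would invoke the generalized Kuroda representation formula proven in \cite{FaNi18_01}, which expresses the fiber scattering matrix in the form
\begin{equation*}
S(\lambda) = \mathrm{Id}_\mathcal{M} - 2\pi\mathrm{i}\, \Gamma_0(\lambda)\bigl(V - \mathrm{i} C^*C - \mathrm{i} C^*\bigl[\mathrm{Id} - \mathrm{i} CR_V(\lambda-\mathrm{i}0)C^*\bigr]^{-1}CR_V(\lambda-\mathrm{i}0)(V - \mathrm{i} C^*C)\bigr)\Gamma_0(\lambda)^*,
\end{equation*}
or a structurally equivalent expression in which the operator $\mathrm{Id}-\mathrm{i} CR_V(\lambda-\mathrm{i}0)C^*$ appears as the unique non-trivial bounded-inverse factor, and $\Gamma_0(\lambda)$ is the trace operator coming from $\mathcal{F}_0$. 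Because $S(\lambda)-\mathrm{Id}$ is compact (previous proposition), Fredholm alternative reduces $(4)$ to injectivity of $S(\lambda)$, and a direct calculation shows that $\mathrm{Ker}\,S(\lambda)\neq\{0\}$ if and only if $\mathrm{Id}-\mathrm{i} CR_V(\lambda-\mathrm{i}0)C^*$ is not invertible. The delicate point, and where I expect the real work, is tracking the compact perturbation $S(\lambda)-\mathrm{Id}$ through the stationary formula and verifying that the kernel correspondence is bijective — essentially because any element of $\mathrm{Ker}\,S(\lambda)$ lifts to an outgoing resonance which, via the Birman–Schwinger principle, produces a nontrivial element of $\mathrm{Ker}(\mathrm{Id}-\mathrm{i} CR_V(\lambda-\mathrm{i}0)C^*)$, and conversely.
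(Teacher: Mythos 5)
First, note that the paper does not actually prove Theorem \ref{thm:equiv}: the proof is delegated wholesale to \cite{FaNi18_01} (Theorem 2.9 and Lemma 4.1 there). Your outline does follow the route taken in that reference --- the factorization $C R(z) C^* = [\mathrm{Id} - \mathrm{i} C R_V(z) C^*]^{-1} C R_V(z) C^*$ coming from the second resolvent identity, combined with a Kuroda-type stationary representation of $S(\lambda)$ --- and the implications $(1)\Rightarrow(2)$ and $(3)\Rightarrow(1)$ are essentially complete as you describe them. Two of your steps, however, have genuine gaps.

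For $(2)\Rightarrow(3)$: non-invertibility of $\mathrm{Id}-\mathrm{i}K$, with $K := CR_V(\lambda-\mathrm{i}0)C^*$, does \emph{not} by itself force $[\mathrm{Id}-\mathrm{i}CR_V(z)C^*]^{-1}CR_V(z)C^*$ to blow up as $z\to\lambda-\mathrm{i}0$, since the singular directions of the inverse could in principle be annihilated by the right-hand factor; your ``resonant state'' heuristic does not exclude this. A correct (and simpler) argument passes to the adjoint: taking the norm limit in $(\mathrm{Id}-\mathrm{i}K)T = K$ with $T := CR(\lambda-\mathrm{i}0)C^*$ (the limit exists by hypothesis $(2)$, the non-accumulation condition guaranteeing that $R(\lambda - \mathrm{i}\varepsilon)$ is defined for small $\varepsilon$), any $w$ with $(\mathrm{Id}+\mathrm{i}K^*)w=0$ satisfies $\langle w, Ku\rangle = \langle(\mathrm{Id}+\mathrm{i}K^*)w, Tu\rangle = 0$ for all $u$, hence $K^*w=0$ and therefore $w=0$; since $K$ is compact, $\mathrm{Id}-\mathrm{i}K$ is then invertible by the Fredholm alternative. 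For $(3)\Leftrightarrow(4)$: you correctly identify this as the crux, but you do not prove it. The displayed representation formula is guessed (``or a structurally equivalent expression''), and the assertion that $\mathrm{Ker}\,S(\lambda)\neq\{0\}$ if and only if $\mathrm{Id}-\mathrm{i}K$ is non-invertible is precisely the content of \cite[Lemma 4.1]{FaNi18_01}, not a ``direct calculation.'' Carrying it out requires the exact form of the formula (in which $S(\lambda)$ factors as the unitary scattering matrix of the pair $(H_V,H_0)$ times a factor of the form $\mathrm{Id} - 2\mathrm{i}\pi\,\Gamma(\lambda) C^*[\mathrm{Id}-\mathrm{i}K]^{-1}C\,\Gamma(\lambda)^*$, with $\Gamma(\lambda)$ the trace operator associated with $\mathcal{F}_\pm$), the algebraic fact that $\mathrm{Id}-BA$ is invertible iff $\mathrm{Id}-AB$ is, and the Stone-formula identity relating $\mathrm{Im}\,K$ to $\pi C\Gamma(\lambda)^*\Gamma(\lambda) C^*$. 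Without these ingredients the ``bijective kernel correspondence'' you invoke is an assertion, not an argument; as it stands the proposal is a correct roadmap rather than a proof.
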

\begin{proof}
See \cite[Theorem 2.9 and Lemma 4.1]{FaNi18_01}.
\end{proof}
In general, it is a difficult problem to identify explicitly the spectral singularities of a given dissipative operator. Nevertheless, one can show that the set of spectral singularities is not too large in the following sense.
\begin{proposition}
Suppose that Hypotheses \ref{hyp:H0}--\ref{hyp:H5} hold. Then the set of spectral singularities of $H$ is a closed subset of $\Lambda$ of Lebesgue measure $0$.
\end{proposition}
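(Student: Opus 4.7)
My plan has two parts, corresponding to the two assertions.

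\emph{Closedness.} By Theorem \ref{thm:equiv}, a point $\lambda \in \mathring{\Lambda}$ is a regular spectral point of $H$ if and only if $T(\lambda) := \mathrm{Id} - \mathrm{i} C R_V(\lambda - \mathrm{i}0) C^*$ is invertible in $\mathcal{L}(\mathcal{H})$. The strong smoothness built into Hypothesis \ref{hyp:H5}(ii)--(iii), combined with the standard representation of the boundary values of the weighted resolvent through the spectral transform $\mathcal{F}_\pm$, implies that the limits $C R_V(\lambda - \mathrm{i}0) C^*$ exist in norm and depend H\"older-continuously on $\lambda$ on every compact $X \Subset \mathring{\Lambda}$. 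Hence $\lambda \mapsto T(\lambda)$ is a norm-continuous map from $\mathring{\Lambda}$ into $\mathcal{L}(\mathcal{H})$, and since the invertible elements form an open subset of $\mathcal{L}(\mathcal{H})$, the regular set is open and the set of spectral singularities is relatively closed in $\mathring{\Lambda}$. For boundary points of $\Lambda$ (and the point at infinity, when $\Lambda$ is unbounded), Definition \ref{def:spec-sing}(ii)--(iii) already phrases regularity as an open condition in $\lambda$, so the full singularity set is closed in $\Lambda$.

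\emph{Measure zero.} Because $\Lambda \setminus \mathring{\Lambda}$ is at most countable, it is enough to bound the singularity set inside $\mathring{\Lambda}$. For $z$ in the open lower half-plane $\mathbb{C}^- := \{\mathrm{Im}(z) < 0\}$, put
\begin{equation*}
F(z) := \mathrm{Id} - \mathrm{i} C R_V(z) C^* .
\end{equation*}
By Hypothesis \ref{hyp:H5}(i), $F(z) - \mathrm{Id}$ is compact, and $F$ is holomorphic on $\mathbb{C}^-$. From $\|C R_V(z) C^*\| \le \|C\|^2 / |\mathrm{Im}(z)|$, the operator $F(z)$ is invertible whenever $|\mathrm{Im}(z)|$ is large, so the analytic Fredholm theorem yields that $F^{-1}$ is meromorphic on $\mathbb{C}^-$ with poles exactly at the eigenvalues of $H$ lying there. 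Hypothesis \ref{hyp:H5}(ii)--(iii) further ensures that $F$ extends H\"older-continuously in norm to $\mathring{\Lambda}$, and by Theorem \ref{thm:equiv}(3) the singularity set in $\mathring{\Lambda}$ is exactly the set of boundary points at which $F(\lambda - \mathrm{i}0)$ is not invertible.

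To conclude, I would localize to a compact interval $X \Subset \mathring{\Lambda}$ and reduce the operator-valued non-invertibility question to a scalar one. Approximating $K(z) := -\mathrm{i} C R_V(z) C^*$ uniformly on a complex neighborhood of $X$ inside $\overline{\mathbb{C}^-}$ by a finite-rank analytic family $K_n(z)$, the invertibility of $F(\lambda - \mathrm{i}0)$ becomes equivalent (up to an arbitrarily small error set) to the non-vanishing of the scalar holomorphic boundary value $d_n(\lambda) := \det\bigl(\mathrm{Id} + K_n(\lambda - \mathrm{i}0)\bigr)$ taken on the finite-dimensional range of $K_n$. The classical Luzin--Privalov boundary uniqueness theorem then forces the zero set of $d_n$ on $X$ to have Lebesgue measure zero, for otherwise $d_n$ would vanish identically on $\mathbb{C}^-$, contradicting invertibility of $F$ deep in the lower half-plane. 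Exhausting $\mathring{\Lambda}$ by compact intervals yields the global claim.

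\emph{Main obstacle.} The hardest step is the operator-to-scalar reduction coupled with boundary uniqueness: converting the analytic Fredholm theorem, which controls $\mathbb{C}^-$ itself, into a measure-theoretic statement on the boundary requires both the H\"older (and not merely continuous) boundary regularity furnished by Hypothesis \ref{hyp:H5} and a careful Luzin--Privalov-type scalar uniqueness argument. This is the point where the strong smoothness of $C$ with respect to $H_V$ is genuinely used.
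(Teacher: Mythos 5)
The paper gives no in-text proof of this proposition: it defers entirely to \cite[Proposition 4.2]{FaNi18_01}. So your argument can only be judged on its own merits. Your overall strategy is the standard (and correct) one: closedness from the fact that regularity is an open condition -- indeed this already follows directly from Definition \ref{def:spec-sing}, since each clause posits a whole neighborhood on which the defining property holds, so your detour through Theorem \ref{thm:equiv} and norm-continuity of $\lambda \mapsto \mathrm{Id} - \mathrm{i} C R_V(\lambda - \mathrm{i}0)C^*$ is sound but not even needed -- and measure zero from analytic Fredholm theory for $F(z) = \mathrm{Id} - \mathrm{i} C R_V(z) C^*$ in the lower half-plane combined with a Privalov-type boundary uniqueness theorem applied to a scalar determinant.

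There is, however, a genuine gap in the operator-to-scalar reduction. You assert that after approximating $K(z) = -\mathrm{i} C R_V(z)C^*$ uniformly by a finite-rank analytic family $K_n$, the invertibility of $F(\lambda - \mathrm{i}0)$ becomes equivalent ``up to an arbitrarily small error set'' to the non-vanishing of $d_n(\lambda) = \det\bigl(\mathrm{Id} + K_n(\lambda - \mathrm{i}0)\bigr)$. This step fails as stated: an arbitrarily small norm perturbation can turn a non-invertible operator of the form $\mathrm{Id} + \text{compact}$ into an invertible one, so the non-invertibility sets of $\mathrm{Id} + K$ and $\mathrm{Id} + K_n$ need not be close in any measure-theoretic sense; you could well find $d_n \neq 0$ almost everywhere and learn nothing about where $F(\lambda - \mathrm{i}0)$ is singular. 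The correct reduction keeps $K$ exact and localizes: near a fixed $\lambda_0$ choose a \emph{constant} finite-rank operator $F_0$ with $\|K(z) - F_0\| < 1$ on a small closed half-disk (possible by compactness of $K(\lambda_0)$ and norm continuity), and factor $\mathrm{Id} + K(z) = \bigl(\mathrm{Id} + K(z) - F_0\bigr)\bigl(\mathrm{Id} + (\mathrm{Id} + K(z) - F_0)^{-1}F_0\bigr)$. The first factor is invertible by a Neumann series; the second is $\mathrm{Id}$ plus finite rank, so its invertibility is \emph{exactly} equivalent to the non-vanishing of a finite determinant $d(z)$, analytic in the open half-disk and continuous up to the boundary, to which the uniqueness theorem applies (using analytic Fredholm to guarantee $d \not\equiv 0$). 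A covering argument then finishes the interior case. Finally, your dismissal of $\Lambda \setminus \mathring{\Lambda}$ as ``at most countable'' is unjustified for a general closed $\Lambda$ (a closed set with empty interior can have positive measure); this is harmless for $H_0 = -\Delta$, where $\Lambda \setminus \mathring{\Lambda} = \{0\}$, but should be flagged under the abstract hypotheses.
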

\begin{proof}
See \cite[Proposition 4.2]{FaNi18_01}.
\end{proof}
In Section \ref{section:4}, for the nuclear optical model, we will show that the set of spectral singularities is generically empty.

Recall from Propositions \ref{prop:wave1} and \ref{prop:scatt2} that the scattering operators $S(H,H_0)$ and $S(H^*,H_0)$ are invertible if and only if the wave operators $W_-(H,H_0)$ and $W_+(H,H_0)$ have closed ranges. The following proposition shows that the study of spectral singularities is also relevant in order to answer the question of the invertibility of the scattering operators.
\begin{proposition}
Suppose that Hypotheses \ref{hyp:H0}--\ref{hyp:H5} hold. Suppose in addition that $\Lambda \setminus \mathring{\Lambda}$ is finite and that all $\lambda \in \Lambda \setminus \mathring{\Lambda}$ are regular in the sense of Definition \ref{def:spec-sing} (if $\Lambda$ is right-unbounded, we also assume that $+\infty$ is regular). Then the following conditions are equivalent:
\begin{enumerate}[label=\rm{(\arabic*)},leftmargin=0.8cm,itemsep=1pt]
\item $S(H,H_0)$ is invertible in $\mathcal{L}( \mathcal{H} )$,
\item $S(H^*,H_0)$ is invertible in $\mathcal{L}( \mathcal{H} )$,
\item $H$ has no spectral singularities in $\mathring{\Lambda}$.
\end{enumerate}
\end{proposition}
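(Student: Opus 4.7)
The plan is to prove the three equivalences by reducing everything to the fiber-decomposition of $S(H,H_0)$ and invoking Theorem \ref{thm:equiv}. The equivalence $(1)\Leftrightarrow(2)$ is the easy half: by Proposition \ref{prop:scatt1}, $S(H^*,H_0)=S(H,H_0)^*$, and an operator in $\mathcal{L}(\mathcal{H})$ is invertible if and only if its adjoint is. So the real work is $(1)\Leftrightarrow(3)$.

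For the direction $(3)\Rightarrow(1)$, I would start from the fiber decomposition
\begin{equation*}
\mathcal{F}_0 S(H,H_0)\mathcal{F}_0^* = \int_\Lambda^\oplus S(\lambda)\,\mathrm{d}\lambda,
\end{equation*}
so that invertibility of $S(H,H_0)$ reduces to showing that $\lambda\mapsto S(\lambda)^{-1}$ is defined for a.e.\ $\lambda\in\Lambda$ and essentially bounded. Assuming no spectral singularities in $\mathring{\Lambda}$, Theorem \ref{thm:equiv}(4) gives the pointwise invertibility of $S(\lambda)$ on $\mathring{\Lambda}$, equivalently (by Theorem \ref{thm:equiv}(3)) the invertibility of $\mathrm{Id}-\mathrm{i} CR_V(\lambda-\mathrm{i}0)C^*$ for every $\lambda\in\mathring{\Lambda}$. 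The key step is then to propagate this pointwise statement to a uniform bound $\sup_{\lambda\in\mathring{\Lambda}}\|S(\lambda)^{-1}\|<\infty$. On every compact $K\Subset\mathring{\Lambda}$ this uses the Hölder continuity $\lambda\mapsto S(\lambda)$ guaranteed by Hypotheses \ref{hyp:H4}--\ref{hyp:H5} (together with the representation formula behind Theorem \ref{thm:equiv}), which turns pointwise invertibility into continuity of $\lambda\mapsto S(\lambda)^{-1}$, hence into a uniform bound on $K$. The added assumption that $\Lambda\setminus\mathring{\Lambda}$ is finite with all such points regular, plus regularity at $+\infty$ when relevant, controls what happens near the boundary of $\Lambda$: by Definition \ref{def:spec-sing}(ii)--(iii), the weighted resolvent $CR(\mu-\mathrm{i}0)C^*$ (and hence $CR_V(\mu-\mathrm{i}0)C^*$, via the resolvent identity and boundedness of $C^*C$) stays bounded in a neighborhood of each boundary point and near $+\infty$. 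Feeding this into the Kuroda-type formula expressing $S(\lambda)-\mathrm{Id}$ through $(\mathrm{Id}-\mathrm{i}CR_V(\lambda-\mathrm{i}0)C^*)^{-1}$ yields uniform boundedness of $\|S(\lambda)^{-1}\|$ on a neighborhood of each boundary point, and a compactness argument covers the remaining compact part of $\mathring{\Lambda}$. Together these give $\lambda\mapsto S(\lambda)^{-1}\in L^\infty(\Lambda;\mathcal{L}(\mathcal{M}))$, so the direct integral is invertible and (1) follows.

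For the converse $(1)\Rightarrow(3)$, assume $S(H,H_0)\in\mathcal{L}(\mathcal{H})$ is invertible. The fiber decomposition then forces $S(\lambda)$ to be invertible for almost every $\lambda\in\Lambda$. I would promote this ``a.e.''\ to ``everywhere in $\mathring{\Lambda}$'' by using the previous proposition: $S(\lambda)-\mathrm{Id}$ is compact and $\lambda\mapsto S(\lambda)$ is norm-continuous on $\mathring{\Lambda}$, so by the analytic Fredholm alternative the set $\{\lambda\in\mathring{\Lambda}: S(\lambda)\text{ not invertible}\}$ is closed with empty interior; being of a.e.\ zero measure, it is forced to be empty. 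Hence $S(\lambda)$ is invertible for every $\lambda\in\mathring{\Lambda}$, and Theorem \ref{thm:equiv}(4) yields (3).

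The main obstacle is the first direction, specifically the uniform bound $\sup\|S(\lambda)^{-1}\|<\infty$: pointwise invertibility on the open set $\mathring{\Lambda}$ is cheap via Theorem \ref{thm:equiv}, but a priori $\|S(\lambda)^{-1}\|$ could blow up toward $\Lambda\setminus\mathring{\Lambda}$ or at infinity. It is precisely for this reason that the additional hypotheses on the boundary (finiteness and regularity of $\Lambda\setminus\mathring{\Lambda}$, and regularity of $+\infty$) have been included in the statement; the technical heart of the proof is to translate those regularity statements, phrased in terms of $CR(\mu-\mathrm{i}0)C^*$, into a quantitative bound on $\|S(\lambda)^{-1}\|$ through the Kuroda-type representation formula used in Theorem \ref{thm:equiv}.
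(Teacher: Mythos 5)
The paper offers no proof of its own here --- it simply cites \cite{FaNi18_01} --- so your proposal has to be judged against the argument that citation must contain, and your route is the expected one: $(1)\Leftrightarrow(2)$ from $S(H^*,H_0)=S(H,H_0)^*$, and $(1)\Leftrightarrow(3)$ by reducing invertibility of the decomposable operator $S(H,H_0)$ to a uniform bound on $\|S(\lambda)^{-1}\|$, with the boundary hypotheses used exactly where you say they are needed. There is, however, one step that fails as written. In $(1)\Rightarrow(3)$ you argue that the set $\{\lambda\in\mathring{\Lambda}:\ S(\lambda)\ \text{not invertible}\}$ is ``closed with empty interior; being of a.e.\ zero measure, it is forced to be empty.'' That deduction is invalid: a closed null set with empty interior can be a single point, which is precisely the case you must rule out. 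The repair uses quantitative information you already have but did not invoke: since $S(H,H_0)^{-1}$ commutes with $H_0$, it is itself decomposable, so $\|S(\lambda)^{-1}\|\le\|S(H,H_0)^{-1}\|$ for a.e.\ $\lambda$. If some $\lambda_0\in\mathring{\Lambda}$ had $S(\lambda_0)$ non-invertible, then (Fredholmness of $\mathrm{Id}+\text{compact}$) there would be a unit vector $u$ with $S(\lambda_0)u=0$, and for a.e.\ nearby $\lambda$ one would get $1=\|S(\lambda)^{-1}S(\lambda)u\|\le\|S(H,H_0)^{-1}\|\,\|S(\lambda)-S(\lambda_0)\|\to 0$ by the norm continuity of $\lambda\mapsto S(\lambda)$, a contradiction.

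A second point is not a gap but a missing hinge that your $(3)\Rightarrow(1)$ direction silently relies on: Definition \ref{def:spec-sing}(ii)--(iii) controls $CR(\mu-\mathrm{i}0)C^*$, whereas Theorem \ref{thm:equiv} and the Kuroda-type formula are phrased through $\bigl(\mathrm{Id}-\mathrm{i}\,CR_V(\mu-\mathrm{i}0)C^*\bigr)^{-1}$. The bridge is the identity, obtained from the second resolvent equation for $H=H_V-\mathrm{i}C^*C$ and valid up to the boundary,
\begin{equation*}
\bigl(\mathrm{Id}-\mathrm{i}\,CR_V(z)C^*\bigr)^{-1}=\mathrm{Id}+\mathrm{i}\,CR(z)C^* ,
\end{equation*}
so that boundedness of $\mu\mapsto CR(\mu-\mathrm{i}0)C^*$ near $\Lambda\setminus\mathring{\Lambda}$ and at $+\infty$ is \emph{literally} the uniform invertibility you need there; your vaguer appeal to ``$CR_V$ via the resolvent identity'' points in the wrong direction, since $CR_V(\mu-\mathrm{i}0)C^*$ is already bounded by Hypothesis \ref{hyp:H5}(iii) and that alone says nothing about the inverse. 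With these two repairs the proposal is sound.
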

\begin{proof}
See \cite[Theorem 2.10]{FaNi18_01}.
\end{proof}
To conclude this section, we propose the following definition of the ``order'' of a spectral singularity of $H$. It will be relevant in the next section.
\begin{definition}[Order of a spectral singularity]\label{def:order-spec-sing}
We say that $\lambda \in \mathring{\Lambda}$ is a spectral singularity of $H$ of finite order if $\lambda$ is a spectral singularity of $H$ and there exist $\nu \in \mathbb{N}^*$ and a compact interval $K_\lambda$, whose interior contains $\lambda$, such that the limits
\begin{equation*}
 \lim_{ \varepsilon \downarrow 0 } ( \mu - \lambda )^\nu C R ( \mu - \mathrm{i} \varepsilon ) C^*
\end{equation*}
exist uniformly in $\mu \in K_\lambda$ in the norm topology of $\mathcal{L}( \mathcal{H} )$. The order $\nu_0$ of the spectral singularity $\lambda$ is then defined as the minimum of all $\nu \in \mathbb{N}^*$ such that the previous limit exists.
\end{definition}
As mentioned above, if one considers the nuclear optical model $H = - \Delta + V - \mathrm{i} C^* C$ with bounded and compactly supported potentials $V$ and $C$, then a spectral singularity corresponds to a resonance in the usual sense (see, e.g., \cite{DyZw19_01}). One can then verify that the order of a spectral singularity in the sense of Definition \ref{def:order-spec-sing} corresponds to the multiplicity of the corresponding resonance, see \cite[Section 6]{FaFr18_01}.

\subsection{Asymptotic completeness}\label{subsec:asymptotic}

We are interested in this section in the property of asymptotic completeness of the wave operators. In our context, this property can be defined as follows.
\begin{definition}[Asymptotic completeness]
The wave operators $W_-( H , H_0 )$ and $W_+( H^* , H_0 )$ are said to be asymptotically complete if their ranges coincide with the orthogonal complements of all generalized eigenstates of $H$ and $H^*$, respectively. In other words,
\begin{equation*}
 \mathrm{Ran}( W_-( H , H_0 ) ) = \big ( \mathcal{H}_{ \mathrm{b} }( H ) \oplus \mathcal{H}_{ \mathrm{p} }( H^* ) \big )^\perp, \quad \mathrm{Ran}( W_+( H^* , H_0 ) ) = \big ( \mathcal{H}_{ \mathrm{b} }( H ) \oplus \mathcal{H}_{ \mathrm{p} }( H ) \big )^\perp .
\end{equation*}
\end{definition}
With the alternative definition $\tilde{\mathcal{H}}_{\mathrm{ac}}(H)$ of the absolutely continuous spectral subspace of $H$ suggested at the end of Section \ref{subsec:spectrum}, we see that the asymptotic completeness of the wave operators is the statement that $\mathrm{Ran}( W_+( H , H_0 ) ) = \tilde{\mathcal{H}}_{\mathrm{ac}}(H)$ and $\mathrm{Ran}( W_+( H^* , H_0 ) ) = \tilde{\mathcal{H}}_{\mathrm{ac}}(H^*)$.

In \cite{FaFr18_01}, asymptotic completeness is proven under the following further assumption.
\begin{assumption}[Finiteness of the number of discrete eigenvalues and spectral singularities]\label{hyp:H6}$ $
\begin{enumerate}[label=\rm{(\roman*)},leftmargin=0.8cm,itemsep=1pt]
\item $H$ has at most finitely many (discrete) eigenvalues.
\item $H$ has at most finitely many spectral singularities in $\mathring{\Lambda}$ and each spectral singularity is of finite order.
\item $\Lambda \setminus \mathring{\Lambda}$ is finite and all $\lambda \in \Lambda \setminus \mathring{\Lambda}$ are regular. Moreover, if $\Lambda$ is right-unbounded, then $+\infty$ is regular.
\end{enumerate}
\end{assumption}
We then have the following result.
\begin{theorem}
Suppose that Hypotheses \ref{hyp:H0}--\ref{hyp:H6} hold. Then
\begin{equation*}
\mathcal{H}_{ \mathrm{p} }( H ) = \mathcal{H}_{ \mathrm{d} }( H ), \quad \mathcal{H}_{ \mathrm{p} }( H^* ) = \mathcal{H}_{ \mathrm{d} }( H^* ).
\end{equation*}
Moreover, 
\begin{align*}
& W_-( H , H_0 ) \text{ and } W_+( H^* , H_0 ) \text{ are asymptotically complete} \\
&\Longleftrightarrow \quad H \text{ has no spectral singularities in } \mathring{\Lambda}.
\end{align*}
If these equivalent conditions are satisfied, then
\begin{enumerate}[label=\rm{(\arabic*)},leftmargin=0.8cm,itemsep=1pt]
\item There is an $H$-invariant direct sum decomposition
\begin{equation*}
\mathcal{H} = \big \{ \mathcal{H}_{ \mathrm{b} }( H ) \oplus \mathcal{H}_{ \mathrm{p} }( H ) \big \} \oplus \big ( \mathcal{H}_{ \mathrm{b} }( H ) \oplus \mathcal{H}_{ \mathrm{p} }( H^* ) \big )^\perp ,
\end{equation*}
and the restriction of $H$ to $\big ( \mathcal{H}_{ \mathrm{b} }( H ) \oplus \mathcal{H}_{ \mathrm{p} }( H^* ) \big )^\perp$ is similar to $H_0$. An analogous statement holds for $H^*$.
\item The wave operators $W_+( H_0 , H )$ and $W_-( H_0, H^* )$ are surjective and their kernels are given by
\begin{equation*}
\mathrm{Ker} \, W_+( H_0 , H ) = \big ( \mathcal{H}_{\mathrm{b}}(H) \oplus \mathcal{H}_{\mathrm{p}}(H) \big )^\perp , \quad \mathrm{Ker} \, W_-( H_0 , H^* ) = \big ( \mathcal{H}_{\mathrm{b}}(H) \oplus \mathcal{H}_{\mathrm{p}}(H^*) \big )^\perp .
\end{equation*}
\item The scattering operators $S(H,H_0)$ and $S(H^*,H_0)$ are bijective.
\end{enumerate}
\end{theorem}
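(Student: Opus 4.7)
The plan is to prove the three parts in order, since the equivalence in (2) relies on the subspace identity (1), and the structural consequences (3) then follow from (2). The inclusion $\mathcal{H}_{\mathrm{p}}(H)\subseteq\mathcal{H}_{\mathrm{d}}(H)$ is Proposition \ref{prop:Hp}. For the converse, Hypothesis \ref{hyp:H6}(i) makes the total discrete Riesz projection $\pi_{\mathrm{disc}}:=\sum_{\lambda\in\sigma_{\mathrm{disc}}(H)}\pi_\lambda$ a well-defined bounded operator whose range equals $\mathcal{H}_{\mathrm{b}}(H)\oplus\mathcal{H}_{\mathrm{p}}(H)$ by Propositions \ref{prop:4} and \ref{prop:5}. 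Given $u\in\mathcal{H}_{\mathrm{d}}(H)$, Proposition \ref{prop:Hp} gives $u\perp\mathcal{H}_{\mathrm{b}}(H)$, so the task reduces to showing that $u':=(\mathrm{Id}-\pi_{\mathrm{disc}})u$ vanishes. On $\mathrm{Ker}(\pi_{\mathrm{disc}})$ the spectrum of $H$ is purely essential and equal to $\Lambda\subset\mathbb{R}$, and the Laplace-type identity $R(z)u'=\mathrm{i}\int_0^\infty e^{-\mathrm{i}tz}e^{-\mathrm{i}tH}u'\,\mathrm{d}t$ valid for $\mathrm{Im}(z)<0$, combined with the limiting absorption principle away from spectral singularities and the finite-order bound of Definition \ref{def:order-spec-sing} at the finitely many spectral singularities (both guaranteed by Hypothesis \ref{hyp:H6}(ii)--(iii)), yields a Stone-type representation of $u'$ as a contour integral of $R(z)u'$ along $\Lambda$. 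The decay $\|e^{-\mathrm{i}tH}u'\|\to 0$ then forces this integral to vanish, so $u'=0$. The parallel argument for $H^*$ gives $\mathcal{H}_{\mathrm{p}}(H^*)=\mathcal{H}_{\mathrm{d}}(H^*)$.

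Substituting these identities into Proposition \ref{prop:wave1} rewrites $\overline{\mathrm{Ran}\,W_-(H,H_0)}=(\mathcal{H}_{\mathrm{b}}(H)\oplus\mathcal{H}_{\mathrm{p}}(H^*))^\perp$ and analogously for $W_+(H^*,H_0)$, so asymptotic completeness is equivalent to closedness of these two ranges. Proposition \ref{prop:scatt2} identifies closed ranges with simultaneous invertibility of $S(H,H_0)$ and $S(H^*,H_0)$; the (unlabeled) proposition stated just before Definition \ref{def:order-spec-sing}, whose assumption on $\Lambda\setminus\mathring{\Lambda}$ is precisely Hypothesis \ref{hyp:H6}(iii), in turn identifies that invertibility with the absence of spectral singularities of $H$ in $\mathring{\Lambda}$. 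Concatenating these equivalences yields the biconditional in (2).

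Assume now the equivalent conditions hold. For consequence (1), the topological direct sum $\mathcal{H}=\mathrm{Ran}(\pi_{\mathrm{disc}})\oplus\mathrm{Ker}(\pi_{\mathrm{disc}})$ is a decomposition into $H$-invariant subspaces, and $\mathrm{Ker}(\pi_{\mathrm{disc}})=(\mathcal{H}_{\mathrm{b}}(H)\oplus\mathcal{H}_{\mathrm{p}}(H^*))^\perp$ because the Riesz projections of $H^*$ are the adjoints of those of $H$; on this complement $W_-(H,H_0)$ is now a bounded bijection intertwining $H$ with $H_0$, giving the similarity. Consequence (2) follows from Proposition \ref{prop:wave2}: the adjoint identities $W_+(H_0,H)^*=W_+(H^*,H_0)$ and $W_-(H_0,H^*)^*=W_-(H,H_0)$, together with injectivity of the right-hand sides from Proposition \ref{prop:wave1} and the closedness of their ranges obtained in (2), make the left-hand operators surjective via the closed range theorem; the kernels then follow from Proposition \ref{prop:wave2} after replacing $\mathcal{H}_{\mathrm{d}}$ by $\mathcal{H}_{\mathrm{p}}$ via (1). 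Consequence (3) is then immediate from $S(H,H_0)=W_+(H^*,H_0)^*W_-(H,H_0)$ being a composition of bijections. The main obstacle throughout is the converse inclusion in (1): extracting $u'=0$ from the decay of $\|e^{-\mathrm{i}tH}u'\|$ requires a delicate contour deformation near the spectral singularities that balances the finite-order growth of $CR(\mu-\mathrm{i}0)C^*$ against the dissipative decay, and it is precisely the combination of finiteness and finite order in Hypothesis \ref{hyp:H6} that makes this contour argument succeed.
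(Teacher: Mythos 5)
Your overall architecture is reasonable, and it should be said that the paper itself gives no proof of this theorem beyond the citation of \cite{FaFr18_01,FaNi18_01}, so the only possible comparison is with the arguments of those references. The way you chain the equivalence together --- substituting $\mathcal{H}_{\mathrm{p}}=\mathcal{H}_{\mathrm{d}}$ into Proposition \ref{prop:wave1}, identifying asymptotic completeness with closedness of the ranges, and passing through Proposition \ref{prop:scatt2} and the proposition preceding Definition \ref{def:order-spec-sing} --- is correct, and your deductions of consequences (1)--(3) from injectivity, intertwining, the closed range theorem and the factorization of $S(H,H_0)$ are all sound.

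The genuine gap is in the step that carries all the weight, namely the inclusion $\mathcal{H}_{\mathrm{d}}(H)\subseteq\mathcal{H}_{\mathrm{b}}(H)\oplus\mathcal{H}_{\mathrm{p}}(H)$. After reducing to $u'=(\mathrm{Id}-\pi_{\mathrm{disc}})u$, you assert that a Stone-type contour representation of $u'$ combined with $\|e^{-\mathrm{i}tH}u'\|\to0$ ``forces the integral to vanish.'' As stated this is not an inference: $u'$ is a fixed vector, its contour representation is a $t$-independent identity, and decay of the orbit does not act on it. (Compare the self-adjoint case with purely absolutely continuous spectrum, where $\langle v,e^{-\mathrm{i}tH}u'\rangle\to0$ for every $v$ by Riemann--Lebesgue while $\|e^{-\mathrm{i}tH}u'\|$ is constant.) What is actually needed is a quantitative lower bound of the form $\liminf_{t\to\infty}\|e^{-\mathrm{i}tH}u'\|\ge c\|u'\|$ on $\mathrm{Ker}(\pi_{\mathrm{disc}})$. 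When $H$ has no spectral singularities this follows from the similarity of the restriction of $H$ to $\mathrm{Ker}(\pi_{\mathrm{disc}})$ with $H_0$, obtained from Kato smoothness of $C$ with respect to $H$, i.e.\ from the uniform boundedness of $CR(\mu-\mathrm{i}0)C^*$. But the identity $\mathcal{H}_{\mathrm{p}}(H)=\mathcal{H}_{\mathrm{d}}(H)$ is asserted \emph{unconditionally} under Hypothesis \ref{hyp:H6}, hence also in the presence of finitely many finite-order spectral singularities, where no such similarity is available; in \cite{FaFr18_01} this is handled by introducing suitably regularized wave operators whose closed-range properties survive finite-order singularities, and some substitute of this kind is indispensable for your argument to close. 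A secondary remark: the kernel identities you quote should read $\mathrm{Ker}\,W_+(H_0,H)=\mathcal{H}_{\mathrm{b}}(H)\oplus\mathcal{H}_{\mathrm{d}}(H)$ without the orthogonal complement (the displayed formula in Proposition \ref{prop:wave2} appears to be a typo), since $\mathrm{Ker}(T)=\overline{\mathrm{Ran}(T^*)}^{\perp}$ and every $u\in\mathcal{H}_{\mathrm{d}}(H)$ is trivially annihilated by $W_+(H_0,H)$; your own closed-range argument in consequence (2) in fact produces the corrected version.
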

\begin{proof}
See \cite{FaFr18_01,FaNi18_01}.
\end{proof}

\subsection{Application to the nuclear optical model}\label{sec:nuclear}

Now, we describe the main consequences of the abstract results previously stated for the \emph{nuclear optical model}. This model was introduced in \cite{FePoWe54_01} as a phenomenological model describing the possible absorption and elastic scattering of a neutron -- or a proton -- at a nucleus. In this context, the pseudo-Hamiltonian $H$ considered previously is given by a dissipative Schrödinger operator. See \cite{Ho71_01,Fe92_01} for a thorough exposition of various versions of the model and their physical interpretations, and \cite{DiCh19_01} for more recent developments.

Hence, in this section, we focus on the nuclear optical model, setting
\begin{equation*}
H_0 = - \Delta, \quad H_V = - \Delta + V(x) , \quad H = H_V - \mathrm{i} W(x) ,
\end{equation*}
on $L^2( \mathbb{R}^3 )$. We recall that the unit-sphere in $\mathbb{R}^3$ is denoted by $S^2$.  We refer to \cite{FaFr18_01,FaNi18_01} for details showing that the abstract Hypotheses \ref{hyp:H0}--\ref{hyp:H6} are indeed satisfied in the case of the nuclear optical model, under the conditions on the potentials imposed in the following theorems.
\begin{theorem}
Suppose that
\begin{enumerate}[label=\rm{(\roman*)},leftmargin=0.8cm,itemsep=1pt]
\item $V$ is real-valued, $V \in \mathrm{C}^2( \mathbb{R}^3 )$ and there exists $\rho > 3$ such that, for all $|\alpha| \le 2$, $\partial^\alpha V ( x ) = \mathcal{O}( \langle x \rangle^{-\rho-|\alpha|} )$, $|x| \to \infty$,
\item $W$ is non-negative, $W(x)>0$ on a non-trivial open set and there exists $\delta > 2$ such that $W( x ) = \mathcal{O}( \langle x \rangle^{-\delta} )$, $|x| \to \infty$,
\item $0$ is neither an eigenvalue nor a resonance of $H_V$.
\end{enumerate}
Then, for all $\lambda > 0$,
\begin{equation*}
S( \lambda ) \text{ is invertible in } \mathcal{L}( L^2( S^2 ) ) \quad \Longleftrightarrow \quad \lambda \text{ is not a spectral singularity of } H.
\end{equation*}
Moreover,
\begin{equation*}
S( H , H_0 ) \text{ is invertible in } \mathcal{L}( L^2( \mathbb{R}^3 ) ) \quad \Longleftrightarrow \quad H \text{ has no spectral singularities in } (0 , \infty) ,
\end{equation*}
and if these conditions hold, then $\mathrm{Ran} \, W_-( H , H_0 ) = \mathcal{H}_{\mathrm{d}}(H^*)^\perp$.
\end{theorem}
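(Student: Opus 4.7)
The plan is to reduce the theorem to a direct application of Theorem \ref{thm:equiv}, the proposition characterizing invertibility of $S(H, H_0)$, and the asymptotic completeness theorem from Section \ref{section:3}, by verifying that Hypotheses \ref{hyp:H0}--\ref{hyp:H6} (together with \ref{hyp:H1bis}) hold for $H_0 = -\Delta$, $V$ and $C = W^{1/2}$ on $L^2(\mathbb{R}^3)$ under the stated assumptions. The spectral decomposition of $-\Delta$ identifies $\mathcal{M}$ with $L^2(S^2)$ and $\Lambda$ with $[0, \infty)$, so $\mathring{\Lambda} = (0, \infty)$ and $\Lambda \setminus \mathring{\Lambda} = \{0\}$. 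Most of the abstract assumptions reduce to classical facts about short-range Schr\"odinger operators: relative compactness of $V$ and $C$ with respect to $H_0$ and the absence of positive eigenvalues and singular continuous spectrum of $H_V$ follow from the decay of $V$; existence and asymptotic completeness of $W_\pm(H_V, H_0)$ are given by Agmon's theory under $\rho > 1$; Kato-smoothness (Hypothesis \ref{hyp:H3}) and strong smoothness (Hypotheses \ref{hyp:H4}--\ref{hyp:H5}) follow from the limiting absorption principle for $H_0$ and $H_V$, using $\rho > 3$ and $\delta > 2$, via the factorizations $V = |V|^{1/2} \mathrm{sgn}(V) |V|^{1/2}$ and $C = W^{1/2}$.

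The nontrivial part of the verification concerns Hypothesis \ref{hyp:H6}. Finiteness of the discrete eigenvalues of $H$ in the lower half-plane and of the spectral singularities in $\mathring{\Lambda}$, along with their finite order, follow from the meromorphic Fredholm structure of the operator $\mathrm{Id} - \mathrm{i} C R_V(\lambda - \mathrm{i}0) C^*$, whose non-invertibility characterizes spectral singularities by Theorem \ref{thm:equiv}(3). Regularity of the threshold $\lambda = 0$ is exactly the role played by assumption (iii): absence of a zero eigenvalue or zero-energy resonance of $H_V$ yields a bounded boundary value for $CR(\mu - \mathrm{i}0)C^*$ at $0$ through the standard threshold resolvent expansion. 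Regularity at $+\infty$ is a consequence of the decay of $V$ and $W$ combined with high-energy estimates on $R_0(z)$. Once all this is in place, Theorem \ref{thm:equiv} yields the first equivalence for fixed $\lambda \in (0,\infty)$, and the proposition characterizing invertibility of $S(H, H_0)$ yields the second.

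For the final range identity, assume $H$ has no spectral singularities in $(0, \infty)$; then the asymptotic completeness theorem gives both $\mathrm{Ran}(W_-(H, H_0)) = (\mathcal{H}_{\mathrm{b}}(H) \oplus \mathcal{H}_{\mathrm{p}}(H^*))^\perp$ and $\mathcal{H}_{\mathrm{p}}(H^*) = \mathcal{H}_{\mathrm{d}}(H^*)$. The remaining input, specific to the model, is the vanishing of $\mathcal{H}_{\mathrm{b}}(H)$: if $u \in \mathcal{D}(H)$ satisfies $Hu = \lambda u$ with $\lambda \in \mathbb{R}$, then
\begin{equation*}
0 = \mathrm{Im}\langle u, Hu\rangle = -\|W^{1/2} u\|^2 ,
\end{equation*}
so $u$ vanishes on the nonempty open set where $W > 0$. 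Since $V \in C^2$ is bounded, $u \in H^2(\mathbb{R}^3)$ satisfies $(-\Delta + V - \lambda) u = 0$, and the classical unique continuation principle of H\"ormander forces $u \equiv 0$. Substituting $\mathcal{H}_{\mathrm{b}}(H) = \{0\}$ and $\mathcal{H}_{\mathrm{p}}(H^*) = \mathcal{H}_{\mathrm{d}}(H^*)$ into the range identity yields $\mathrm{Ran}(W_-(H, H_0)) = \mathcal{H}_{\mathrm{d}}(H^*)^\perp$, as claimed. The principal obstacle lies in the verification of Hypothesis \ref{hyp:H6} (especially the finite order of spectral singularities and regularity at the threshold), since the remaining steps are either classical or formal consequences of the abstract results.
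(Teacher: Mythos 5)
Your reduction of the first two assertions to Theorem \ref{thm:equiv} and to the proposition characterizing invertibility of $S(H,H_0)$ is the intended route (the paper itself only cites \cite{FaFr18_01,FaNi18_01} for this theorem, but its structure makes clear that the proof consists of verifying Hypotheses \ref{hyp:H0}--\ref{hyp:H5} together with regularity of the threshold $0$ and of $+\infty$, exactly as you outline). The unique continuation argument giving $\mathcal{H}_{\mathrm{b}}(H)=\{0\}$ also matches the remark made after Theorem \ref{thm:cpct_support}.

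There is, however, a genuine problem with your treatment of the last assertion and of Hypothesis \ref{hyp:H6}. Under the stated hypotheses the potentials only decay polynomially ($\rho>3$, $\delta>2$); they are not compactly supported. For such potentials the boundary values $\lambda\mapsto C R_V(\lambda-\mathrm{i}0)C^*$ are H\"older continuous but not analytic, so the meromorphic Fredholm argument you invoke does not apply, and there is no known guarantee that $H$ has finitely many discrete eigenvalues or finitely many spectral singularities of finite order. The paper is explicit that finiteness of the spectral singularities is obtained from resonance theory \emph{for compactly supported} $V,C$ (see the discussion preceding Lemma \ref{lm:dense} and Theorem \ref{thm:cpct_support}); this is precisely why the present theorem states its conclusion in terms of $\mathcal{H}_{\mathrm{d}}(H^*)^\perp$ rather than $\mathcal{H}_{\mathrm{p}}(H^*)^\perp$, reserving the identification $\mathcal{H}_{\mathrm{p}}(H^*)=\mathcal{H}_{\mathrm{d}}(H^*)$ for the compactly supported case. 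Consequently you cannot appeal to the asymptotic completeness theorem of Section \ref{subsec:asymptotic}, which requires Hypothesis \ref{hyp:H6}. The correct (and shorter) route to the range identity is Proposition \ref{prop:scatt2}: invertibility of $S(H,H_0)$ is equivalent to $\mathrm{Ran}\,W_-(H,H_0)=\big(\mathcal{H}_{\mathrm{b}}(H)\oplus\mathcal{H}_{\mathrm{d}}(H^*)\big)^\perp$, and inserting $\mathcal{H}_{\mathrm{b}}(H)=\{0\}$ gives $\mathrm{Ran}\,W_-(H,H_0)=\mathcal{H}_{\mathrm{d}}(H^*)^\perp$ directly, with no use of Hypothesis \ref{hyp:H6} at all.
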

\begin{proof}
See \cite{FaFr18_01,FaNi18_01}.
\end{proof}
The set of bounded and compactly supported potentials from $\mathbb{R}^3$ to $\mathbb{C}$ is denoted by $L^\infty_{\mathrm{c}}( \mathbb{R}^3 )$. If we suppose that $V$ and $W$ belong to $L^\infty_{\mathrm{c}}( \mathbb{R}^3 )$, we have in addition the following more precise results.
\begin{theorem}\label{thm:cpct_support}
Suppose that
\begin{enumerate}[label=\rm{(\roman*)},leftmargin=0.8cm,itemsep=1pt]
\item $V$ is real-valued and $V \in L^\infty_{\mathrm{c}}( \mathbb{R}^3 )$.
\item $W$ is non-negative, $W(x)>0$ on a non-trivial open set and $W \in L^\infty_{\mathrm{c}}( \mathbb{R}^3 )$.
\item $0$ is neither an eigenvalue nor a resonance of $H_V$.
\end{enumerate}
Then, $\mathcal{H}_{\mathrm{p}}(H) = \mathcal{H}_{\mathrm{d}}(H)$. Moreover,
\begin{align*}
W_-( H , H_0 ) \text{ is asymptotically complete} \quad &\Longleftrightarrow \quad \mathrm{Ran} \, W_-( H , H_0 ) = \mathcal{H}_{\mathrm{p}}(H^*)^\perp \\
&\Longleftrightarrow \quad H \text{ has no spectral singularities in } ( 0 , \infty ).
\end{align*}
If these conditions hold, then
\begin{enumerate}[label=\rm{(\arabic*)},leftmargin=0.8cm,itemsep=1pt]
\item $S( H , H_0 )$ is invertible in $\mathcal{L}( L^2 ( \mathbb{R}^3 ) )$,
\item For all $\lambda > 0$, $S(\lambda)$ is invertible in $\mathcal{L}( L^2( S^2 ) )$,
\item The restriction of $H$ to $\mathcal{H}_{\mathrm{p}}(H^*)^\perp$ is similar to $H_0$.
\end{enumerate}
\end{theorem}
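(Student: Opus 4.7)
The plan is to deduce this theorem from the main theorem of Section \ref{subsec:asymptotic} by checking that all of Hypotheses \ref{hyp:H0}--\ref{hyp:H6} are satisfied for the nuclear optical model under assumptions (i)--(iii). Once this is done, the identity $\mathcal{H}_{\mathrm{p}}(H) = \mathcal{H}_{\mathrm{d}}(H)$, the equivalence chain between asymptotic completeness of $W_-(H,H_0)$ and absence of spectral singularities in $(0,\infty)$, and conclusions (1)--(3) follow directly from that abstract statement, combined with Proposition \ref{prop:wave1} to rewrite $\mathcal{H}_{\mathrm{d}}(H^*)^\perp = \mathcal{H}_{\mathrm{p}}(H^*)^\perp$ and with Theorem \ref{thm:equiv} to pass from invertibility of $S(H,H_0)$ to invertibility of each scattering matrix $S(\lambda)$.

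For Hypotheses \ref{hyp:H0}--\ref{hyp:H5} I would appeal to classical tools of scattering theory for short-range Schr\"odinger operators. Boundedness and compact support of $V$ and $W$ yield Hypothesis \ref{hyp:H0} via Rellich's theorem; Kato's theorem together with unique continuation and the Birman--Schwinger bound give the absence of embedded eigenvalues and singular continuous spectrum, as well as finiteness of the negative spectrum of $H_V$, hence Hypothesis \ref{hyp:H1}; $\sigma(-\Delta)=[0,\infty)$ has constant infinite multiplicity (Hypothesis \ref{hyp:H1bis}); existence and completeness of $W_\pm(H_V,H_0)$ are Kato--Birman (Hypothesis \ref{hyp:H2}). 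To obtain Hypotheses \ref{hyp:H3}--\ref{hyp:H5} I would factor $V = G^* K G$ with $G = \langle x\rangle^{-s}$, $s>1/2$, and take $C = W^{1/2}$ (still of compact support), then invoke Agmon's limiting absorption principle for $H_V$ together with the Sobolev trace theorem on the energy shells $\{|\xi|^2 = \lambda\}$ to get the required H\"older regularity of the trace maps $\mathcal{F}_0[G \mathds{1}_X(H_0)]^*$ and $\mathcal{F}_\pm[C \mathds{1}_X(H_V)]^*$ on compact sets $X \Subset (0,\infty)$.

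The main obstacle is Hypothesis \ref{hyp:H6}. The crucial structural input provided by compact supports of $V$ and $W$ is that the sandwiched resolvent $C R_V(z) C^*$ admits a meromorphic continuation from $\{\mathrm{Im}(z)>0\}$ across $(0,\infty)$ into a neighborhood in the lower half-plane (really onto the Riemann surface of $\sqrt{z}$), because the integral kernel of $R_0(z)$ is explicit and analytic in $\sqrt{z}$, and $V$, $C$ localize it to a compact region. By Theorem \ref{thm:equiv}(3), the spectral singularities of $H$ in $(0,\infty)$ are exactly the $\lambda$ where $\mathrm{Id} - \mathrm{i} C R_V(\lambda - \mathrm{i} 0) C^*$ fails to be invertible; analytic Fredholm theory, applied to this meromorphic family of compact perturbations of the identity, produces a discrete set of such $\lambda$, each of finite order in the sense of Definition \ref{def:order-spec-sing}, giving H6(ii). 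The same argument extended to the lower half-plane shows that the non-real eigenvalues of $H$ form a locally finite set; a Jensen--Kato expansion of $R_V(z)$ near zero (using assumption (iii)) and the high-energy decay $\|G R_0(\lambda \pm \mathrm{i} 0) G\| = O(\lambda^{-1/2})$ rule out accumulation at $0$ and at $+\infty$, yielding the global finiteness in H6(i) and the regularity of $0$ and $+\infty$ required in H6(iii). Feeding this verification into the theorem of Section \ref{subsec:asymptotic} gives all the conclusions of the statement.
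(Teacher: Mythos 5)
Your strategy is essentially the paper's own: the paper proves nothing here beyond citing \cite{FaFr18_01,FaNi18_01}, and explicitly states that the theorem follows by verifying Hypotheses \ref{hyp:H0}--\ref{hyp:H6} for compactly supported $V,W$ (with the finiteness and finite order of spectral singularities coming from meromorphic continuation and the theory of resonances, exactly as you describe) and then invoking the abstract theorem of Section \ref{subsec:asymptotic}. The only small point you leave implicit is that $\mathcal{H}_{\mathrm{b}}(H)=\{0\}$ (via Proposition \ref{prop:Hb} and unique continuation on the open set where $W>0$), which is needed to reduce $\big(\mathcal{H}_{\mathrm{b}}(H)\oplus\mathcal{H}_{\mathrm{p}}(H^*)\big)^\perp$ to $\mathcal{H}_{\mathrm{p}}(H^*)^\perp$ in the stated form of the ranges.
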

\begin{proof}
See \cite{FaFr18_01,FaNi18_01}.
\end{proof}
We mention that the fact that $\mathcal{H}_{\mathrm{b}}(H) = \{ 0 \}$ in the context of the present section follows from unique continuation arguments. Moreover, it is proven in \cite{Wa11_01} that $0$ cannot be a spectral singularity of $H$. On the other hand, for any $\lambda > 0$, one can construct smooth and compactly supported potentials $V$ and $W$ such that $\lambda$ is a spectral singularity of $H$ (see \cite{Wa12_01}).

\section{Generic nature of Asymptotic Completeness}\label{section:4}

In this section, our purpose is to establish that, under suitable assumptions, the wave operators $W_-(H,H_0)$ and $W_+(H^*,H_0)$ are \emph{generically} asymptotically complete. We will work in the context of the nuclear optical model of Section \ref{sec:nuclear}, where
\begin{equation*}
H = H_0 + V - \mathrm{i} W = H_V - \mathrm{i} W ,
\end{equation*}
on $\mathcal{H} = L^2( \mathbb{R}^3 )$. Here $H_0 = - \Delta$, and the real-valued potentials $V,W$ are supposed to be bounded and compactly supported, with $W \ge 0$. We set $C = \sqrt{W}$ (so that, in particular, $C^*=C$).

We will say that a property $\mathcal{P}_C$ depending on the choice of the operator $C$ is generically true if the set of $C$'s such that $\mathcal{P}_C$ holds is a countable intersection of dense open sets in a suitable Banach space. Part of our strategy will be adapted from \cite{AgHeSk89_01}.
\begin{theorem}\label{thm:generic}
Let $V \in L^\infty_{\mathrm{c}}( \mathbb{R}^3 ; \mathbb{R} )$ be such that $0$ is not an eigenvalue nor a resonance of $H_V$. Then, for all $\Omega \subset \mathbb{R}^3$ compact, the set
\begin{align*}
& \big \{ C \in L^\infty ( \mathbb{R}^3 ; \mathbb{R} ) , \, \mathrm{supp}( C ) \subset \Omega \text{ and } H_V - \mathrm{i} C^*C \text{ has no spectral singularities in } (0 , \infty ) \big \} \\
&= \big \{ C \in L^\infty ( \mathbb{R}^3 ; \mathbb{R} ) , \, \mathrm{supp}( C ) \subset \Omega \text{ and } W_-( H_V - \mathrm{i} C^*C , H_0 ) \text{ is asymptotically complete} \big \} 
\end{align*}
is a countable intersection of dense open sets in $\{ C \in L^\infty ( \mathbb{R}^3 ; \mathbb{R} ) , \, \mathrm{supp}( C ) \subset \Omega \}$, for the topology induced by the $\|\cdot\|_\infty$-norm.
\end{theorem}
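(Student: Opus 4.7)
I approach this by a Baire-category argument, in the spirit of \cite{AgHeSk89_01}. Write $(0,\infty) = \bigcup_{n \ge 1} I_n$ with $I_n := [1/n,n]$, and put
\begin{equation*}
\mathcal{O}_n := \bigl\{ C \in B : H_V - \mathrm{i} C^2 \text{ has no spectral singularity in } I_n \bigr\},
\end{equation*}
where $B := \{ C \in L^\infty(\mathbb{R}^3;\mathbb{R}) : \mathrm{supp}(C) \subset \Omega \}$, a closed subspace of $L^\infty$. Theorem \ref{thm:cpct_support} identifies the target set with $\bigcap_n \mathcal{O}_n$, so by Baire's theorem it suffices to show each $\mathcal{O}_n$ is open and dense in $B$.

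\textbf{Openness.} By Theorem \ref{thm:equiv}(3), $C \in \mathcal{O}_n$ iff $M_C(\lambda) := \mathrm{Id} - \mathrm{i}\, C R_V(\lambda - \mathrm{i} 0) C$ is invertible for every $\lambda \in I_n$. Hypothesis \ref{hyp:H5} provides norm continuity of $\lambda \mapsto M_C(\lambda)$ on $I_n$, and the compactness of $I_n$ gives $\sup_{\lambda \in I_n} \|M_C(\lambda)^{-1}\| < \infty$ for $C \in \mathcal{O}_n$. Since $\chi_\Omega R_V(\lambda - \mathrm{i} 0) \chi_\Omega$ is uniformly bounded on $I_n$, the map $C \mapsto M_C(\lambda)$ is Lipschitz in $C \in B$ uniformly in $\lambda \in I_n$, and a Neumann series argument supplies a neighborhood of $C$ in $\mathcal{O}_n$.

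\textbf{Density, analytic Fredholm reduction.} Fix $C_0 \in B$ and $\varepsilon > 0$. For a real-valued $\psi \in B$ to be chosen and $\tau \in \mathbb{C}$, I consider the analytic operator-valued polynomial
\begin{equation*}
F(\tau, \lambda) := \mathrm{Id} - \mathrm{i}(C_0 + \tau \psi) R_V(\lambda - \mathrm{i} 0) (C_0 + \tau \psi), \quad \lambda \in I_n,
\end{equation*}
of degree $2$ in $\tau$, with values in $\mathrm{Id} + \mathcal{K}(\mathcal{H})$. For each fixed $\lambda$, analytic Fredholm theory applied to $\tau \mapsto F(\tau, \lambda)$ yields that $Z(\lambda) := \{ \tau \in \mathbb{C} : F(\tau, \lambda) \text{ not invertible}\}$ is either all of $\mathbb{C}$ or discrete. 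The degenerate case is excluded by the $|\tau| \to \infty$ behaviour of $F$: since $(C_0 + \tau \psi) R_V(\lambda - \mathrm{i}0) (C_0 + \tau\psi) = \tau^2 \psi R_V(\lambda - \mathrm{i} 0)\psi + O(\tau)$ and the compact operator $\psi R_V(\lambda - \mathrm{i} 0) \psi$ has a spectrum accumulating only at $0$, $F(\tau, \lambda)$ is invertible for $|\tau|$ large.

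\textbf{Main obstacle.} The heart of the argument is to show that $\mathcal{B} := \{ \tau \in \mathbb{R} : C_0 + \tau\psi \notin \mathcal{O}_n\} = \bigcup_{\lambda \in I_n} Z(\lambda) \cap \mathbb{R}$ has empty interior in $\mathbb{R}$; real $\tau$ with $|\tau| < \varepsilon / \|\psi\|_\infty$ and $\tau \notin \mathcal{B}$ then provide the required $C_0 + \tau\psi \in \mathcal{O}_n$. The set $\mathcal{B}$ is closed in $\mathbb{R}$ (projection of the closed set $\{(\tau, \lambda) : F(\tau, \lambda) \text{ not invertible}\}$ through the compact factor $I_n$). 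Locally, $Z(\lambda)$ is parametrized by continuous branches $\lambda \mapsto \tau_j(\lambda) \in \mathbb{C}$ obtained by tracking the corresponding Fredholm determinant zeros, and $\mathcal{B}$ is the union of the preimages $\tau_j^{-1}(\mathbb{R})$. The obstacle is to prevent any branch from being trapped in $\mathbb{R}$ on a subinterval of $I_n$. The strategy, adapted from \cite{AgHeSk89_01}, is first-order perturbation theory at a real crossing $(\tau_0, \lambda_0)$: the derivative $\partial_\lambda \tau_j(\lambda_0)$ is an explicit ratio of pairings involving the eigenvector of $F(\tau_0, \lambda_0)$ and its adjoint, and for a generic $\psi \in B$ this derivative has nonzero imaginary part, forcing the branch off the real axis near $\lambda_0$. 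The genericity in $\psi$ is itself obtained by a secondary Baire argument over a countable family $(\psi_k) \subset B$ dense for $\|\cdot\|_\infty$, from which a single $\psi$ is selected making $\mathcal{B}$ at most countable and hence nowhere dense. Applying Baire's theorem to the $\mathcal{O}_n$'s then concludes the proof.
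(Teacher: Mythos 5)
Your decomposition into the compact intervals $I_n=[1/n,n]$, the reduction to openness and density of each $\mathcal{O}_n$, and the openness argument itself (Neumann series plus joint continuity of $(C,\lambda)\mapsto C R_V(\lambda-\mathrm{i}0)C$ and compactness of $I_n$) all match the paper's proof. The density step, however, is where the real work lies, and your version has a genuine gap. You correctly identify the obstacle: for each $\lambda$ the bad set $Z(\lambda)\subset\mathbb{C}$ is discrete, but you must rule out that the uncountable union $\bigcup_{\lambda\in I_n} Z(\lambda)\cap\mathbb{R}$ contains an interval, i.e., that a branch $\tau_j(\lambda)$ stays real on a subinterval. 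At this point you only assert that ``for a generic $\psi$'' the derivative $\partial_\lambda\tau_j$ has nonzero imaginary part, and that a ``secondary Baire argument'' produces such a $\psi$; neither the perturbation computation nor the nonemptiness of the set of good directions is carried out, and the branch-tracking itself is problematic where the relevant eigenvalue is degenerate (your first-order formula requires simplicity, which you never arrange). As written, the density proof is a plan, not a proof.

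The paper closes exactly this gap with two devices you do not use. First, it invokes the finiteness of the number of spectral singularities for compactly supported potentials and the generic simplicity of resonances (adapting Dyatlov--Zworski and Klopp--Zworski) to replace $C_0$ by a nearby $\tilde{C}_0$ all of whose spectral singularities $\lambda_1,\dots,\lambda_n$ in $J$ are simple; this legitimizes first-order perturbation theory. Second, instead of a general direction $\psi$ it perturbs multiplicatively, $C=g\tilde{C}_0$, for which the relevant eigenvalue of $\mathrm{i}g^2 A(\lambda)$ with $A(\lambda)=\tilde{C}_0 R_V(\lambda-\mathrm{i}0)\tilde{C}_0^*$ scales exactly as $\mu_{g,\lambda}=g^2\mu_{1,\lambda}$. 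The whole problem then reduces to showing that $\mathrm{Im}(\mu_{1,\lambda})$, computed explicitly as a quadratic form involving $A(\lambda)+A(\lambda)^*$, has an isolated zero at $\lambda_j$ by real-analyticity in $\lambda$, together with the trivial observation that $\mu_{g,\lambda_j}=g^2\neq 1$ for $g\neq 1$. If you want to salvage your line-of-directions approach, you would need to supply the analogue of these two ingredients: a reduction to simple crossings, and an explicit computation showing that the set of directions $\psi$ for which some branch remains trapped in $\mathbb{R}$ is meager (or simply exhibit one good $\psi$, as the paper does with $\psi=\tilde{C}_0$ itself).
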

Note that the equality in the statement of Theorem \ref{thm:generic} is a consequence of Theorem \ref{thm:cpct_support}. For all compact interval $J \subset ( 0 , \infty )$, we set
\begin{align*}
\mathcal{E}_J := \big \{ C \in L^\infty ( \mathbb{R}^3 ; \mathbb{R} ) , \, \mathrm{supp}( C ) \subset \Omega \text{ and } H_V - \mathrm{i} C^*C \text{ has no spectral singularities in } J \big \}.
\end{align*}
To establish Theorem \ref{thm:generic}, it then suffices to show that, for all compact interval $J \subset ( 0 , \infty )$, $\mathcal{E}_J$ is open and dense in $\{ C \in L^\infty ( \mathbb{R}^3 ; \mathbb{R} ) , \, \mathrm{supp}( C ) \subset \Omega \}$. This is the purpose of the following two lemmas.

In order to underline the dependence on $C$ of the pseudo-Hamiltonian $H$, we will use in this section the notation
\begin{equation*}
H_{V,C} := H_V - \mathrm{i} C^* C = H_V - \mathrm{i} C^2 ,
\end{equation*}
for all $C \in L^\infty( \mathbb{R}^3 ; \mathbb{R} )$. We also set
\begin{equation*}
L^\infty_\Omega := \big \{ C \in L^\infty ( \mathbb{R}^3 ; \mathbb{R} ) , \, \mathrm{supp}( C ) \subset \Omega \big \} ,
\end{equation*}
for all compact set $\Omega \subset \mathbb{R}^3$.
\begin{lemma}\label{lm:open}
Let $V \in L^\infty_{\mathrm{c}}( \mathbb{R}^3 ; \mathbb{R} )$ be such that $0$ is not an eigenvalue nor a resonance of $H_V$ and let $\Omega \subset \mathbb{R}^3$ be a compact set. Assume that there are a compact interval $J \subset ( 0 , \infty )$ and $C_0 \in L^\infty( \mathbb{R}^3 ; \mathbb{R} )$, $\mathrm{supp}( C_0 ) \subset \Omega$, such that $H_{V,C_0}$ has no spectral singularities in $J$. Then there exists $r > 0$ such that, for all $C \in L^\infty_\Omega$ satisfying $\| C - C_0 \|_\infty \le r$,
\begin{equation*}
H_{V,C} \text{ has no spectral singularities in } J.
\end{equation*}
\end{lemma}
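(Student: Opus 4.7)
The strategy is to combine Theorem \ref{thm:equiv} with a Neumann series argument, made uniform over $\lambda \in J$ via the limiting absorption principle for the self-adjoint operator $H_V$. For $C \in L^\infty_\Omega$ set
$$A_C(\lambda) := \mathrm{Id} - \mathrm{i} C R_V(\lambda - \mathrm{i} 0) C \in \mathcal{L}(\mathcal{H}) , \quad \lambda \in J ;$$
note that $C^* = C$ since $C$ is real. For every such $C$, the pseudo-Hamiltonian $H_{V,C}$ satisfies Hypotheses \ref{hyp:H0}--\ref{hyp:H5} (this is the content of Section \ref{sec:nuclear}), so Theorem \ref{thm:equiv} translates the lemma into the statement that the set
$$\mathcal{E}_J = \bigl\{ C \in L^\infty_\Omega , \, A_C(\lambda) \text{ is invertible in } \mathcal{L}(\mathcal{H}) \text{ for every } \lambda \in J \bigr\}$$
is open around $C_0$ in the $\|\cdot\|_\infty$-topology.

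First, I would observe that $J \ni \lambda \mapsto A_{C_0}(\lambda) \in \mathcal{L}(\mathcal{H})$ is continuous. This is a direct consequence of Hypothesis \ref{hyp:H5} applied with $C = C_0$, which forces the map $\lambda \mapsto C_0 R_V(\lambda - \mathrm{i} 0) C_0$ to be (Hölder) continuous on compact subsets of $\mathring{\Lambda} = (0,\infty)$. Since by assumption $A_{C_0}(\lambda)$ is invertible at every $\lambda \in J$ and invertibility is stable under small perturbations in $\mathcal{L}(\mathcal{H})$, the map $\lambda \mapsto A_{C_0}(\lambda)^{-1}$ is continuous on the compact set $J$, and in particular
$$M := \sup_{\lambda \in J} \bigl\| A_{C_0}(\lambda)^{-1} \bigr\|_{\mathcal{L}(\mathcal{H})} < \infty .$$

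Second, I would produce a norm estimate on $A_C(\lambda) - A_{C_0}(\lambda)$ that is uniform in $\lambda \in J$ and linear in $\|C - C_0\|_\infty$. Fix a smooth cutoff $\chi \in \mathrm{C}^\infty_{\mathrm{c}}( \mathbb{R}^3 ; \mathbb{R})$ equal to $1$ on $\Omega$, so that the multiplication operators satisfy $C = \chi C$ and $C_0 = \chi C_0$. The limiting absorption principle for $H_V$ between weighted $L^2$-spaces (valid under the assumptions on $V$) yields
$$K := \sup_{\lambda \in J} \bigl\| \chi R_V(\lambda - \mathrm{i} 0) \chi \bigr\|_{\mathcal{L}(\mathcal{H})} < \infty .$$
Expanding $A_C(\lambda) - A_{C_0}(\lambda) = -\mathrm{i}(C-C_0) R_V(\lambda - \mathrm{i}0) C - \mathrm{i} C_0 R_V(\lambda - \mathrm{i}0)(C-C_0)$ and inserting $\chi$'s on either side of the resolvent, one obtains
$$\bigl\| A_C(\lambda) - A_{C_0}(\lambda) \bigr\|_{\mathcal{L}(\mathcal{H})} \le K \bigl( \|C_0\|_\infty + \|C\|_\infty \bigr) \|C - C_0\|_\infty$$
for all $\lambda \in J$. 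Choosing $r > 0$ so small that $M K ( 2\|C_0\|_\infty + r) r < 1$ and using the factorization
$$A_C(\lambda) = A_{C_0}(\lambda) \bigl( \mathrm{Id} + A_{C_0}(\lambda)^{-1} (A_C(\lambda) - A_{C_0}(\lambda)) \bigr) ,$$
the Neumann series shows that $A_C(\lambda)$ is invertible in $\mathcal{L}(\mathcal{H})$ for every $\lambda \in J$ as soon as $C \in L^\infty_\Omega$ with $\|C - C_0\|_\infty \le r$, which by the equivalence above finishes the proof.

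The only nontrivial technical input is the uniform bound $K$, i.e.\ the limiting absorption principle for $H_V$ on the compact interval $J \subset (0,\infty)$ with $\chi$-weights. This is classical under the hypotheses on $V$ (real-valued, bounded, compactly supported, with $0$ neither an eigenvalue nor a resonance of $H_V$), and is in fact the same ingredient that underlies the verification of Hypotheses \ref{hyp:H4}--\ref{hyp:H5} in the nuclear optical model; everything else is a soft continuity-and-perturbation argument.
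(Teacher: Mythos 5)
Your proof is correct and follows essentially the same route as the paper: both reduce the statement via Theorem \ref{thm:equiv} to the stability of invertibility of $\mathrm{Id}-\mathrm{i}CR_V(\lambda-\mathrm{i}0)C^*$, with the uniform bound on $\mathds{1}_\Omega R_V(\lambda-\mathrm{i}0)\mathds{1}_\Omega$ over the compact interval $J$ (the limiting absorption principle for $H_V$) as the only hard input. The sole difference is organizational --- you take a global supremum of $\|A_{C_0}(\lambda)^{-1}\|$ over $J$ and run an explicit Neumann series, whereas the paper argues pointwise in $\lambda$ and then extracts a finite subcover of $J$; these are interchangeable.
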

\begin{proof}
Let $\Omega$, $J$ and $C_0$ be as in the statement of the lemma. Let $\lambda_0 \in J$. By assumption, $\lambda_0$ is a regular spectral point of $H_{V,C_0}$ and therefore, by Theorem \ref{thm:equiv}, we know that $\mathrm{Id} - \mathrm{i} C_0 R_V( \lambda_0 - \mathrm{i} 0 ) C_0^*$ is invertible in $\mathcal{L}( \mathcal{H} )$. Since 
\begin{equation*}
L^\infty_\Omega \times ( 0 , \infty ) \ni ( C , \lambda ) \mapsto C R_V( \lambda - \mathrm{i} 0 ) C^* = C \mathds{1}_\Omega R_V( \lambda - \mathrm{i} 0 ) \mathds{1}_\Omega C^* \in \mathcal{L}( \mathcal{H} )
\end{equation*}
is continuous under our assumptions, we deduce that there exist $r_0>0$ and a neighborhood $\mathcal{U}_{\lambda_0} \subset \mathbb{R}$ of $\lambda_0$ such that, for all $C \in L^\infty_\Omega$ such that $ \| C - C_0 \| \le r_0$ and all $\lambda \in \mathcal{U}_{\lambda_0}$, $\mathrm{Id} - \mathrm{i} C R_V( \lambda - \mathrm{i} 0 ) C^*$ is invertible in $\mathcal{L}( \mathcal{H} )$. Equivalently, by Theorem \ref{thm:equiv}, we have that $\lambda$ is a regular spectral point of $H_{V,C}$ for all $C \in L^\infty_\Omega$ such that $ \| C - C_0 \| \le r_0$ and all $\lambda \in \mathcal{U}_{\lambda_0}$.

Now, we have the inclusion
\begin{equation*}
J \subset \bigcup_{\lambda \in J} \mathcal{U}_\lambda ,
\end{equation*}
and since $J$ is compact, we deduce that there are $\lambda_1 , \dots \lambda_n \in J$ such that $J \subset \mathcal{U}_{\lambda_1} \cup \cdots \cup \mathcal{U}_{\lambda_n}$. Setting $r = \min( r_1 , \dots , r_n )$, we conclude that for all $C \in L^\infty_\Omega$ such that $ \| C - C_0 \| \le r_0$, $H_{V,C}$ has no spectral singularities in $J$.
\end{proof}
Lemma \ref{lm:open} shows that, given a compact interval $J \subset (0,\infty)$, the set $\mathcal{E}_J$ is open in $L^\infty_\Omega$. Our next purpose is to prove that $\mathcal{E}_J$ is dense in $L^\infty_\Omega$. We recall that for any $V,C \in L^\infty_{\mathrm{c}}( \mathbb{R}^3 )$, $H_{V,C}$ has at most finitely many spectral singularities in $(0,\infty)$ counting orders. This follows from the theory of resonances (see \cite{DyZw19_01} and \cite[Section 6]{FaFr18_01} for more details).
\begin{lemma}\label{lm:dense}
Let $V \in L^\infty_{\mathrm{c}}( \mathbb{R}^3 ; \mathbb{R} )$ be such that $0$ is not an eigenvalue nor a resonance of $H_V$ and let $\Omega \subset \mathbb{R}^3$ be a compact set. Let $J \subset ( 0 , \infty )$ be a compact interval. For all $C_0 \in L^\infty_\Omega$ and all $\varepsilon >0$, there exists $C \in L^\infty_\Omega$ such that $\| C - C_0\|_\infty\le\varepsilon$ and $H_{V,C}$ has no spectral singularities in $J$.
\end{lemma}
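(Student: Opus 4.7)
The plan is to build $C$ as a small one-parameter perturbation $C_\tau := C_0 + \tau\phi$ of $C_0$ along a suitable direction $\phi \in L^\infty_\Omega$, arranged so that analytic perturbation theory pushes the spectral singularities of $H_{V,C_0}$ in $J$ off the real axis. By the remark preceding the lemma, $H_{V,C_0}$ has at most finitely many spectral singularities $\lambda_1,\dots,\lambda_n$ in $J$. Decompose $J = K \sqcup \bigsqcup_{j=1}^n U_j$, where $U_j$ is a small open interval containing $\lambda_j$ and $K$ is a finite disjoint union of compact subintervals containing no $\lambda_j$. Applying Lemma \ref{lm:open} to each component of $K$ yields $r_0>0$ such that $H_{V,C}$ has no spectral singularities in $K$ whenever $\|C-C_0\|_\infty \le r_0$. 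Hence it suffices to arrange, for some $\tau$ with $|\tau|\,\|\phi\|_\infty \le \min(\varepsilon,r_0)$, that $H_{V,C_\tau}$ have no spectral singularity in any $U_j$.

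By Theorem \ref{thm:equiv}, $\lambda \in U_j$ is a spectral singularity of $H_{V,C_\tau}$ if and only if the regularized Fredholm determinant
\begin{equation*}
D(\lambda,\tau) := \mathrm{det}_2\bigl(\mathrm{Id} - \mathrm{i}\,C_\tau R_V(\lambda - \mathrm{i}0) C_\tau\bigr)
\end{equation*}
vanishes (the expression is well-defined since $C_\tau R_V(\lambda-\mathrm{i}0) C_\tau$ is Hilbert--Schmidt for compactly supported $V,C_0,\phi$). The cut-off resolvent extends meromorphically in $\lambda$ to a Riemann surface above $(0,\infty)$ by the standard resonance theory for Schr\"odinger operators in $\mathbb{R}^3$ (see, e.g., \cite{DyZw19_01}), so $D$ is jointly analytic in $(\lambda,\tau)$ on a complex neighborhood of $\overline{U}_j \times \{0\}$. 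Weierstrass preparation at $(\lambda_j,0)$ then describes the zero set of $D$ near $\lambda_j$ as the zero locus of a polynomial in $\lambda$ with coefficients analytic in $\tau$, giving finitely many branches $\tau \mapsto \lambda_{j,\ell}(\tau)$, possibly parameterized by Puiseux series if $\lambda_j$ has order $>1$ in the sense of Definition \ref{def:order-spec-sing}.

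A first-order perturbation computation applied to $(\mathrm{Id} - \mathrm{i} C_0 R_V(\lambda_j - \mathrm{i}0) C_0)u_{j,\ell} = 0$ and its adjoint, combined with the Fredholm alternative, yields $\dot\lambda_{j,\ell}(0) = L_{j,\ell}(\phi)$ for a continuous complex-linear functional $L_{j,\ell}$ on $L^\infty_\Omega$, whose coefficient is a bilinear pairing of a resonance eigenvector $u_{j,\ell}$ and a dual vector at $\lambda_j$. The strategy is to choose $\phi$ so that $\mathrm{Im}\,L_{j,\ell}(\phi) \neq 0$ for every $(j,\ell)$, which forces $\lambda_{j,\ell}(\tau)$ to leave the real axis, and hence $U_j$, for all small $\tau \neq 0$. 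The hard part will be verifying the non-degeneracy condition that each real-linear functional $\mathrm{Im}\,L_{j,\ell}$ is nontrivial on $L^\infty_\Omega$---equivalently, that the relevant product of resonance and dual eigenfunctions is not real almost everywhere on $\Omega$---which should follow from unique continuation for the Schr\"odinger equation satisfied by $u_{j,\ell}$ outside $\mathrm{supp}(C_0)$. Once this is in hand, a finite union of proper closed hyperplanes cannot cover $L^\infty_\Omega$, so a suitable $\phi$ exists, and taking $\tau$ small enough completes the construction of $C = C_\tau$.
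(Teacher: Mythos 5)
Your overall architecture (finitely many singularities in $J$, localize, handle the complement via Lemma \ref{lm:open}, perturb along a one-parameter family and track the zeros of a Birman--Schwinger/Fredholm-determinant condition) matches the paper's, but the proof has a genuine gap exactly at its crux: you never establish that a direction $\phi$ with $\mathrm{Im}\,L_{j,\ell}(\phi)\neq 0$ for all $(j,\ell)$ actually exists. You acknowledge this is ``the hard part'' and defer it to a unique continuation argument that is not carried out; but showing that the real-linear functional $\mathrm{Im}\,L_{j,\ell}$ is nontrivial on $L^\infty_\Omega$ is precisely the content of the lemma, and it is not a routine consequence of unique continuation (one needs the imaginary part of a specific pairing of resonance and co-resonance states to be non-vanishing as a function on $\Omega$, a Fermi-golden-rule-type non-degeneracy). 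The paper sidesteps this entirely by a more economical choice: after first reducing, via generic simplicity of resonances (\cite{KlZw95_01}, \cite[Theorem 3.14]{DyZw19_01}), to a nearby $\tilde C_0$ whose singularities are all of order $1$, it perturbs along the \emph{scaling} direction $g\tilde C_0$. For that family the Birman--Schwinger eigenvalue is explicitly $\mu_{g,\lambda}=g^2\mu_{1,\lambda}$, so no transversality condition on a direction $\phi$ needs to be verified: the singularity condition $\mu_{g,\lambda}=1$ fails for $g\neq 1$ because $\mu_{g,\lambda_j}=g^2$ is real but $\neq 1$, while for real $\lambda\neq\lambda_j$ one shows $\mathrm{Im}\,\mu_{1,\lambda}\neq 0$ by a real-analyticity argument in $\lambda$ alone. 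Note in particular that along this direction $\mathrm{Im}\,\dot\mu=0$ at $\lambda_j$, which illustrates that ``the eigenvalue moves off the real axis to first order'' is not the only, nor the most accessible, mechanism.

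A second, internal inconsistency: you allow singularities of order $\nu_0>1$ and invoke Puiseux branches, but then write $\dot\lambda_{j,\ell}(0)=L_{j,\ell}(\phi)$, a first-derivative formula that presupposes differentiable branches. For a degenerate zero the branches behave like $\tau^{1/p}$ and this derivative need not exist; you would either have to analyze the leading Puiseux coefficient or, as the paper does, first reduce to simple singularities. The remaining ingredients of your sketch (Hilbert--Schmidt property of $C_\tau R_V(\lambda-\mathrm{i}0)C_\tau$ for compactly supported data, joint analyticity of $D$, Weierstrass preparation, and the covering argument on $K$) are sound, so repairing the proof amounts to supplying the non-degeneracy step --- most simply by adopting the paper's simplicity reduction and scaling direction.
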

\begin{proof}
Let $V$, $\Omega$ and $J$ be as in the statement of the lemma. Assume by contradiction that there exist $C_0 \in L^\infty_\Omega$ and $\varepsilon_0>0$ such that, for all $C \in L^\infty_\Omega$ such that $\| C - C_0\|_\infty\le\varepsilon_0$, $H_{V,C}$ has spectral singularities in $J$.

In a first step, we use that resonances are generically simple. Namely, adapting the proof of \cite[Theorem 3.14]{DyZw19_01} in a straightforward way (see also \cite{KlZw95_01}), one can show that there exists $\tilde{C}_0 \in L^\infty_\Omega$ such that $\| \tilde{C}_0 - C_0\|_\infty\le\varepsilon_0/2$ and all spectral singularities of $H_{V,\tilde{C}_0}$ are at most of order $1$.

Now, let $\lambda_1 , \dots , \lambda_n$ be the spectral singularities (of order $1$) of $H_{V,\tilde{C}_0}$ in $J$. If $\tilde{C}_0 = 0$, then this set is empty and we obtain a contradiction. Hence we assume in the following that $\tilde{C}_0 \neq 0$. We introduce a real parameter $g$ in the pseudo-Hamiltonian, considering the family of operators 
\begin{equation*}
H_{V,g\tilde{C}_0} = H_V - \mathrm{i} g^2 \tilde{C}_0^*\tilde{C}_0 ,
\end{equation*}
for $g$ close to $1$. We claim that, for all $j \in \{ 1 , \dots , n \}$, there exist a neighborhood $\mathcal{V}_{\lambda_j} \subset \mathbb{R}$ of $\lambda_j$ and $\varepsilon_j > 0$ such that, for all $g \in \mathbb{R}$ satisfying $0 < |g-1|\le\varepsilon_j$, $H_{V,g\tilde{C}_0}$ has no spectral singularities in $\mathcal{V}_j$. Indeed, let
\begin{equation*}
A( \lambda ) := \tilde{C}_0 R_V( \lambda - \mathrm{i} 0 ) \tilde{C}_0^*.
\end{equation*}
From Theorem \ref{thm:equiv} and the fact that $A( \lambda )$ is compact, we deduce that $\lambda$ is a spectral singularity of $H_{V,g\tilde{C}_0}$ if and only if $1$ is an eigenvalue of $\mathrm{i} g^2 A( \lambda )$. In particular, if $g=1$ and $\lambda=\lambda_j$, we see that $1$ is a simple, discrete eigenvalue of $\mathrm{i} A( \lambda_j )$. Therefore, if $\Gamma_j$ is a curve oriented counterclockwise, whose interior contains $1$ and no other eigenvalue of $\mathrm{i} A( \lambda_j )$, it follows from standard perturbation theory that, for any $\lambda$ in a neighborhood of $\lambda_j$ and $g$ close to $1$, $\mathrm{i} g^2 A( \lambda )$ has a unique eigenvalue, say $\mu_{g,\lambda}$, in the interior of $\Gamma_j$. We shall show that $\mu_{g,\lambda} \neq 1$ except if $g = 1$ and $\lambda = \lambda_j$.

Clearly, we have that $\mu_{g,\lambda} = g^2 \mu_{1,\lambda}$ and $\mu_{1,\lambda_j} = 1$. Moreover, letting 
\begin{equation*}
\pi_{g,\lambda} := \frac{ 1 }{ 2 \mathrm{i} \pi } \int_{\Gamma_j} \big( z - \mathrm{i} g^2 A( \lambda ) \big )^{-1} \mathrm{d} z 
\end{equation*}
be the Riesz projection corresponding to $\mu_{g,\lambda}$, we have that $\pi_{g,\lambda} = \pi_{1,\lambda}$. 

Let $u_j \in \mathrm{Ran}( \pi_{1,\lambda_j} )$ be a normalized eigenstate of $\mathrm{i} A ( \lambda_j )$ corresponding to the eigenvalue $1$, $\mathrm{i} A ( \lambda_j ) u_j = \lambda_j u_j$, $\| u_j \| = 1$. Then, for $(g,\lambda)$ near $(1,\lambda_j)$, $\pi_{g,\lambda} u_j \neq 0$ and $\pi_{g,\lambda} u_j$ is an eigenstate of $\mathrm{i} g^2 A ( \lambda )$ corresponding to $\mu_{g,\lambda}$. We compute
\begin{align*}
\mathrm{Im} ( \mu_{g,\lambda} ) &= \| \pi_{g,\lambda} u_j \|^{-2} \mathrm{Im} \langle \pi_{g,\lambda} u_j , \mathrm{i} g^2 A( \lambda ) \pi_{g,\lambda} u_j \rangle \\
&= \frac12 \| \pi_{g,\lambda} u_j \|^{-2} g^2 \big \langle u_j , \pi_{g,\lambda}^* \big ( \tilde{C}_0 R_V( \lambda -\mathrm{i} 0 ) \tilde{C}_0^* + \tilde{C}_0 R_V( \lambda + \mathrm{i} 0 ) \tilde{C}_0^* \big ) \pi_{g,\lambda} u_j \big \rangle \\
&= \frac12 \| \pi_{g,\lambda} u_j \|^{-2} g^2 \big \langle u_j , \pi_{1,\lambda}^* \big ( \tilde{C}_0 R_V( \lambda -\mathrm{i} 0 ) \tilde{C}_0^* + \tilde{C}_0 R_V( \lambda + \mathrm{i} 0 ) \tilde{C}_0^* \big ) \pi_{1,\lambda} u_j \big \rangle ,
\end{align*}
where we used that $\pi_{g,\lambda} = \pi_{1,\lambda}$ in the last equality. Hence we see that $\mathrm{Im} ( \mu_{g,\lambda} ) = 0$ if and only if the scalar product in the previous equality vanishes. The maps $\lambda \mapsto A( \lambda )$ and $\lambda \mapsto A(\lambda)^*$ are real analytic in a neighborhood of $\lambda_j$. This implies that
\begin{equation*}
\lambda \mapsto \big \langle u_j , \pi_{1,\lambda}^* \big ( \tilde{C}_0 R_V( \lambda -\mathrm{i} 0 ) \tilde{C}_0^* + \tilde{C}_0 R_V( \lambda + \mathrm{i} 0 ) \tilde{C}_0^* \big ) \pi_{1,\lambda} u_j \big \rangle
\end{equation*}
has a unique zero in a neighborhood of $\lambda_j$. But this zero is $\lambda_j$ since $\mathrm{Im}( \mu_{g,\lambda_j} ) = \mathrm{Im} ( g^2 ) = 0$.

Hence we have proven that for all $\lambda$ in a neighborhood $\mathcal{V}_j$ of $\lambda_j$, $\lambda \neq \lambda_j$, and all $g$ in a neighborhood of $1$, $\mathrm{Im}( \mu_{g,\lambda} ) \neq 0$. In particular, $\mu_{g,\lambda} \neq 1$. It remains to show that $\mu_{g,\lambda_j} \neq 1$ except if $g = 1$. But this is obvious, since $\mu_{g,\lambda_j} = g^2$.

Summarizing, for all $j \in \{ 1 , \dots , n \}$, there exist a neighborhood $\mathcal{V}_j \subset \mathbb{R}$ of $\lambda_j$ and $\varepsilon_j > 0$ such that, for all $0 < | g - 1 | \le \varepsilon_j$, $H_{V,g\tilde{C}_0}$ has no spectral singularities in $\mathcal{V}_j$. Moreover, by assumption, $H_{V,\tilde{C}_0}$ has no spectral singularities in $J \setminus \cup_{j=1}^n \mathcal{V}_j$. By Lemma \ref{lm:open}, this implies that there exists $r > 0$ such that, for all $0 < | g - 1 | \le r$, $H_{V,g\tilde{C}_0}$ has no spectral singularities in $J \setminus \cup_{j=1}^n \mathcal{V}_j$. Picking $g$ such that $0 < |g - 1 |\le \min ( \varepsilon_1 , \dots , \varepsilon_n , r , \| \tilde{C}_0 \|^{-1}\varepsilon_0/2)$, we obtain that $H_{V,g\tilde{C}_0}$ has no spectral singularities in $J$. Since, in addition, we have that
\begin{equation*}
\big \| g \tilde{C}_0 - C_0 \big \| \le | g - 1 |\big\| \tilde{C}_0 \big \| + \big \| \tilde{C}_0 - C_0 \big \| \le \varepsilon_0 ,
\end{equation*}
this gives a contradiction. This concludes the proof of the lemma.
\end{proof}
Now, we can combine the previous two lemmas to complete the proof of Theorem \ref{thm:generic}.
\begin{proof}[Proof of Theorem \ref{thm:generic}]
Let $J \subset ( 0 , \infty )$ be a compact interval. By Lemma \ref{lm:open}, the set $\mathcal{E}_J$ is open in $L^\infty_\Omega$, while, by Lemma \ref{lm:dense}, $\mathcal{E}_J$ is dense in $L^\infty_\Omega$. The statement of the theorem then follows from the fact that
\begin{align*}
& \big \{ C \in L^\infty ( \mathbb{R}^3 ; \mathbb{R} ) , \, \mathrm{supp}( C ) \subset \Omega \text{ and } H_V - \mathrm{i} C^*C \text{ has no spectral singularities in } (0 , \infty ) \big \} \\
&= \bigcap_{ n \in \mathbb{N}^* } \mathcal{E}_{[ n^{-1} , n ]}.
\end{align*}
\end{proof}

\vspace{0.2cm}

\noindent \textbf{Acknowledgements.} I would like to thank H. Cornean, S. Fournais and J.S. M{\o}ller for the invitation to give a talk at the conference \emph{QMath14: Mathematical Results in Quantum Physics}, Aarhus, August 2019. I am grateful to E. Skibsted for a useful conversation. I warmly thank J. Fröhlich and F. Nicoleau for fruitful collaborations.

\end{document}